\crefname{theorem}{Theorem}{Theorems}
\crefname{lemma}{Lemma}{Lemmas}
\crefname{claim}{Claim}{Claims}
\crefname{corollary}{Corollary}{Corollaries}
\crefname{remark}{Remark}{Remarks}
\crefname{observation}{Observation}{Observations}
\crefname{hypothesis}{Hypothesis}{Hypotheses}
\crefname{definition}{Definition}{Definitions}
\crefname{problem}{Problem}{Problems}
\crefname{appendix}{Appendix}{Appendices}
\crefname{section}{Section}{Sections}
\crefname{equation}{Eq.}{Eqs.}
\crefname{figure}{Figure}{Figures}
\crefname{table}{Table}{Tables}
\renewcommand{\geq}{\geqslant}
\renewcommand{\leq}{\leqslant}
\renewcommand{\epsilon}{\varepsilon}
\newcommand{\prb}[1]{\textup{\textsc{#1}}\xspace}
\renewcommand{\vec}[1]{\mathbf{\bm{#1}}}
\newcommand{\mat}[1]{\mathbf{\bm{#1}}}
\newcommand{\abs}[1]{\vert{#1}\vert}
\newcommand{\rme}{\mathrm{e}}
\newcommand{\OPT}{\mathsf{OPT}}
\newcommand{\hati}{\hat{\imath}}
\newcommand{\hatj}{\hat{\jmath}}
\newcommand{\isep}{\mathrel{..}\nobreak}
\newcommand{\haba}{\Delta}
\DeclareMathOperator{\bigO}{\mathcal{O}}
\DeclareMathOperator{\argmax}{argmax}
\DeclareMathOperator{\vol}{vol}
\DeclareMathOperator{\proj}{proj}
\DeclareMathOperator{\cons}{cons}
\DeclareMathOperator{\cov}{cov}
\DeclareMathOperator{\dup}{dup}
\DeclareMathOperator{\dis}{d}
\DeclareMathOperator{\sgn}{sgn}
\DeclareMathOperator{\rank}{rank}
\DeclareMathOperator{\nz}{nz}
\DeclareMathOperator{\opt}{opt}
\DeclareMathOperator{\maxdet}{maxdet}
\let\det\relax\DeclareMathOperator{\det}{det}
\newcommand{\cP}{\textup{\textsf{P}}\xspace}
\newcommand{\NP}{\textup{\textsf{NP}}\xspace}
\newcommand{\shP}{$\#$\textsf{P}\xspace}
\newcommand{\FPT}{\textup{\textsf{FPT}}\xspace}
\newcommand{\W}[1]{\textup{\textsf{W}[#1]}\xspace}
\newcommand{\XP}{\textup{\textsf{XP}}\xspace}
\newcommand{\calC}{\mathcal{C}}
\newcommand{\calF}{\mathcal{F}}
\newcommand{\calI}{\mathcal{I}}
\newcommand{\calS}{\mathcal{S}}
\newcommand{\bbN}{\mathbb{N}}
\newcommand{\bbQ}{\mathbb{Q}}
\newcommand{\bbR}{\mathbb{R}}
\newtheorem{theorem}{Theorem}[section]
\newtheorem*{theorem*}{Theorem}
\newtheorem{lemma}[theorem]{Lemma}
\newtheorem{corollary}[theorem]{Corollary}
\newtheorem{claim}[theorem]{Claim}
\newtheorem*{claim*}{Claim}
\newtheorem{observation}[theorem]{Observation}
\newtheorem{hypothesis}[theorem]{Hypothesis}
\theoremstyle{definition}
\newtheorem{problem}[theorem]{Problem}
\newenvironment{claimproof}{\begin{proof}}{\end{proof}}
\numberwithin{equation}{section}
\title{On the Parameterized Intractability of \\ Determinant Maximization\footnote{
A preliminary version of this paper appeared in \emph{Proc.~33rd Int.~Symp.~on Algorithms and Computation (ISAAC)}, 2022 \cite{ohsaka2022parameterized}.
}}
\author{Naoto Ohsaka\thanks{
CyberAgent, Inc., Tokyo, Japan.
\href{mailto:ohsaka\_naoto@cyberagent.co.jp}{\texttt{ohsaka\_naoto@cyberagent.co.jp}};
\href{mailto:naoto.ohsaka@gmail.com}{\texttt{naoto.ohsaka@gmail.com}}
}}
\date{\today}
\begin{document}
\maketitle

\thispagestyle{empty}
\begin{abstract}\noindentIn the \prb{Determinant Maximization} problem,
given an $n \times n$ positive semi-definite matrix $\mat{A}$ in $\bbQ^{n \times n}$ and an integer $k$,
we are required to find a $k \times k$ principal submatrix of $\mat{A}$ having the maximum determinant.
This problem is known to be \NP-hard and
further proven to be \W{1}-hard with respect to $k$ by {Koutis} 
(Inf.~Process.~Lett., 2006)
\cite{koutis2006parameterized}; i.e.,
a $f(k)n^{\bigO(1)}$-time algorithm is unlikely to exist
for any computable function~$f$.
However, there is still room to explore its parameterized complexity in the \emph{restricted case},
in the hope of overcoming the general-case parameterized intractability.
In this study, we rule out the fixed-parameter tractability of
\prb{Determinant Maximization} even if 
an input matrix is extremely sparse or low rank, or
an approximate solution is acceptable.
We first prove that \prb{Determinant Maximization} is
\NP-hard and \W{1}-hard even if an input matrix is an \emph{arrowhead matrix};
i.e., the underlying graph formed by nonzero entries is a star,
implying that the structural sparsity is not helpful.
By contrast,
\prb{Determinant Maximization} is known to be solvable in polynomial time on \emph{tridiagonal matrices} \cite{al-thani2021tridiagonal}.
Thereafter, we demonstrate
the \W{1}-hardness with respect to the \emph{rank} $r$ of an input matrix.
Our result is stronger than Koutis' result in the sense that
any $k \times k$ principal submatrix is singular whenever $k > r$.
We finally give evidence that
it is \W{1}-hard to approximate \prb{Determinant Maximization}
parameterized by $k$ within a factor of $2^{-c\sqrt{k}}$ for some universal constant $c > 0$.
Our hardness result is conditional on the \emph{Parameterized Inapproximability Hypothesis}
posed by
{Lokshtanov, Ramanujan, Saurab, and Zehavi} (SODA~2020) \cite{lokshtanov2020parameterized},
which asserts that
a gap version of \prb{Binary Constraint Satisfaction Problem} is \W{1}-hard.
To complement this result,
we develop an $\epsilon$-additive approximation algorithm that 
runs in $\epsilon^{-r^2} \cdot r^{\bigO(r^3)} \cdot n^{\bigO(1)}$ time
for the rank $r$ of an input matrix,
provided that the diagonal entries are bounded.
\end{abstract}
\clearpage
\tableofcontents
\clearpage
\begin{figure}
    \null\hfill
    \subfloat[
        Four vectors $\vec{v}_1=(5,0,0)$, $\vec{v}_2=(2,3,0)$,
        $\vec{v}_3=(1,1,3)$, and $\vec{v}_4=(3,1,1)$.
    ]{\includegraphics[scale=0.7]{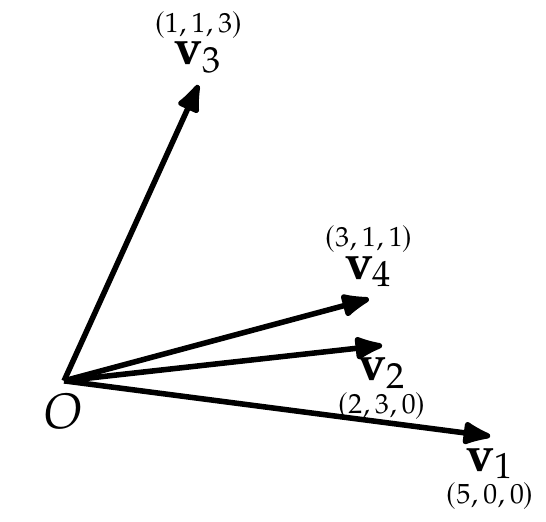}}
    \hfill
    \subfloat[Gram matrix $\mat{A}$ of $\vec{v}_1,\vec{v}_2,\vec{v}_3,\vec{v}_4$.]{
    $\mat{A} = \kbordermatrix{
          &  1 &  2 &  3 &  4 \\
        1 & 25 & 10 &  5 & 15 \\
        2 & 10 & 13 &  5 &  9 \\
        3 &  5 &  5 & 11 &  7 \\
        4 & 15 &  9 &  7 & 11 
    }$
    }
    \hfill\null
    \\
    \null\hfill
    \subfloat[
        Parallelepiped spanned by $\vec{v}_1, \vec{v}_2, \vec{v}_3$, whose determinant is $2{,}025=45^2$.
    ]{\includegraphics[scale=0.55]{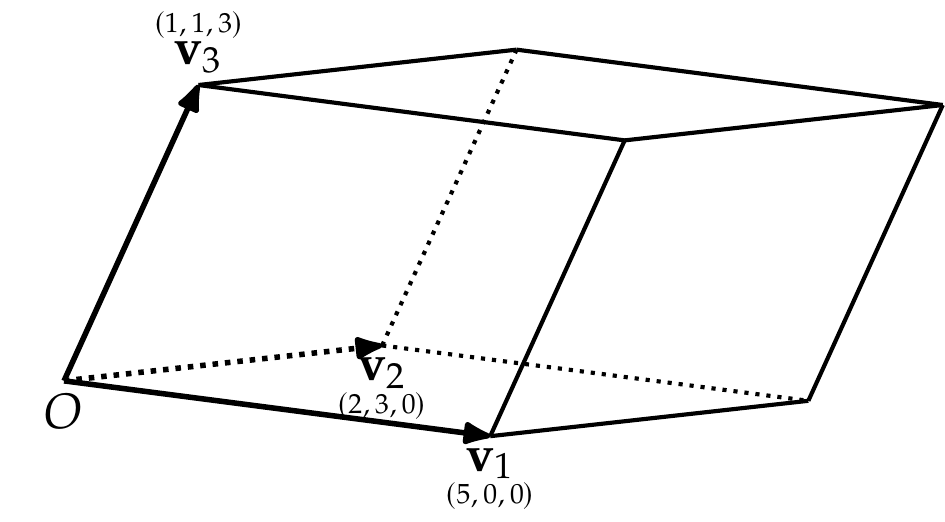}}
    \hfill
    \subfloat[
        Parallelepiped spanned by $\vec{v}_1, \vec{v}_2, \vec{v}_4$,
        whose determinant is $225=15^2$, which is smaller than $2{,}025$, even though $\|\vec{v}_3\| = \|\vec{v}_4\|.$
    ]{\includegraphics[scale=0.55]{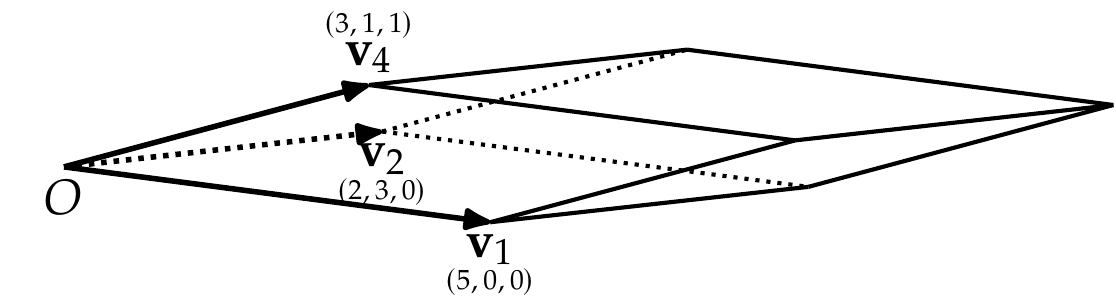}}
    \hfill\null
    \caption{Example of \prb{Determinant Maximization} with $n=4$ and $k=3$.}
    \label{fig:example}
\end{figure}

\section{Introduction}\label{sec:intro}
\subsection{Background}
We study the following \prb{Determinant Maximization} problem:
Given an $n \times n$ positive semi-definite matrix
$\mat{A}$ in $\bbQ^{n \times n}$ and an integer $k $ in $[n]$ denoting the solution size,
find a $k \times k$ principal submatrix of $\mat{A}$ having
the maximum determinant;
namely, maximize $\det(\mat{A}_{S})$ subject to $S \in {[n] \choose k}$.
One motivating example for this problem is a \emph{subset selection task}.
Suppose we are given $n$ items (e.g., images or products) 
associated with feature vectors
$\vec{v}_1, \ldots, \vec{v}_n$ and
required to select a ``diverse'' set of $k$ items among them.
We can measure the diversity of a set $S$ of $k$ items using
the principal minor $\det(\mat{A}_S)$ of the Gram matrix $\mat{A}$
defined by feature vectors such that
$A_{i,j} \triangleq \langle \vec{v}_i, \vec{v}_j \rangle$ for all
$i, j \in [n]$,
resulting in \prb{Determinant Maximization}.
This formulation is justified by the fact that
$\det(\mat{A}_S)$ is equal to the squared volume of the parallelepiped
spanned by $\{\vec{v}_i : i \in S\}$; that is,
a pair of vectors at a large angle is regarded as more diverse.
See \cref{fig:example} for an example of \prb{Determinant Maximization} and its volume interpretation.
In artificial intelligence and machine learning communities,
\prb{Determinant Maximization} is also known as MAP inference on
a \emph{determinantal point process} \cite{borodin2005eynard,macchi1975coincidence},
and has found many applications over the past decade, 
including 
tweet timeline generation \cite{yao2016tweet},
object detection \cite{lee2016individualness},
change-point detection \cite{zhang2016block},
document summarization \cite{chao2015large,kulesza2011kdpps},
YouTube video recommendation \cite{wilhelm2018practical}, and
active learning \cite{biyik2019batch}.
See the survey of \citet{kulesza2012determinantal} for further details.
Though \prb{Determinant Maximization} is known to be \NP-hard to solve exactly \cite{ko1995exact},
we can achieve an $\rme^{-k}$-factor approximation in polynomial time \cite{nikolov2015randomized},
which is nearly optimal because
a $2^{-ck}$-factor approximation for some constant $c > 0$ is impossible unless \cP~$=$~\NP
\cite{civril2013exponential,koutis2006parameterized,summa2014largest}.

Having known a nearly tight hardness-of-approximation result in the polynomial-time regime,
we resort to \emph{parameterized algorithms}
\cite{flum2006parameterized,cygan2015parameterized,downey2012parameterized}.
We say that a problem is \emph{fixed-parameter tractable} (FPT)
with respect to a parameter $k \in \bbN$
if it can be solved in $f(k) \abs{\calI}^{\bigO(1)}$ time
for some computable function $f$ and instance size $\abs{\calI}$.
One very natural parameter is the \emph{solution size} $k$, which is expected to be small in practice.
By enumerating all $k \times k$ principal submatrices,
we can solve \prb{Determinant Maximization} in $n^{k + \bigO(1)}$ time; i.e.,
it belongs to the class \XP.
Because \FPT~$\subsetneq$~\XP \cite{downey2012parameterized},
it is even more desirable if an FPT algorithm exists.
Unfortunately, \citet{koutis2006parameterized} has already proven that
\prb{Determinant Maximization} is \W{1}-hard with respect to $k$,
which in fact follows from the reduction due to \citet{ko1995exact}.
Therefore, under the widely-believed assumption that \FPT~$\neq$~\W{1},
an FPT algorithm for \prb{Determinant Maximization} does not exist.

However, there is still room to explore the parameterized complexity of \prb{Determinant Maximization} in the \emph{restricted case},
in the hope of circumventing the general-case parameterized intractability.
Here, we describe three possible scenarios.
One can first assume an input matrix $\mat{A}$ to be \emph{sparse}.
Of particular interest is the structural sparsity of
the \emph{symmetrized graph} of $\mat{A}$ \cite{courcelle2001fixed,cifuentes2016efficient} defined as
the underlying graph formed by nonzero entries of $\mat{A}$,
encouraged by numerous FPT algorithms for \NP-hard graph-theoretic problems 
parameterized by the treewidth
\cite{cygan2015parameterized,fomin2010exact}.
For example, in change-point detection applications,
\citet{zhang2016block} observed a small-bandwidth matrix and developed an efficient heuristic for \prb{Determinant Maximization}.
In addition, one may adopt a strong parameter.
The \emph{rank} of an input matrix $\mat{A}$ is such a natural candidate.
We often assume that $\mat{A}$ is low-rank in applications; for instance,
the feature vectors $\vec{v}_i$ are inherently low-dimensional \cite{celis2018fair}
or the largest possible subset is significantly smaller than the ground set size $n$.
Since any $k \times k$ principal submatrix of $\mat{A}$ is singular whenever $k > \rank(\mat{A})$,
we can ensure that $k \leq \rank(\mat{A})$; namely,
parameterization by $\rank(\mat{A})$ is considered stronger than that by $k$.
Intriguingly, the partition function of product determinantal point processes is 
FPT with respect to rank while \shP-hard in general \cite{ohsaka2020intractability}.
The last possibility to be considered is \emph{FPT-approximability}.
Albeit \W{1}-hardness of \prb{Determinant Maximization} with parameter $k$,
it could be possible to obtain an approximate solution in FPT time.
It has been demonstrated that
several \W{1}-hard problems can be approximated in FPT time, such as
\prb{Partial Vertex Cover} and \prb{Minimum $k$-Median} \cite{harpeled2004coresets}
(refer to the survey of
\citet{marx2008parameterized} and
\citet{feldmann2020survey}).
One may thus envision the existence of a $1/\rho(k)$-factor FPT-approximation algorithm
for \prb{Determinant Maximization} for a \emph{small} function $\rho$.
Alas, we refute the above possibilities under a plausible assumption in parameterized complexity.

\subsection{Our Results}
We improve
the \W{1}-hardness of \prb{Determinant Maximization}
due to \citet{koutis2006parameterized} by
showing that it is still \W{1}-hard
even if an input matrix is extremely sparse or low rank, or
an approximate solution is acceptable,
along with some tractable cases.

We first prove that
\prb{Determinant Maximization}
is \NP-hard and \W{1}-hard with respect to $k$
even if the input matrix $\mat{A}$ is an \emph{arrowhead matrix} (\cref{thm:arrowhead}).
An arrowhead matrix is a square matrix that can
include nonzero entries only in
the first row, the first column, or the diagonal; i.e.,
its symmetrized graph is a star (cf.~\cref{fig:arrowhead}).
Our hardness result implies that
the ``structural sparsity'' of input matrices is not helpful;
in particular,
it follows from \cref{thm:arrowhead} that
this problem is \NP-hard even if
the treewidth, pathwidth, and vertex cover number of
the symmetrized graph are all $1$.
The proof is based on
a parameterized reduction from \prb{$k$-Sum},
which is a parameterized version of \prb{Subset Sum} known to be \W{1}-complete \cite{abboud2014losing,downey1995fixed}, and involves
a structural feature of the determinant of arrowhead matrices.
On the other hand,
\prb{Determinant Maximization} is known to be solvable in polynomial time on \emph{tridiagonal matrices} \cite{al-thani2021tridiagonal}, whose symmetrized graph is a path graph (cf.~\cref{fig:tridiagonal}).
Though an extended abstract of this paper appearing in ISAAC'22
includes a polynomial-time algorithm on tridiagonal matrices,
Jon Lee pointed out to us that
\citeauthor{al-thani2021tridiagonal} already proved the polynomial-time solvability
on tridiagonal matrices in LAGOS'21~\cite{al-thani2021tridiagonal} and
on spiders of bounded legs~\cite{al-thani2021tridiagonala}.
We thus omitted the proof for tridiagonal matrices from this article.

\begin{figure}
\null\hfill
\begin{minipage}{0.4\hsize}
    \centering
    $\kbordermatrix{
     &0&1&2&3&4&5&6&7&8 \\
    0&*&*&*&*&*&*&*&*&* \\
    1&*&*&0&0&0&0&0&0&0 \\
    2&*&0&*&0&0&0&0&0&0 \\
    3&*&0&0&*&0&0&0&0&0 \\
    4&*&0&0&0&*&0&0&0&0 \\
    5&*&0&0&0&0&*&0&0&0 \\
    6&*&0&0&0&0&0&*&0&0 \\
    7&*&0&0&0&0&0&0&*&0 \\
    8&*&0&0&0&0&0&0&0&* \\
    }$
\caption{Structure of arrowhead matrices, where ``$*$'' denotes nonzero entries.}
\label{fig:arrowhead}
\end{minipage}
\hfill
\begin{minipage}{0.4\hsize}
    \centering
    $\kbordermatrix{
     &0&1&2&3&4&5&6&7&8 \\
    0&*&*&0&0&0&0&0&0&0 \\
    1&*&*&*&0&0&0&0&0&0 \\
    2&0&*&*&*&0&0&0&0&0 \\
    3&0&0&*&*&*&0&0&0&0 \\
    4&0&0&0&*&*&*&0&0&0 \\
    5&0&0&0&0&*&*&*&0&0 \\
    6&0&0&0&0&0&*&*&*&0 \\
    7&0&0&0&0&0&0&*&*&* \\
    8&0&0&0&0&0&0&0&*&* \\
    }$
\caption{Structure of tridiagonal matrices, where ``$*$'' denotes nonzero entries.}
\label{fig:tridiagonal}
\end{minipage}
\hfill\null
\end{figure}

Thereafter, we demonstrate that \prb{Determinant Maximization} is \W{1}-hard when parameterized by the \emph{rank} of an input matrix (\cref{cor:rank}).
In fact,
we obtain the stronger result that
it is \W{1}-hard to determine whether
an input set of $n$ $d$-dimensional vectors includes
$k$ pairwise orthogonal vectors
when parameterized by $d$ (\cref{thm:rank}).
Unlike the proof of \cref{thm:arrowhead},
we are allowed to construct only a $f(k)$-dimensional vector in a parameterized reduction.
Therefore, we reduce from
a different \W{1}-complete problem called \prb{Grid Tiling} due to \citet{marx2007optimality,marx2012tight}.
In \prb{Grid Tiling},
we are given
$k^2$ nonempty sets of integer pairs arranged in a $k \times k$ grid, and the task is to select $k^2$ integer pairs such that
the vertical and horizontal neighbors agree respectively in the first and second coordinates (see \cref{def:grid-tiling} for the precise definition).
\prb{Grid Tiling} is favorable for our purpose because
the constraint consists of simple equalities, and
each cell is adjacent to (at most) four cells.
To express the consistency between adjacent cells using
only a $f(k)$-dimensional vector,
we exploit Pythagorean triples.
It is essential in \cref{thm:rank} that
the input vectors can include \emph{both} positive and negative entries
in a sense that we can find
$k$ $d$-dimensional nonnegative vectors that are pairwise orthogonal
in FPT time with respect to $d$ (\cref{obs:cp-rank}).

Our final contribution is to give evidence that
it is \W{1}-hard to determine whether the optimal value of 
\prb{Determinant Maximization}
is equal to $1$ or at most $2^{-c\sqrt{k}}$ for some universal constant $c > 0$; namely,
\prb{Determinant Maximization} is FPT-inapproximable within a factor of $2^{-c\sqrt{k}}$ (\cref{thm:inapprox}).
Our result is conditional on the \emph{Parameterized Inapproximability Hypothesis} (PIH),
which is a conjecture posed by \citet{lokshtanov2020parameterized} 
asserting that a gap version of \prb{Binary Constraint Satisfaction Problem}
is \W{1}-hard when parameterized by the number of variables.
PIH can be thought of as a parameterized analogue of 
the PCP theorem \cite{arora1998probabilistic,arora1998proof}; e.g.,
\citet{lokshtanov2020parameterized} show that
assuming PIH and \FPT~$\neq$~\W{1},
\prb{Directed Odd Cycle Transversal} does not admit
a $(1-\epsilon)$-factor FPT-approximation algorithm for some $\epsilon > 0$.
The proof of \cref{thm:inapprox} involves FPT-inapproximability of \prb{Grid Tiling} under PIH,
which is reminiscent of \citeauthor{marx2007optimality}'s work \cite{marx2007optimality} and might be of some independent interest.
Because we cannot achieve an exponential gap by
simply reusing the parameterized reduction from \prb{Grid Tiling} of the second hardness result (as inferred from \cref{obs:approx} below),
we apply a gadget invented by \citet{civril2013exponential} to
construct an $\bigO(k^2 n^2)$-dimensional vector
for each integer pair of a \prb{Grid Tiling} instance.
We further show that
the same kind of hardness result does \emph{not} hold when parameterized by the rank $r$ of an input matrix.
Specifically,
we develop an $\epsilon$-additive approximation algorithm
that runs in $\epsilon^{-r^2} \cdot r^{\bigO(r^3)} \cdot n^{\bigO(1)}$ time for any $\epsilon > 0$,
provided that the diagonal entries are bounded (\cref{obs:approx}).

\subsection{More Related Work}
\label{subsec:intro:related}
\prb{Determinant Maximization} is not only applied
in artificial intelligence and machine learning
but also in computational geometry \cite{gritzmann1995largest} and discrepancy theory;
refer to \citet{nikolov2015randomized} and references therein.
On the negative side,
\citet{ko1995exact} prove that 
\prb{Determinant Maximization} is \NP-hard, and
\citet{koutis2006parameterized} proves that 
it is further \W{1}-hard.
\NP-hardness of approximating \prb{Determinant Maximization}
has been investigated in \cite{koutis2006parameterized,civril2013exponential,summa2014largest,ohsaka2022some}.
On the algorithmic side,
a greedy algorithm achieves an approximation factor of $1/k!$ \cite{civril2009selecting}.
Subsequently, \citet{nikolov2015randomized} gives
an $\rme^{-k}$-factor approximation algorithm;
partition constraints \cite{nikolov2016maximizing} and matroid constraints \cite{madan2020maximizing} are also studied.
Several \shP-hard computation problems over matrices including 
permanents \cite{courcelle2001fixed,cifuentes2016efficient},
hyperdeterminants \cite{cifuentes2016efficient}, and 
partition functions of product determinantal point processes
\cite{ohsaka2020intractability}
are efficiently computable if 
the treewidth of the symmetrized graph or the matrix rank is bounded.

\section{Preliminaries}\label{sec:pre}

\subsection{Notations and Definitions}
For two integers $m,n \in \bbN$ with $m \leq n$,
let $ [n] \triangleq \{1, 2, \ldots, n\} $ and
$[m\isep n] \triangleq \{m,m+1,\ldots, n-1,n\}$.
For a finite set $S$ and an integer $k$, we write ${S \choose k}$ for the family of all size-$k$ subsets of $S$.
For a statement $P$, $[\![P]\!]$ is $1$ if $P$ is true, and $0$ otherwise.
The base of logarithms is $2$.
Matrices and vectors are written in bold letters, and scalars are unbold.
The Euclidean norm is denoted $\| \cdot \|$;
i.e., $\|\vec{v}\| \triangleq \sqrt{\sum_{i \in [d]}(v(i))^2} $ for a vector $\vec{v} \in \bbR^{d}$.
We use $\langle \cdot, \cdot \rangle$
for the standard inner product; i.e.,
$ \langle \vec{v}, \vec{w} \rangle \triangleq \sum_{i \in [d]} v(i) \cdot w(i) $ for two vectors $\vec{v}, \vec{w} \in \bbR^{d}$.
For an $n \times n$ matrix $\mat{A}$ and an index set $S \subseteq [n]$,
we use $\mat{A}_S$ to denote the principal submatrix of $\mat{A}$ whose
rows and columns are indexed by $S$.
For an $m \times n$ matrix $\mat{A}$,
the \emph{spectral norm} $\|\mat{A}\|_2$ is defined as
the square root of the maximum eigenvalue of $\mat{A}^\top \mat{A}$ and
the \emph{max norm} is defined as
$\|\mat{A}\|_{\max} \triangleq \max_{i,j} \abs{A_{i,j}} $.
It is well-known that
$\|\mat{A}\|_{\max} \leq \|\mat{A}\|_2 \leq \sqrt{mn} \cdot \|\mat{A}\|_{\max}$.
The \emph{symmetrized graph} \cite{courcelle2001fixed,cifuentes2016efficient} of an $n \times n$ matrix $\mat{A}$ 
is defined as an undirected graph $G$ that has
each integer of $[n]$ as a vertex and an edge $(i,j) \in {[n] \choose 2}$
if $A_{i,j} \neq 0$ or $A_{j,i} \neq 0$; i.e.,
$G = ([n], \{ (i,j) : A_{i,j} \neq 0 \})$.
For a matrix $\mat{A} \in \bbR^{n \times n}$,
its \emph{determinant} is defined as follows:
\begin{align}
    \det(\mat{A}) \triangleq \sum_{\sigma \in \mathfrak{S}_n} \sgn(\sigma) \prod_{i \in [n]} A_{i,\sigma(i)},
\end{align}
where $\mathfrak{S}_n$ denotes the symmetric group on $[n]$, and
$\sgn(\sigma)$ denotes the sign of a permutation $\sigma$.
We define $\det(\mat{A}_{\emptyset}) \triangleq 1$.
For a collection $\mat{V} = \{\vec{v}_1, \ldots, \vec{v}_n\}$ of $n$ vectors in $\bbR^{d}$,
the \emph{volume} of the parallelepiped spanned by $\mat{V}$ is defined as follows:
\begin{align}
\label{eq:def-vol}
    \vol(\mat{V}) \triangleq \|\vec{v}_1\| \cdot
    \prod_{2 \leq i \leq n}
    \dis(\vec{v}_i, \{\vec{v}_1, \ldots, \vec{v}_{i-1}\}).
\end{align}
Here, $\dis(\vec{v}, \mat{P})$ denotes the distance of $\vec{v}$ to the subspace spanned by $\mat{P}$; i.e.,
$
    \dis(\vec{v}, \mat{P}) \triangleq \| \vec{v} - \proj_{\mat{P}}(\vec{v}) \|,
$
where $\proj_{\mat{P}}(\cdot)$ is an operator of orthogonal projection onto the subspace spanned by $\mat{P}$.
We define $\vol(\emptyset) \triangleq 1$
for the sake of consistency to
the determinant of an empty matrix
(i.e., $\det([\;]) = 1 = \vol^2(\emptyset)$).
Note that any symmetric positive semi-definite matrix is a Gram matrix.
Then, if $\mat{A}$ is the Gram matrix defined as
$A_{i,j} \triangleq \langle \vec{v}_i, \vec{v}_j \rangle$
for all $i,j \in [n]$,
we have a simple relation between
the principal minor and the volume of the parallelepiped that
\begin{align}\label{eq:pre:det-vol}
    \det(\mat{A}_S) = \vol^2(\{\vec{v}_i : i \in S\})
\end{align}
for every $S \subseteq [n]$; see \cite{civril2009selecting} for the proof.
We formally define the \prb{Determinant Maximization} problem as follows.\footnote{
Note that if we consider the decision version of \prb{Determinant Maximization},
we are additionally given a target number $\tau$ and
are required to decide if $\maxdet(\mat{A},k) \geq \tau$.
}
\begin{problem}\label{prb:detmax}
Given a positive semi-definite matrix $\mat{A}$ in $\bbQ^{n \times n}$ and a positive integer $k \in [n]$,
\prb{Determinant Maximization} asks to find a set $S \in {[n] \choose k}$ such that
the determinant $\det(\mat{A}_S)$ of a $k \times k$ principal submatrix is maximized.
The optimal value is denoted
$\maxdet(\mat{A},k) \triangleq \max_{S \in {[n] \choose k}} \det(\mat{A}_S)$.
\end{problem}
Due to the equivalence between squared volume and determinant in \cref{eq:pre:det-vol},
\prb{Determinant Maximization} is equivalent to the following problem of volume maximization:
Given a collection of $n$ vectors in $\bbQ^d$ and a positive integer $k \in [n]$,
we are required to find $k$ vectors such that 
the volume of the parallelepiped spanned by them is maximized.
We shall use the problem definition based on the determinant and the volume interchangeably.

\subsection{Parameterized Complexity}
Given a \emph{parameterized problem} $\Pi$ consisting of
a pair $\langle \calI, k \rangle$ of instance $\calI$ and parameter $k \in \bbN$,
we say that $\Pi$ is \emph{fixed-parameter tractable} (FPT)
with respect to $k$ if
it is solvable in $f(k) \abs{\calI}^{\bigO(1)}$ time
for some computable function $f$,
and 
\emph{slice-wise polynomial} (XP) if 
it is solvable in $\abs{\calI}^{f(k)}$ time;
it holds that \FPT~$\subsetneq$~\XP \cite{downey2012parameterized}.
The value of parameter $k$ may be independent of
the instance size $\abs{\calI}$ and may be given by
some computable function $k = k(\calI)$ on instance $\calI$
(e.g., the rank of an input matrix).
Our objective is
to prove that a problem
(i.e., \prb{Determinant Maximization})
is unlikely to admit an FPT algorithm under plausible assumptions in parameterized complexity.
The central notion for this purpose is a parameterized reduction, which is used to demonstrate that a problem of interest is hard
for a particular class of parameterized problems
that is believed to be a superclass of \FPT.
We say that
a parameterized problem $\Pi_1$ is \emph{parameterized reducible}
to another parameterized problem $\Pi_2$ if 
(i)~an instance $\calI_1$ with parameter $k_1$ for $\Pi_1$ can be transformed into
an instance $\calI_2$ with parameter $k_2$ for $\Pi_2$ in FPT time and
(ii)~the value of $k_2$ only depends on the value of $k_1$.
Note that a parameterized reduction may not be a polynomial-time reduction and vice versa.
\W{1} is a class of parameterized problems that
are parameterized reducible to \prb{$k$-Clique}, and
it is known that \FPT~$\subseteq$~\W{1}~$\subseteq$~\XP.
This class is often regarded as a parameterized counterpart to \NP of classical complexity;
in particular, the conjecture \FPT~$\neq$~\W{1} is a widely-believed assumption in parameterized complexity
\cite{flum2006parameterized,downey2012parameterized}.
Thus, the existence of a parameterized reduction from
a \W{1}-complete problem to a problem $\Pi$
is a strong evidence that $\Pi$ is not in \FPT.
In \prb{Determinant Maximization},
a simple brute-force search algorithm that examines
all ${[n] \choose k}$ subsets of size $k$ runs in 
$n^{k+\bigO(1)}$ time; hence,
this problem belongs to \XP.
On the other hand, it is proven to be \W{1}-hard \cite{koutis2006parameterized}.

\section{\W{1}-hardness and \NP-hardness on Arrowhead Matrices}
\label{sec:arrowhead}

We first prove the 
\W{1}-hardness with respect to $k$ and \NP-hardness on arrowhead matrices.
A square matrix $\mat{A}$ in $\bbR^{[0 \isep n] \times [0 \isep n]}$ is
an \emph{arrowhead matrix} if
$A_{i,j} = 0$ for all $i, j \in [n]$ with $i \neq j$.
In the language of graph theory,
$\mat{A}$ is arrowhead if
its symmetrized graph is a star $K_{1,n}$.
See \cref{fig:arrowhead} for the structure of arrowhead matrices.

\begin{theorem}\label{thm:arrowhead}
\prb{Determinant Maximization}  on arrowhead matrices
is \NP-hard and \W{1}-hard when parameterized by $k$.
\end{theorem}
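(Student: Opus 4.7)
The plan is to give a parameterized reduction from \prb{$k$-Sum}, which is \W{1}-complete when parameterized by $k$ and \NP-hard when $k$ is part of the input \cite{abboud2014losing,downey1995fixed}. Given a \prb{$k$-Sum} instance $(w_1, \ldots, w_n; T)$ with $w_i \in \bbN$, I would construct an arrowhead matrix $\mat{A} \in \bbQ^{(n+1) \times (n+1)}$ on the index set $\{0, 1, \ldots, n\}$, along with a parameter $k' := k + 1$ and a target $\tau \in \bbQ$, such that the \prb{$k$-Sum} instance is a yes-instance if and only if $\maxdet(\mat{A}, k') \geq \tau$. Since the construction is polynomial time and sends $k \mapsto k + 1$, a single reduction gives both \NP-hardness and \W{1}-hardness.

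The structural feature of arrowhead matrices I would exploit is the closed-form expression for their determinant. Writing $a_0 := A_{0,0}$, $b_i := A_{0,i}$, and $d_i := A_{i,i} > 0$, one can check that for every $S' \subseteq [n]$,
\[ \det(\mat{A}_{\{0\} \cup S'}) \;=\; \Bigl(\prod_{i \in S'} d_i\Bigr)\Bigl(a_0 - \sum_{i \in S'} \tfrac{b_i^2}{d_i}\Bigr), \]
while $\det(\mat{A}_S) = \prod_{i \in S} d_i$ when $0 \notin S$. The arrowhead determinant therefore factors into a multiplicative term in the $d_i$'s and an affine term in the $b_i^2/d_i$'s; these are the two independent ``channels'' through which the weights $w_i$ and the target $T$ must be encoded.

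Concretely, I would choose $d_i$ and $b_i$ so that $b_i^2/d_i$ is an affine function of $w_i$, making the second factor essentially $a_0 - \sum_{i \in S'} w_i$. If we took constant $d_i$, the determinant would be monotone in $\sum w_i$ and maximized at an extreme sum rather than at~$T$. To shift the maximum to the target, I would also let $d_i$ depend on $w_i$ through a scaling trick: multiply the weight-dependent part by a large perfect-square parameter $M$ so that $d_i \approx 1$ and $\prod_{i \in S'} d_i = \prod_{i \in S'}(1 + w_i/M)$ contributes a secondary $\sum w_i$-dependent factor. With $a_0$ tuned appropriately, the leading order of the determinant becomes a concave quadratic in $\sum w_i$ whose continuous maximum is exactly at $T$, and the gap from this maximum to the next achievable integer sum is $\Theta(1/M^2)$.

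The main obstacle is making the reduction work over the \emph{discrete} set of achievable sums while simultaneously preserving positive semi-definiteness and rationality. I would handle these as follows: (i)~rationality of $b_i$ is ensured by taking $M$ a perfect square; (ii)~positive semi-definiteness of $\mat{A}$, equivalent to $a_0 \geq \sum_{i \in [n]} b_i^2/d_i$ by the Schur complement, is enforced by an additive padding of $a_0$ that does not move the continuous optimum at $\sum w_i = T$; and (iii)~the higher-order elementary-symmetric-polynomial corrections $e_j(\{w_i : i \in S'\})$ arising in the expansion of $\prod (1 + w_i/M)$ must be dominated by the $\Theta(1/M^2)$ gap, which I would arrange by choosing $M$ polynomially in $k$ and $\max_i w_i$. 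Once these are in place, $\maxdet(\mat{A}, k+1) \geq \tau$ holds iff some $S' \in \binom{[n]}{k}$ satisfies $\sum_{i \in S'} w_i = T$, completing the reduction.
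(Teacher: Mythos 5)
Your high-level plan coincides with the paper's: reduce from \prb{$k$-Sum} to an $(n+1)\times(n+1)$ arrowhead instance with parameter $k+1$, using the closed-form
$\det(\mat{A}_{\{0\}\cup S'}) = \bigl(\prod_{i \in S'} d_i\bigr)\bigl(a_0 - \sum_{i \in S'} b_i^2/d_i\bigr)$,
and choose $d_i$, $b_i$ so that the product is maximized exactly when $\sum_{i \in S'} w_i = T$. The paper also uses this formula and this source problem. The critical divergence is in the choice of $d_i$, and this is where your construction breaks.

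The paper sets $d_i = (\alpha+\beta)\,\rme^{x_i}$ precisely so that $\prod_{i \in S'} d_i = (\alpha+\beta)^{|S'|} \exp\bigl(\sum_{i \in S'}x_i\bigr)$ \emph{depends on $S'$ only through its sum}. This is what makes the determinant a univariate function $\rme^{X}(1-cX)$ of $X = \sum x_i$ with a unique interior maximum at $X = \beta/\alpha$, and the entire technical burden of the paper's proof is then the approximation of the irrational entries ($\sqrt{\cdot}$, $\rme^{x_i}$) by rationals (their \cref{lem:arrowhead:approx}, using a determinant-perturbation bound). Your substitute $d_i = 1 + w_i/M$ dodges the irrationality but loses this crucial property: $\prod_{i\in S'}(1 + w_i/M) = \sum_{j\geq 0} e_j(S')/M^j$ depends on \emph{all} elementary symmetric polynomials of $\{w_i : i \in S'\}$, not just $e_1 = \sum w_i$.

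Concretely, with $a_0 = c(M+2T)$ forced by requiring the continuous maximum of $(1+X/M)(a_0 - cX)$ at $X=T$, you have $a_0 = \Theta(cM)$, so the gap between the quadratic's value at $X=T$ and at $X=T\pm 1$ is $\Theta(c/M)$, not $\Theta(1/M^2)$. Meanwhile the $e_2$ correction to the determinant is $(e_2/M^2)(a_0 - cX) = \Theta(c\,e_2/M)$. The ratio of correction to gap is $\Theta(e_2)$, \emph{independent of $M$}, so no choice of $M$ --- polynomial or otherwise --- makes the correction negligible. Here is a direct counterexample: take $k=2$, weights $(w_1,w_2,w_3,w_4)=(1,3,2,2)$, $T=4$. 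Then
\begin{align*}
\det(\mat{A}_{\{0,1,2\}}) &= c\bigl(1+\tfrac{4}{M}+\tfrac{3}{M^2}\bigr)(M+4) = c\bigl(M+8+\tfrac{19}{M}+\tfrac{12}{M^2}\bigr),\\
\det(\mat{A}_{\{0,2,3\}}) &= c\bigl(1+\tfrac{5}{M}+\tfrac{6}{M^2}\bigr)(M+3) = c\bigl(M+8+\tfrac{21}{M}+\tfrac{18}{M^2}\bigr),
\end{align*}
so the set $\{w_2,w_3\}$ with sum $5\neq T$ strictly beats $\{w_1,w_2\}$ with sum $T$, for every $M>0$. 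The reduction therefore returns a false positive. A secondary issue: taking $M$ a perfect square does not make $b_i = \sqrt{c\,w_i(1+w_i/M)}$ rational; rationality requires the whole radicand to be a rational square, which it generally is not. The paper sidesteps both issues at once by accepting irrational entries in the ideal construction and proving a quantitative perturbation lemma; some analogue of that step seems unavoidable here.
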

The proof of \cref{thm:arrowhead} requires a reduction
from \prb{$k$-Sum},
a natural parameterized version of the \NP-complete \prb{Subset Sum} problem, 
whose
membership of \W{1} and \W{1}-hardness was proven by
\citet{abboud2014losing} and \citet{downey1995fixed}, respectively.
\begin{problem}[\prb{$k$-Sum} due to \citet{abboud2014losing}]
\label{def:k-sum}
Given $n$ integers $x_1, \ldots, x_n \in [0\isep n^{2k}]$,
a target integer $t \in [0 \isep n^{2k}]$, and 
a positive integer $k \in [n]$,
we are required to decide if
there exists a size-$k$ set $S \in {[n] \choose k}$ such that
$
\sum_{i \in S} x_i = t.
$
\end{problem}
Here, we introduce a slightly-modified version of
\prb{$k$-Sum} such that 
the input numbers are \emph{rational} and
their sum is \emph{normalized} to $1$,
without affecting its computational complexity.

\begin{problem}[\prb{$k$-Sum} modified from \cite{abboud2014losing}]
\label{def:k-sum2}
Given $n$ rational numbers $x_1, \ldots, x_n$ in $(0,1) \cap \bbQ_+$,
a target rational number $t$ in $(0,1) \cap \bbQ_+$, and 
a positive integer $k \in [n]$ such that
$x_i$'s are integer multiples of some rational number at least $\frac{1}{n^{2k+1}}$ and
$\sum_{i \in [n]} x_i = 1$,
\prb{$k$-Sum} asks to decide if
there exists a set $S \in {[n] \choose k}$ such that
$
\sum_{i \in S} x_i = t.
$
\end{problem}
Hereafter, for any set $S \subseteq [0\isep n]$ including $0$,
we denote $S_{-0} \triangleq S \setminus \{0\}$.

\subsection{Reduction from {\normalfont\prb{$k$-Sum}} and Proof of \cref{thm:arrowhead}}
\label{subsec:arrowhead:reduction}

In this subsection, we give
a parameterized, polynomial-time reduction from \prb{$k$-Sum}.
We first use an explicit formula of the determinant of arrowhead matrices.

\begin{lemma}
\label{lem:arrowhead:det}
Let $\mat{A}$ be an arrowhead matrix in $\bbR^{[0\isep n] \times [0 \isep n]}$
such that
$A_{i,i} \neq 0$ for all $i \in [n]$.
Then, for any set $S \subseteq [0 \isep n]$, it holds that
\begin{align}
\det(\mat{A}_S) = 
\begin{dcases}
\prod_{i \in S_{-0}} A_{i,i} \cdot
\left( A_{0,0} - \sum_{i \in S_{-0}} \frac{A_{0,i} \cdot A_{i,0}}{A_{i,i}} \right) 
& \text{if } 0 \in S, \\
\prod_{i \in S} A_{i,i} & \text{if } 0 \not\in S.
\end{dcases}
\end{align}
\end{lemma}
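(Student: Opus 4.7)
The proof plan is a clean case split based on whether the apex index $0$ belongs to $S$.

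If $0 \notin S$, then the submatrix $\mat{A}_S$ is indexed entirely by rows and columns in $[n]$; by the defining sparsity pattern $A_{i,j}=0$ for $i\neq j$ with $i,j\in[n]$, the matrix $\mat{A}_S$ is diagonal with diagonal entries $(A_{i,i})_{i\in S}$, so $\det(\mat{A}_S)=\prod_{i\in S}A_{i,i}$. This case is essentially inspection.

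If $0 \in S$, I would write $\mat{A}_S$ in block form with the apex separated:
\begin{align*}
\mat{A}_S \;=\; \begin{pmatrix} A_{0,0} & \vec{b}^\top \\ \vec{c} & \mat{D} \end{pmatrix},
\qquad
\vec{b}^\top = (A_{0,i})_{i\in S_{-0}},\quad
\vec{c} = (A_{i,0})_{i\in S_{-0}},\quad
\mat{D} = \mathrm{diag}\bigl((A_{i,i})_{i\in S_{-0}}\bigr),
\end{align*}
where the bottom-right block is diagonal exactly because of the arrowhead sparsity pattern. Since the hypothesis $A_{i,i}\neq 0$ for all $i\in[n]$ guarantees $\mat{D}$ is invertible, I would apply the Schur complement formula
\begin{align*}
\det(\mat{A}_S) \;=\; \det(\mat{D})\cdot\bigl(A_{0,0} - \vec{b}^\top \mat{D}^{-1}\vec{c}\bigr).
\end{align*}
Expanding using $\mat{D}^{-1}=\mathrm{diag}\bigl((A_{i,i}^{-1})_{i\in S_{-0}}\bigr)$ immediately yields the claimed expression $\prod_{i\in S_{-0}}A_{i,i}\cdot\bigl(A_{0,0}-\sum_{i\in S_{-0}}\frac{A_{0,i}A_{i,0}}{A_{i,i}}\bigr)$.

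There is no real obstacle: the entire argument is a one-line block-matrix determinant identity, valid precisely because the diagonal block is invertible. As an alternative presentation one could instead perform column operations, subtracting $(A_{i,0}/A_{i,i})$ times column $i$ from column $0$ for each $i\in S_{-0}$, which zeroes out the first column below the apex and turns $\mat{A}_S$ into an upper-triangular matrix whose diagonal is $A_{0,0}-\sum_{i\in S_{-0}}\frac{A_{0,i}A_{i,0}}{A_{i,i}}$ followed by the $A_{i,i}$; the product of these entries gives the same formula. Either route is short enough to fit in a few lines.
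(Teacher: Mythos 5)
Your proof is correct, but it takes a genuinely different route from the paper's. The paper proves the $0 \in S$ case by expanding the Leibniz permutation sum $\det(\mat{A}_S) = \sum_{\sigma} \sgn(\sigma) \prod_{i} A_{i,\sigma(i)}$ directly and observing that the only permutations contributing nonzero terms are the identity and the transpositions $(0\,i)$ for $i \in S_{-0}$, since any cycle of length $\geq 2$ avoiding $0$ must traverse a zero off-diagonal entry; this yields the unfactored form $\prod_{i\in S}A_{i,i} - \sum_{i\in S_{-0}}A_{0,i}A_{i,0}\prod_{j\neq i}A_{j,j}$, which is then factored. You instead use the Schur complement identity $\det(\mat{A}_S) = \det(\mat{D})\cdot(A_{0,0} - \vec{b}^\top\mat{D}^{-1}\vec{c})$ with $\mat{D}$ the diagonal block, which gives the factored form in one step. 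Both arguments are short and valid. The one thing worth noting is a mild trade-off: the Schur complement route requires $\mat{D}$ invertible up front (which the hypothesis $A_{i,i}\neq 0$ supplies), whereas the paper's permutation expansion is purely combinatorial and produces the unfactored identity without any nondegeneracy assumption, only invoking $A_{i,i}\neq 0$ at the final division. For the lemma as stated this is immaterial, so your proof is a legitimate and arguably slicker alternative.
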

\begin{proof}
The case of $0 \not \in S$ is evident because $\mat{A}_S$ is diagonal.
Showing the case of $S \triangleq [0 \isep n]$ suffices to complete the proof.
Here, we enumerate permutations $\sigma \in \mathfrak{S}_{S} $
such that $A_{i,\sigma(i)}$ is possibly nonzero for all $i \in S$
by the case analysis of $\sigma(0)$.
\begin{description}
\item[\textbf{Case 1.}] If $\sigma(0) = 0$: we must have $\sigma(i) = i$ for all $i \in [n]$ and $\sgn(\sigma) = +1$.
\item[\textbf{Case 2.}] If $\sigma(0) = i$ for $i \neq 0$: we must have $\sigma(i) = 0$, and thus, it holds that $\sigma(j) = j$ for all $j \in S \setminus \{i\}$ and $\sgn(\sigma) = -1$.
\end{description}
Expanding $\det(\mat{A}_S)$, we derive
\begin{align}
\begin{aligned}
    \det(\mat{A}_S) & = 
\sum_{\sigma \in \mathfrak{S}_{S}} \sgn(\sigma) \prod_{i \in S} A_{i,\sigma(i)} \\
& = \prod_{i \in S} A_{i,i} - \sum_{i \in [n]} A_{0,i} \cdot A_{i,0} \prod_{j \in [n] \setminus \{i\}} A_{j,j} \\
& = \prod_{i \in S_{-0}} A_{i,i} \cdot
\left( A_{0,0} - \sum_{i \in S_{-0}} \frac{A_{0,i} \cdot A_{i,0}}{A_{i,i}} \right),
\end{aligned}
\end{align}
completing the proof.
\end{proof}

\cref{lem:arrowhead:det} shows us a way to express
the \emph{product} of
$\exp\left(\sum_{i \in S_{-0}} x_i\right)$ and $1 - C \cdot \sum_{i \in S_{-0}} x_i$ for some constant $C$,
which is a key step in proving \cref{thm:arrowhead}.
Specifically, given $n$ rational numbers $x_1, \ldots, x_n$ and a target rational number $t$ as a \prb{$k$-Sum} instance,
we construct $n+1$ $2n$-dimensional vectors $\vec{v}_0, \ldots, \vec{v}_n $ in $\bbR_+^{2n}$,
each entry of which is defined as follows:
\begin{align}
\label{eq:arrowhead:vector}
v_0(j) =
\begin{cases}
\gamma \cdot \sqrt{x_j} & \text{if } j \leq n, \\
0 & \text{otherwise},
\end{cases}
v_i(j) =
\begin{cases}
\sqrt{\alpha \cdot \rme^{x_i}} & \text{if } j = i, \\
\sqrt{\beta \cdot \rme^{x_i}} & \text{if } j = i+n, \\
0 & \text{otherwise},
\end{cases}
\text{ for all } i \in [n],
\end{align}
where $\alpha$, $\beta$, and $\gamma$ are parameters,
whose values are positive and will be determined later.
We calculate the principal minor of the Gram matrix defined by $\vec{v}_0, \ldots, \vec{v}_n$ as follows.

\begin{lemma}
\label{lem:arrowhead:Gram-det}
Let $\mat{A}$ be the Gram matrix defined by
$n+1$ vectors $\vec{v}_0, \ldots, \vec{v}_n$
that are constructed from an instance of \prb{$k$-Sum} by \cref{eq:arrowhead:vector}.
Then, $\mat{A}$ is an arrowhead matrix, and
for any set $S \subseteq [0 \isep n]$, it holds that
\begin{align}
\det(\mat{A}_S) =
\begin{dcases}
(\alpha + \beta)^{\abs{S}-1} \cdot \gamma^2 \cdot \exp\left(\sum_{i \in S_{-0}}x_i\right) \cdot
\left(1-\frac{\alpha}{\alpha+\beta} \sum_{i \in S_{-0}} x_i\right)&
\text{if } 0 \in S, \\
(\alpha + \beta)^{\abs{S}} \cdot \exp\left(\sum_{i \in S} x_i\right) & \text{if } 0 \not\in S.
\end{dcases}
\end{align}
Moreover, if we regard the principal minor $\det(\mat{A}_S)$ in the case of $0 \in S$
as a function in $X \triangleq \sum_{i \in S_{-0}} x_i$,
it is maximized when $X = \frac{\beta}{\alpha}$.
\end{lemma}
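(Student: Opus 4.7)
The plan is to compute the Gram matrix entries explicitly, verify that it is arrowhead, and then invoke \cref{lem:arrowhead:det} to obtain the two closed-form expressions; the optimization claim is then a one-variable calculus exercise.

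First, I would compute the inner products. The diagonal entry $A_{0,0} = \|\vec{v}_0\|^2 = \gamma^2 \sum_{j \in [n]} x_j$ equals $\gamma^2$ since $\sum_j x_j = 1$ by the normalization in \cref{def:k-sum2}. For $i \in [n]$, the two supports of $\vec{v}_i$ lie in coordinates $i$ and $i+n$, so $A_{i,i} = \alpha \rme^{x_i} + \beta \rme^{x_i} = (\alpha+\beta)\rme^{x_i}$. The off-diagonal entry $A_{0,i}$ picks up only the $i$-th coordinate, giving $A_{0,i} = \gamma \sqrt{\alpha x_i \rme^{x_i}}$. Crucially, for distinct $i, j \in [n]$, the vectors $\vec{v}_i$ and $\vec{v}_j$ are supported on pairwise disjoint coordinates $\{i, i+n\}$ and $\{j, j+n\}$, so $A_{i,j} = 0$. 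This proves $\mat{A}$ is arrowhead.

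Next I would substitute these entries into \cref{lem:arrowhead:det}. When $0 \notin S$, the matrix $\mat{A}_S$ is diagonal and the formula reduces to $\prod_{i \in S} (\alpha + \beta)\rme^{x_i} = (\alpha + \beta)^{|S|} \exp(\sum_{i \in S} x_i)$. When $0 \in S$, I would compute
\begin{align}
\prod_{i \in S_{-0}} A_{i,i} = (\alpha+\beta)^{|S|-1} \exp\left(\sum_{i \in S_{-0}} x_i\right),
\end{align}
and
\begin{align}
A_{0,0} - \sum_{i \in S_{-0}} \frac{A_{0,i}^2}{A_{i,i}}
= \gamma^2 - \sum_{i \in S_{-0}} \frac{\gamma^2 \alpha x_i \rme^{x_i}}{(\alpha+\beta)\rme^{x_i}}
= \gamma^2 \left(1 - \frac{\alpha}{\alpha+\beta} \sum_{i \in S_{-0}} x_i\right),
\end{align}
where the cancellation of $\rme^{x_i}$ is the point of placing $\rme^{x_i}$ in both the $i$-th and $(i+n)$-th coordinates. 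Multiplying gives the claimed expression.

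Finally, for the optimization claim, I would treat $f(X) = \rme^X \bigl(1 - \tfrac{\alpha}{\alpha+\beta} X\bigr)$ as a smooth function of $X \in \bbR$ (the sign prefactor $(\alpha+\beta)^{|S|-1}\gamma^2$ is positive and irrelevant). Differentiating yields $f'(X) = \rme^X \bigl(1 - \tfrac{\alpha}{\alpha+\beta}(X+1)\bigr)$, which vanishes precisely at $X = \tfrac{\alpha+\beta}{\alpha} - 1 = \tfrac{\beta}{\alpha}$, and the second-derivative sign (or direct inspection that $f \to -\infty$ as $X \to \infty$ and $f > 0$ near $0$) confirms this is the unique maximum. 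No step here looks hard; the only place to be careful is the bookkeeping of exponents of $(\alpha+\beta)$ and $\rme^{x_i}$ between the two cases $0 \in S$ and $0 \notin S$, which changes by one because omitting row/column $0$ removes one diagonal factor.
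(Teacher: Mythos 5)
Your proposal is correct and follows essentially the same route as the paper: compute the Gram entries (noting the disjoint supports force $A_{i,j}=0$ for distinct $i,j\in[n]$), substitute into \cref{lem:arrowhead:det}, and find the critical point $X=\beta/\alpha$ by differentiating $\rme^X\bigl(1-\tfrac{\alpha}{\alpha+\beta}X\bigr)$. Your added remark confirming that the critical point is indeed a maximum is a small but welcome extra beyond what the paper writes down.
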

\begin{proof}
Observe first that the inner product between each pair of the vectors (i.e., each entry of $\mat{A}$) is calculated as follows:
\begin{align}
\begin{aligned}
    \langle \vec{v}_0, \vec{v}_0 \rangle & = A_{0,0} = \gamma^2 \sum_{i \in [n]} x_i = \gamma^2, \\
    \langle \vec{v}_0, \vec{v}_i \rangle & = A_{0,i} = A_{i,0} = \gamma \cdot \sqrt{\alpha \cdot x_i \cdot \rme^{x_i}} & \text{ for all } i \in [n], \\
    \langle \vec{v}_i, \vec{v}_i \rangle & = A_{i,i} = (\alpha + \beta) \cdot \rme^{x_i} & \text{ for all } i \in [n], \\
    \langle \vec{v}_i, \vec{v}_j \rangle & = A_{i,j} = 0 & \text{ for all } i \neq j \in [n].
\end{aligned}
\end{align}
Thus, $\mat{A}$ is an arrowhead matrix.
According to \cref{lem:arrowhead:det}, for any set $S \subseteq [0 \isep n]$ including $0$, we have
\begin{align}
\label{eq:arrowhead:Gram-det-0}
\begin{aligned}
    \det(\mat{A}_S)
    & = \prod_{i \in S_{-0}} A_{i,i} \cdot
    \left(A_{0,0} - \sum_{i \in S_{-0}} \frac{A_{0,i} \cdot A_{i,0}}{A_{i,i}} \right)
    \\
    & = \left(\prod_{i \in S_{-0}} (\alpha + \beta) \cdot \rme^{x_i}\right)\cdot 
    \left(\gamma^2 - \sum_{i \in S_{-0}} \frac{\gamma^2 \cdot \alpha \cdot x_i \cdot \rme^{x_i}}{(\alpha + \beta) \cdot \rme^{x_i}} \right) 
    \\
    & = (\alpha + \beta)^{\abs{S}-1} \cdot \gamma^2 \cdot \exp\left(\sum_{i \in S_{-0}} x_i\right) \cdot \left(1-\frac{\alpha}{\alpha + \beta} \sum_{i \in S_{-0}}x_i\right).
\end{aligned}
\end{align}
On the other hand, if $0 \not \in S$, we have
\begin{align}
    \det(\mat{A}_S) = 
    \prod_{i \in S} A_{i,i} = (\alpha + \beta)^{\abs{S}} \cdot \exp\left(\sum_{i \in S} x_i\right).
\end{align}
Setting the derivative of \cref{eq:arrowhead:Gram-det-0} by a variable $X \triangleq \sum_{i \in S} x_i$ equal to $0$,
we obtain
\begin{align}
\begin{aligned}
& \frac{\partial}{\partial X} \left\{ (\alpha + \beta)^{\abs{S}-1} \cdot \gamma^2 \cdot \rme^X \cdot \left(1-\frac{\alpha}{\alpha + \beta} X\right) \right\} = 0 \\
\implies &
\rme^X \left(1-\frac{\alpha}{\alpha + \beta} X\right) + \rme^X \left(-\frac{\alpha}{\alpha + \beta}\right) = \rme^X \cdot \frac{\beta - \alpha X}{\alpha + \beta} = 0 \\
\implies & X = \frac{\beta}{\alpha}.
\end{aligned}
\end{align}
This completes the proof.
\end{proof}

We now determine the values of $\alpha$, $\beta$, and $\gamma$.
Since \cref{lem:arrowhead:Gram-det} demonstrates that
the principal minor for $S$ including $0$ is maximized when
$ \sum_{i \in S_{-0}} x_i = \frac{\beta}{\alpha}$,
we fix $\alpha \triangleq 1$ and $\beta \triangleq t$.
We define $ \delta \triangleq \frac{1}{n^{2k+1}}$,
denoting a lower bound on the \emph{minimum possible absolute difference} between
any sum of $x_i$'s; i.e.,
$\abs{\sum_{i \in S}x_i - \sum_{i \in T}x_i} \geq \delta$ for any $S, T \subseteq [n]$
whenever $\sum_{i \in S}x_i \neq \sum_{i \in T}x_i$.
For the correctness of the value of $\delta$, refer to the definition of \cref{def:k-sum2}.
We finally fix the value of $\gamma$ as $\gamma \triangleq 5$,
so that
\begin{align}
\label{eq:arrowhead:gamma}
    (1+t)^2 \cdot \rme^{1-t} \cdot \frac{1}{\rme^{-\delta} \cdot (1+\delta)}
    \leq 2^2 \cdot \rme \cdot \frac{1}{\rme^{-\frac{1}{2}} \cdot 1}
    < 25 = \gamma^2.
\end{align}
The above inequality ensures that 
$\det(\mat{A}_S)$ is ``sufficiently'' small whenever $0 \not\in S$,
as validated in the following lemma.

\begin{lemma}
\label{lem:arrowhead:OPT}
Let $\mat{A}$ be the Gram matrix defined by $n+1$ vectors
constructed according to \cref{eq:arrowhead:vector},
where $\alpha=1$, $\beta=t$, and $\gamma=5$.
Define $\OPT \triangleq (1+t)^{k-1} \cdot \gamma^2 \cdot \rme^t$.
Then, for any set $S \in {[0\isep n] \choose k+1}$, 
\begin{align}
    \det(\mat{A}_S)
    \text{ is }
    \begin{cases}
        \text{equal to } \OPT & \text{if } 0 \in S \text{ and } \sum_{i \in S_{-0}}x_i = t, \\
        \text{at most } \rme^{-\delta} (1+\delta) \cdot \OPT & \text{otherwise,}
    \end{cases}
\end{align}
where $\delta = \frac{1}{n^{2k+1}}$.
In particular,
$\maxdet(\mat{A},k+1)$ is
$\OPT$ if \prb{$k$-Sum} has a solution, and
is at most $\rme^{-\delta} (1+\delta) \cdot \OPT < \OPT$ otherwise.
\end{lemma}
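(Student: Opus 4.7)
The plan is to substitute $\alpha = 1$, $\beta = t$, $\gamma = 5$, and $|S| = k+1$ into \cref{lem:arrowhead:Gram-det}, then compare $\det(\mat{A}_S)$ with $\OPT$ in three cases: (i)~$0 \in S$ with $X \triangleq \sum_{i \in S_{-0}} x_i = t$; (ii)~$0 \in S$ with $X \neq t$; and (iii)~$0 \notin S$. Case~(i) will be immediate by substitution: one checks that $\det(\mat{A}_S) = (1+t)^k \gamma^2 \rme^t (1 - t/(1+t)) = (1+t)^{k-1}\gamma^2 \rme^t = \OPT$, confirming the first branch of the lemma.

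For case~(ii), I would write $\det(\mat{A}_S) = (1+t)^k \gamma^2 \cdot f(X)$ with $f(X) \triangleq \rme^X(1 - X/(1+t))$, so that $\OPT = (1+t)^k \gamma^2 f(t)$. Since $f'(X) = \rme^X(t-X)/(1+t)$, the function $f$ is strictly increasing on $[0,t]$ and strictly decreasing on $[t, 1+t]$, attaining its unique maximum at $t$. The granularity condition in \cref{def:k-sum2} guarantees that all subset sums of the $x_i$'s lie on a common lattice of spacing at least $\delta = n^{-2k-1}$, and (WLOG the same goes for $t$, else the instance is trivially negative) hence $|X-t| \geq \delta$ whenever $X \neq t$. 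By the monotonicity of $f$, we get $f(X) \leq \max\{f(t-\delta), f(t+\delta)\}$, and direct calculation gives
\begin{equation*}
    \frac{f(t-\delta)}{f(t)} = \rme^{-\delta}(1+\delta), \qquad \frac{f(t+\delta)}{f(t)} = \rme^{\delta}(1-\delta).
\end{equation*}
A short calculus check (both expressions coincide at $\delta = 0$ and the derivative of their difference has a fixed sign) confirms $\rme^{\delta}(1-\delta) \leq \rme^{-\delta}(1+\delta)$ for all $\delta \geq 0$, yielding $\det(\mat{A}_S) \leq \rme^{-\delta}(1+\delta)\cdot\OPT$ in case~(ii).

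For case~(iii), \cref{lem:arrowhead:Gram-det} gives $\det(\mat{A}_S) = (1+t)^{k+1}\rme^X$ with $X \leq \sum_i x_i = 1$, so
\begin{equation*}
    \frac{\det(\mat{A}_S)}{\OPT} \leq \frac{(1+t)^2 \rme^{1-t}}{\gamma^2},
\end{equation*}
and the carefully calibrated inequality~\cref{eq:arrowhead:gamma}---the very reason $\gamma = 5$ was selected---rearranges to exactly $(1+t)^2 \rme^{1-t}/\gamma^2 \leq \rme^{-\delta}(1+\delta)$. The claim about $\maxdet(\mat{A}, k+1)$ then follows cleanly: in a yes-instance a witness $S^\star$ gives $S = \{0\} \cup S^\star$ falling into case~(i) with value exactly $\OPT$; in a no-instance no $k$-subset of $[n]$ sums to $t$, so every $(k+1)$-subset lies in case~(ii) or~(iii) and is bounded by $\rme^{-\delta}(1+\delta)\cdot\OPT$, which is strictly less than $\OPT$ since $\rme^{-\delta}(1+\delta) < 1$ for $\delta > 0$. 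The only mildly delicate step I anticipate is the elementary calculus comparison inside case~(ii); all other reasoning is straightforward bookkeeping with \cref{lem:arrowhead:Gram-det}.
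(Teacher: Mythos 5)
Your proposal is correct and follows essentially the same route as the paper: substitute the parameter values into \cref{lem:arrowhead:Gram-det}, verify the exact value in the case $0\in S$ with $\sum_{i\in S_{-0}}x_i=t$, bound the case $X=t\pm\Delta$ via the monotone decay of $\rme^{-\Delta}(1+\Delta)$ and $\rme^{\Delta}(1-\Delta)$ together with $\Delta\geq\delta$ (your unimodality-of-$f$ phrasing is just a repackaging of this), and dispatch the $0\notin S$ case with \cref{eq:arrowhead:gamma}. The elementary comparison $\rme^{\delta}(1-\delta)\leq\rme^{-\delta}(1+\delta)$ that you flag as the delicate step is indeed the same one the paper uses implicitly when taking the maximum of the two bounds.
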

\begin{proof}
For any set $S \in {[0 \isep n] \choose k+1}$
such that $0 \in S$ and $\sum_{i \in S_{-0}} x_i = t$
(i.e., \prb{$k$-Sum} has a solution),
\cref{lem:arrowhead:Gram-det} derives that
\begin{align}
\det(\mat{A}_S) = (\alpha + \beta)^k \cdot \gamma^2 \cdot \exp(t) \cdot \left(1-\frac{\alpha}{\alpha + \beta} \cdot t\right) = \OPT,
\end{align}
which is the maximum possible principal minor under $0 \in S$.
For any set $S \in {[n] \choose k+1}$ excluding $0$,
by definition of $\gamma$ and \cref{eq:arrowhead:gamma},
we obtain
\begin{align}
\label{eq:arrowhead:OPT:0}
\begin{aligned}
\det(\mat{A}_S) & = (1+t)^{k+1} \cdot \exp\left(\sum_{i \in S_{-0}}x_i\right) \\
    & \leq (1+t)^{k+1} \cdot \rme \cdot
    \left\{ \gamma^2 \cdot \underbrace{\frac{1}{(1+t)^2} \cdot \rme^{t-1} \cdot \rme^{-\delta} \cdot (1+\delta)}_{\geq \gamma^{-2}} \right\}
    \\
    & = \rme^{-\delta}(1+\delta) \cdot \underbrace{(1+t)^{k-1} \cdot \rme^t \cdot \gamma^2}_{= \OPT} \\
    & = \rme^{-\delta} (1+\delta) \cdot \OPT.
\end{aligned}
\end{align}
We now bound $\det(\mat{A}_S)$
for any set $S \in {[0 \isep n] \choose k+1}$ such that
$0 \in S$ and $\sum_{i \in S_{-0}} x_i \neq t$.
Consider first that $\sum_{i \in S_{-0}} x_i$ is greater than $t$;
i.e., $\sum_{i \in S_{-0}} x_i = t + \Delta$ for some $\Delta > 0$.
By \cref{lem:arrowhead:Gram-det}, we have
\begin{align}
\label{eq:arrowhead:OPT:plus}
\begin{aligned}
    \det(\mat{A}_S) & = (1+t)^k \cdot \gamma^2 \cdot \exp(t+\Delta) \cdot \left(1-\frac{t+\Delta}{1+t}\right) \\
    & = \rme^{\Delta}  (1-\Delta) \cdot \OPT \leq \rme^{-\delta}  (1-\delta) \cdot \OPT,
\end{aligned}
\end{align}
where we used the fact that
$\Delta \geq \delta$ by definition of $\delta$ and
$\rme^{\Delta} (1-\Delta)$ is a decreasing function for $\Delta > 0$.
Consider then that
$\sum_{i \in S_{-0}} x_i = t - \Delta$ for some $\Delta > 0$,
which yields that
\begin{align}
\label{eq:arrowhead:OPT:minus}
\begin{aligned}
\det(\mat{A}_S) & = (1+t)^k \cdot \gamma^2 \cdot \exp(t-\Delta) \cdot \left(1-\frac{t-\Delta}{1+t}\right) \\
    & = \rme^{-\Delta}  (1+\Delta) \cdot \OPT \leq \rme^{-\delta} (1+\delta) \cdot \OPT,
\end{aligned}
\end{align}
where we used the fact that $\Delta \geq \delta$ and
$\rme^{-\Delta} (1+\Delta)$ is a decreasing function for $\Delta > 0$.
By combining 
\cref{eq:arrowhead:OPT:0,eq:arrowhead:OPT:plus,eq:arrowhead:OPT:minus},
if $S\in {[0 \isep n] \choose k+1} $ satisfies that 
$0 \not\in S$ or $\sum_{i \in S_{-0}} x_i \neq t$,
its principal minor is bounded as follows:
\begin{align}
    \det(\mat{A}_S)
    \leq \max\Bigl\{ \rme^{-\delta}  (1+\delta), \rme^{\delta}  (1-\delta) \Bigr\} \cdot \OPT
    = \rme^{-\delta}(1+\delta) \cdot \OPT.
\end{align}
Observing that $\rme^{-\delta}(1+\delta) < 1$ for any $\delta > 0$ accomplishes the proof.
\end{proof}

We complete our reduction by
approximating the Gram matrix $\mat{A}$ of $n+1$ vectors defined in \cref{eq:arrowhead:vector}
by a \emph{rational} matrix $\mat{B}$
whose maximum determinant maintains sufficient information to solve  \prb{$k$-Sum}.

\begin{lemma}
\label{lem:arrowhead:approx}
Let $\mat{B}$ be the Gram matrix in $\bbQ^{(n+1) \times (n+1)}$ defined by
$n+1$ vectors $\vec{w}_0, \ldots, \vec{w}_n$ in $\bbQ^{2n}$,
each entry of which is a $(1 \pm \epsilon)$-factor approximation to
the corresponding entry of $n+1$ vectors $\vec{v}_0, \ldots, \vec{v}_n$ defined by \cref{eq:arrowhead:vector},
where $\epsilon = 2^{-\bigO(k \log (nk))}$.
Then,
\begin{align}
\maxdet(\mat{B}, k+1)
\text{ is }
\begin{dcases}
\text{at least } \left(\frac{2}{3} + \frac{1}{3} \rme^{-\delta} (1+\delta)\right) \cdot \OPT & \text{if \prb{$k$-Sum} has a solution,} \\
\text{at most } \left(\frac{1}{3} + \frac{2}{3} \rme^{-\delta} (1+\delta)\right) \cdot \OPT & \text{otherwise.}
\end{dcases}
\end{align}
Moreover, we can calculate $\mat{B}$ in polynomial time.
\end{lemma}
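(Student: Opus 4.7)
The plan is to show that the multiplicative $(1 \pm \epsilon)$-error in the coordinates of $\vec{w}_i$ propagates to an additive perturbation of the principal minors that is strictly smaller than one third of the absolute gap $\OPT \cdot (1 - \rme^{-\delta}(1+\delta))$ established in \cref{lem:arrowhead:OPT}; once this is achieved, the two claimed bounds on $\maxdet(\mat{B},k+1)$ follow by simply adding or subtracting this error to the YES-value $\OPT$ and the NO-upper-bound $\rme^{-\delta}(1+\delta)\OPT$ respectively. Positive semi-definiteness of $\mat{B}$ is automatic, since $\mat{B}$ is defined as a Gram matrix of explicit rational vectors $\vec{w}_0, \ldots, \vec{w}_n$.

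First I would verify that each coordinate $\vec{v}_i(j)$, which is either $0$ or a product of constants with $\sqrt{x_j}$, $\sqrt{\alpha \rme^{x_i}}$, or $\sqrt{\beta \rme^{x_i}}$, admits a rational $(1 \pm \epsilon)$-approximation computable in time polynomial in the bit-length of the \prb{$k$-Sum} instance and in $\log(1/\epsilon)$. A truncated Taylor series for $\rme^{x_i}$ (which converges geometrically since $x_i \in (0,1)$), composed with Newton iteration for the square root, suffices; each uses $\bigO(\log(1/\epsilon))$ bits. Since we will set $\epsilon = 2^{-\bigO(k\log(nk))}$, the bit complexity stays polynomial in the input size.

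Next I would push the error into the Gram matrix. Because all coordinates of $\vec{v}_i$ and $\vec{w}_i$ are nonnegative, each inner product satisfies $B_{i,j} \in [(1-\epsilon)^2, (1+\epsilon)^2] \cdot A_{i,j}$, so $|B_{i,j} - A_{i,j}| \leq 3\epsilon M$ for $\epsilon$ small enough, where $M \triangleq \|\mat{A}\|_{\max}$. By inspection of \cref{lem:arrowhead:Gram-det}, the entries of $\mat{A}$ are bounded by $\max\{\gamma^2, (\alpha+\beta)\rme\} \leq 25$, i.e., $M = \bigO(1)$. A telescoping identity applied to the Leibniz formula then gives, for every $S \in {[0\isep n] \choose k+1}$,
\[
    \abs{\det(\mat{B}_S) - \det(\mat{A}_S)} \;\leq\; (k+1)\cdot(k+1)! \cdot (2M)^k \cdot 3\epsilon M \;=\; 2^{\bigO(k\log k)} \cdot \epsilon.
\]

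Finally I would calibrate $\epsilon$. Expanding $\rme^{-\delta}(1+\delta) = 1 - \delta^2/2 + \bigO(\delta^3)$ yields $1 - \rme^{-\delta}(1+\delta) = \Omega(\delta^2) = \Omega(n^{-(4k+2)})$. Since $\OPT = (1+t)^{k-1}\gamma^2\rme^t \geq 25$, the absolute gap from \cref{lem:arrowhead:OPT} is at least $\Omega(\OPT / n^{4k+2}) \geq 2^{-\bigO(k\log n)}$. Choosing $\epsilon = 2^{-Ck\log(nk)}$ for a sufficiently large constant $C$ therefore forces $2^{\bigO(k\log k)} \cdot \epsilon \leq \tfrac{1}{3}(1 - \rme^{-\delta}(1+\delta)) \cdot \OPT$, and the two claimed inequalities on $\maxdet(\mat{B},k+1)$ drop out immediately by combining this perturbation bound with the two cases of \cref{lem:arrowhead:OPT}. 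The main obstacle is purely bookkeeping: the gap shrinks exponentially in $k$ and polynomially in $n$, so one must match the precision $\Omega(k\log(nk))$ carefully against both the $(k+1)!$ blow-up of the determinant perturbation and the $\Theta(nk\log n)$-bit input size of the \prb{$k$-Sum} instance to keep everything polynomial.
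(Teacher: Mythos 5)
Your proposal is correct and follows the same overall strategy as the paper: show the perturbation in each principal minor is strictly less than one third of the gap $\OPT\cdot(1-\rme^{-\delta}(1+\delta))$ from \cref{lem:arrowhead:OPT}, bound $\|\mat{A}\|_{\max}$ by a constant, observe $|B_{i,j}-A_{i,j}|\leq 3\epsilon\|\mat{A}\|_{\max}$, and calibrate $\epsilon = 2^{-\bigO(k\log(nk))}$ against the $\Omega(\delta^2)$ lower bound on $1-\rme^{-\delta}(1+\delta)$. The one technical point where you diverge is the determinant-perturbation bound itself: the paper invokes a spectral perturbation inequality (\cref{lem:perturbation}, Bhatia's $\abs{\det\mat{A}-\det\mat{B}}\leq n\max\{\|\mat{A}\|_2,\|\mat{B}\|_2\}^{n-1}\|\mat{A}-\mat{B}\|_2$ combined with the $\|\cdot\|_2\leq n\|\cdot\|_{\max}$ conversion), whereas you establish $\abs{\det(\mat{B}_S)-\det(\mat{A}_S)}\leq(k+1)\cdot(k+1)!\cdot(2M)^k\cdot 3\epsilon M$ directly from multilinearity of the determinant (column-wise telescoping plus a crude Leibniz estimate on each difference term). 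Both routes yield a $2^{\bigO(k\log k)}\cdot\epsilon$ bound, so the calibration is unaffected; your telescoping argument is more elementary and self-contained, at the cost of a somewhat larger constant in the exponent, while the paper's citation is shorter. Everything else — nonnegativity of the coordinates giving two-sided control on the inner products, the $\|\mat{A}\|_{\max}\leq 25$ computation, the estimate $1-\rme^{-\delta}(1+\delta)\geq\delta^2/2-\delta^3/3$, the lower bound $\OPT\geq\gamma^2$, and the Taylor-series/Newton (resp.\ bisection) computation of the entries in time polynomial in the input size and $\log\epsilon^{-1}$ — matches the paper's reasoning.
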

The crux of its proof is to approximate $\mat{A}$ within a factor of $\epsilon = 2^{-\bigO(k \log (nk))}$.
To this end, we use the following lemma.

\begin{lemma}[cf.~\protect{\cite[page~107]{bhatia2007perturbation}}]
\label{lem:perturbation}
For two complex-valued $n \times n$ matrices $\mat{A}$ and $\mat{B}$,
the absolute difference in the determinant of $\mat{A}$ and $\mat{B}$ is bounded from above by
\begin{align}
    \abs{\det(\mat{A}) - \det(\mat{B})} \leq
    n \cdot \max\{\|\mat{A}\|_2, \|\mat{B}\|_2\}^{n-1} \cdot \|\mat{A} - \mat{B}\|_2.
\end{align}
\end{lemma}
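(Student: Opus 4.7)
The plan is to use a column-by-column telescoping argument combined with Hadamard's inequality. Define the interpolating matrices $\mat{C}_0, \mat{C}_1, \ldots, \mat{C}_n$ where $\mat{C}_k$ agrees with $\mat{A}$ in its first $k$ columns and with $\mat{B}$ in its last $n-k$ columns; thus $\mat{C}_0 = \mat{B}$ and $\mat{C}_n = \mat{A}$. Expressing the determinant difference as a telescoping sum,
\begin{align}
    \det(\mat{A}) - \det(\mat{B}) = \sum_{k=1}^n \bigl[\det(\mat{C}_k) - \det(\mat{C}_{k-1})\bigr],
\end{align}
reduces the problem to controlling each single-column perturbation.

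For each $k$, the matrices $\mat{C}_k$ and $\mat{C}_{k-1}$ differ only in their $k$-th column: $\mat{C}_k$ has $\mat{a}_k$ (the $k$-th column of $\mat{A}$) there, while $\mat{C}_{k-1}$ has $\mat{b}_k$. By multilinearity of the determinant in columns, $\det(\mat{C}_k) - \det(\mat{C}_{k-1}) = \det(\mat{D}_k)$, where $\mat{D}_k$ is obtained from $\mat{C}_k$ by replacing its $k$-th column with $\mat{a}_k - \mat{b}_k$, i.e.~the $k$-th column of $\mat{E} \triangleq \mat{A} - \mat{B}$.

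Next I would bound $\abs{\det(\mat{D}_k)}$ by Hadamard's inequality, which states that $\abs{\det(\mat{M})} \leq \prod_{j=1}^n \|\mat{m}^{(j)}\|$ for any complex $n \times n$ matrix $\mat{M}$ with columns $\mat{m}^{(j)}$. The $k$-th column of $\mat{D}_k$ is $\mat{E}\mat{e}_k$, so its Euclidean norm is at most $\|\mat{E}\|_2 = \|\mat{A} - \mat{B}\|_2$. Each of the remaining $n-1$ columns is a column of either $\mat{A}$ or $\mat{B}$, and a column of a matrix $\mat{M}$ has norm $\|\mat{M}\mat{e}_j\| \leq \|\mat{M}\|_2$ by the definition of the spectral norm. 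Hence
\begin{align}
    \abs{\det(\mat{D}_k)} \leq \max\{\|\mat{A}\|_2, \|\mat{B}\|_2\}^{n-1} \cdot \|\mat{A} - \mat{B}\|_2.
\end{align}

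Summing over the $n$ terms and applying the triangle inequality yields exactly the claimed bound. There is no real obstacle: the argument is purely linear-algebraic, and the only subtlety is remembering that Hadamard's inequality and the column-norm bound $\|\mat{M}\mat{e}_j\| \leq \|\mat{M}\|_2$ both hold over $\bbC$, so the statement for complex matrices comes for free.
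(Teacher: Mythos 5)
Your proof is correct. Note that the paper itself offers no proof of this lemma: it is quoted as a known perturbation bound from Bhatia's book, so there is nothing internal to compare against. Your argument is essentially the textbook one, with one elementary twist. The telescoping decomposition through the hybrid matrices $\mat{C}_0,\ldots,\mat{C}_n$ and the reduction of each increment to a single determinant $\det(\mat{D}_k)$ via column multilinearity is exactly how the classical bound is obtained; where Bhatia bounds each term by viewing the determinant as the induced operator on the $n$-th antisymmetric tensor power and using submultiplicativity of the induced norms, you instead apply Hadamard's inequality together with the column-norm bound $\|\mat{M}\vec{e}_j\| \leq \|\mat{M}\|_2$. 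Both yield the same per-term estimate $\max\{\|\mat{A}\|_2,\|\mat{B}\|_2\}^{n-1}\|\mat{A}-\mat{B}\|_2$, and your route has the advantage of needing nothing beyond multilinearity and Hadamard, both valid over $\bbC$, so the argument is self-contained and fully rigorous.
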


\begin{proof}[Proof of \cref{lem:arrowhead:approx}]
Let $n$ rational numbers $x_1, \ldots, x_n$,
a target rational number $t$, and
a positive integer $k$
be an instance of \prb{$k$-Sum}.
Suppose we are given the Gram matrix $\mat{A}$ defined by $n+1$ vectors $\vec{v}_0, \ldots, \vec{v}_n$
constructed according to \cref{eq:arrowhead:vector} and
the rational Gram matrix $\mat{B}$ defined by $n+1$ rational vectors $\vec{w}_0, \ldots, \vec{w}_n$,
each entry of which is a $(1\pm\epsilon)$-factor approximation to the corresponding entry of $\vec{v}_i$'s.
If the absolute difference between
$\mat{A}_S$ and $\mat{B}_S$ is at most
$ \frac{1}{3}( \OPT - \rme^{-\delta}(1+\delta) \cdot \OPT) $ for
every $S \in {[0 \isep n] \choose k+1}$,
we can use \cref{lem:arrowhead:OPT} to ensure that
\begin{align}
    \det(\mat{B}_S) & \geq \left(\frac{2}{3} + \frac{1}{3} \rme^{-\delta}(1+\delta)\right) \cdot \OPT
    \; \text{ if } 0 \in S \text{ and } \sum_{i \in S_{-0}}x_i = t,
    \label{eq:arrowhead:approx:det-inequ-0}
    \\
    \det(\mat{B}_S) & \leq \left(\frac{1}{3} + \frac{2}{3} \rme^{-\delta}(1+\delta)\right) \cdot \OPT
    \; \text{ otherwise}.
\label{eq:arrowhead:approx:det-inequ-1}
\end{align}
In particular, we can use either
the optimal value or solution for \prb{Determinant Maximization} 
defined by $(\mat{B},k+1)$
to determine whether \prb{$k$-Sum} has a solution.

We demonstrate that this is the case if
$\epsilon = 2^{-\bigO(k \log(nk))}$.
Owing to the nonnegativity of $\vec{v}_i$'s and $\vec{w}_i$'s,
we have
$(1-\epsilon) v_i(e) \leq w_i(e) \leq (1+\epsilon) v_i(e)$ for every $i \in [0 \isep n]$ and $e \in [2n]$,
implying that:
\begin{align}
\begin{aligned}
& (1-\epsilon)^2 \cdot v_i(e) v_j(e) \leq w_i(e)w_j(e) \leq (1+\epsilon)^2 \cdot v_i(e) v_j(e) \\
& \implies (1-\epsilon)^2 \cdot
\underbrace{\langle \vec{v}_i, \vec{v}_j \rangle}_{=A_{i,j}} \leq
\underbrace{\langle \vec{w}_i, \vec{w}_j \rangle}_{=B_{i,j}} \leq
(1+\epsilon)^2 \cdot \underbrace{\langle \vec{v}_i, \vec{v}_j \rangle}_{=A_{i,j}}.
\end{aligned}
\end{align}
Because it holds that 
$(1+\epsilon)^2 \leq 1+3\epsilon$ and 
$(1-\epsilon)^2 \geq 1-3\epsilon$ for any $\epsilon \in (0,\frac{1}{3})$,
there exists a number $\rho_{i,j} \in [1-3\epsilon, 1+3\epsilon]$
such that $B_{i,j} = \rho_{i,j} \cdot A_{i,j}$ for each $i,j \in [0 \isep n]$.
By applying \cref{lem:perturbation},
we can bound the absolute difference between
the determinant of $\mat{A}_S$ and $\mat{B}_S$
for any set $S \in {[0\isep n] \choose k+1}$ as:
\begin{align}
    \abs{\det(\mat{A}_S) - \det(\mat{B}_S)}
    \leq (k+1) \cdot \max\{ \|\mat{A}_S\|_2, \|\mat{B}_S\|_2 \}^k \cdot
    \|\mat{A}_S - \mat{B}_S\|_2.
\end{align}
Each term in the above inequality can be bounded as follows:
\begin{align}
    \|\mat{A}_S\|_2 & \leq (k+1) \cdot \|\mat{A}_S\|_{\max}
    \leq (k+1) \cdot \|\mat{A}\|_{\max}, \\
    \begin{split}
    \|\mat{B}_S\|_2 & \leq (k+1) \cdot \|\mat{B}_S\|_{\max} \\
    & \leq (k+1) \cdot \|\mat{A}\|_{\max} \cdot \max_{i,j} \abs{\rho_{i,j}} \\
    & \leq (k+1) (1+3\epsilon) \cdot \|\mat{A}\|_{\max},
    \end{split} \\
    \begin{split}
    \|\mat{A}_S - \mat{B}_S\|_2 & \leq (k+1) \cdot \|\mat{A}_S - \mat{B}_S\|_{\max} \\
    & \leq (k+1) \cdot \max_{i,j \in S} \abs{A_{i,j} - \rho_{i,j}A_{i,j}} \\
    & \leq (k+1) 3\epsilon \cdot \|\mat{A}\|_{\max}.
    \end{split}
\end{align}
Here, $\|\mat{A}\|_{\max}$ is bounded using its definition
(see the beginning of the proof of \cref{lem:arrowhead:Gram-det}):
\begin{align}
\begin{aligned}
    \|\mat{A}\|_{\max} & = \max\left\{\max_{i \in [n]} \; (1+t) \cdot \rme^{x_i}, \gamma^2, \max_{i \in [n]} \; \gamma \sqrt{\alpha \cdot x_i \cdot \rme^{x_i}} \right\} \\
    & \leq \max\Bigl\{ 2 \rme, 25, 5 \sqrt{\rme} \Bigr\} = 25.
\end{aligned}
\end{align}
Putting it all together, we get
\begin{align}
\begin{aligned}
    \abs{\det(\mat{A}_S) - \det(\mat{B}_S)}
    & \leq (k+1) \cdot \{ (k+1) \cdot 25 \cdot (1+3\epsilon) \}^k \cdot (k+1) \cdot 25 \cdot 3\epsilon \\
    & = (k+1)^{k+2} \cdot 25^{k+1} \cdot (1+3\epsilon)^k \cdot 3\epsilon.
\end{aligned}
\end{align}
Therefore, for the absolute difference of the determinant
between $\mat{A}_S$ and $\mat{B}_S$ to be less than
$\frac{1}{3}(1 - \rme^{-\delta}(1+\delta)) \cdot \OPT$,
the value of $\epsilon$ should be less than
\begin{align}
\label{eq:arrowhead:approx:eps-inequ}
    \epsilon < \Bigl\{ (k+1)^{k+2} \cdot 25^{k+1} \cdot (1+3\epsilon)^k \cdot 3 \Bigr\}^{-1} \cdot \frac{1}{3} (1-\rme^{-\delta} (1+\delta))\cdot \OPT.
\end{align}
Observe that each term in \cref{eq:arrowhead:approx:eps-inequ} can be bounded as follows:
\begin{align}
    \Bigl\{ (k+1)^{k+2} \cdot 25^{k+1} \cdot (1+3\epsilon)^k \cdot 3 \Bigr\}^{-1} & > (100000k)^{-k} \;\text{ for all } k \geq 1 \text{ if } \epsilon \leq 1, \\
    1-\rme^{-\delta}(1+\delta) & \geq \frac{\delta^2}{2} - \frac{\delta^3}{3} > 0 \;\text{ for any } \delta > 0, \\
    \OPT & \geq 1.
\end{align}
Consequently, we can set the value of $\epsilon$ so as to satisfy
\cref{eq:arrowhead:approx:eps-inequ}; thus, \cref{eq:arrowhead:approx:det-inequ-0,eq:arrowhead:approx:det-inequ-1}:
\begin{align}
    \epsilon \triangleq (100000k)^{-k} \cdot \left(\frac{\delta^2}{2} - \frac{\delta^3}{3} \right)
    = 2^{-\bigO(k \log (nk))}.
\end{align}
We finally claim that each entry of $\vec{w}_i$'s can be computed in polynomial time.
Because of the definition of $\vec{v}_i$'s in \cref{eq:arrowhead:vector},
it suffices to compute a $(1\pm \frac{\epsilon}{2})$-approximate value of $\exp(x)$ and $\sqrt{x}$ for a rational number $x$
in polynomial time in the input size and
$\log \epsilon^{-1} = \bigO(k \log (nk))$, completing the proof.\footnote{
Indeed, we can simply take the sum of the first $\bigO(\log \epsilon^{-1} + x)$ terms of a Taylor series of $\exp(x)$
to calculate its $(1 \pm \epsilon)$-approximation;
we can use a bisection search, 
of which number of iterations is at most
$\bigO(\log \epsilon^{-1} + \log x^{-1})$,
to approximate $\sqrt{x}$ within a $(1 \pm \epsilon)$-factor.
}
\end{proof}

What remains to be done is to prove \cref{thm:arrowhead} using \cref{lem:arrowhead:approx}.
\begin{proof}[Proof of \cref{thm:arrowhead}]
Our parameterized reduction is as follows.
Given $n$ rational numbers $x_1, \ldots, x_n \in (0,1) \cap \bbQ$, a target rational number $t \in (0,1) \cap \bbQ$, and a positive integer $k \in [n]$
as an instance of \prb{$k$-Sum},
we construct
$n+1$ rational vectors $\vec{w}_0, \ldots, \vec{w}_n$ in $\bbQ_+^{2n}$,
each of which is an entry-wise $(1 \pm \epsilon)$-factor approximation
to $\vec{v}_0, \ldots, \vec{v}_n$ defined by \cref{eq:arrowhead:vector}, where $\epsilon = 2^{-\bigO(k \log (nk))}$.
This construction requires polynomial time owing to \cref{lem:arrowhead:approx}.
Thereafter, we compute the Gram matrix $\mat{B}$ in $\bbQ^{(n+1) \times (n+1)}$ defined by $\vec{w}_0, \ldots, \vec{w}_n$.
Consider \prb{Determinant Maximization} defined by $(\mat{B},k+1)$
with parameter $k+1$.
According to \cref{lem:arrowhead:approx},
the maximum principal minor
$\maxdet(\mat{B},k+1)$
is at least
$(\frac{2}{3} + \frac{1}{3}\rme^{-\delta}(1+\delta)) \cdot \OPT$ if and only if \prb{$k$-Sum} has a solution.
Moreover, if this is the case,
the optimal solution $S^*$ for \prb{Determinant Maximization}
satisfies that $\sum_{i \in S^*_{-0}} x_i = t$.
The above discussion ensures the correctness of the parameterized reduction from \prb{$k$-Sum} to \prb{Determinant Maximization}, finishing the proof.
\end{proof}

\subsection{Note on Polynomial-time Solvability for Tridiagonal Matrices and Spiders of Bounded Legs \cite{al-thani2021tridiagonal,al-thani2021tridiagonala}}
\label{subsec:tridiagonal}
Here, we mention some tractable cases of \prb{Determinant Maximization} due to \citet{al-thani2021tridiagonal,al-thani2021tridiagonala}.
Recall that a \emph{tridiagonal matrix} is a square matrix $\mat{A}$ such that $A_{i,j} = 0$ whenever $\abs{i-j} \geq 2$;
i.e.,
its symmetrized graph is a path graph (and thus a linear forest).
A graph is called a \emph{spider} if it is a tree having 
at most one vertex of degree greater than $2$,
and its leafs are called \emph{legs}.

\begin{observation}[\protect{\citet{al-thani2021tridiagonal,al-thani2021tridiagonala}}]
\label{obs:tridiagonal}
\prb{Determinant Maximization} can be solved in polynomial time
if an input matrix is a tridiagonal matrix, or
its symmetrized graph is a spider with a constant number of legs.
\end{observation}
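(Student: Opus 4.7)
The plan rests on one structural observation. For any $S \subseteq [n]$ and any matrix $\mat{A}$ whose symmetrized graph is $G$, the principal submatrix $\mat{A}_S$ is block diagonal with blocks indexed by the connected components of the induced subgraph $G[S]$. Consequently $\det(\mat{A}_S) = \prod_{C} \det(\mat{A}_C)$ over these components, and \prb{Determinant Maximization} reduces to choosing a family of vertex-disjoint connected subsets of $G$ of total size exactly $k$ that maximises the product of block-determinants. When $G$ is a path (tridiagonal case), these connected subsets are intervals; when $G$ is a spider with $p$ legs, they are intervals inside a single leg, together with at most one component that contains the centre $c$ and a contiguous run into each leg.

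For the tridiagonal case I would first precompute $\det(\mat{A}_{[a,b]})$ for every interval $[a,b] \subseteq [n]$ in $\bigO(n^2)$ time via the classical three-term recurrence for tridiagonal determinants, then run a left-to-right dynamic program with state $f(i,j)$ equal to the maximum product achievable using exactly $j$ selected indices from $[1,i]$ under the constraint that $i \notin S$. A transition either advances to $i+1$ without selecting it, or opens a new interval $[i+1,i+\ell]$ and multiplies in the precomputed $\det(\mat{A}_{[i+1,i+\ell]})$. This yields an $\bigO(n^2 k)$-time algorithm. For the spider case I would run a leaf-to-centre DP on each leg $L_q$ analogous to the tridiagonal DP, augmented with one extra parameter $\ell_q$ that records the length of the (still open) run adjacent to $c$; let $g_q(j_q,\ell_q)$ denote its value. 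I would then combine at $c$ by enumerating whether $c \in S$ and, if so, the interface lengths $(\ell_1,\ldots,\ell_p)$: for each of the $\bigO(n^p)$ joint choices I multiply the $g_q(j_q,\ell_q)$ by the determinant of the single block obtained by gluing $c$ to its $p$ innermost intervals, subject to $\sum_q j_q + [\![c \in S]\!] = k$.

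The main obstacle is the evaluation of that central glued block. It has an arrowhead-of-tridiagonals sparsity pattern: $p$ tridiagonal strings sharing the single apex $c$. Its determinant is not a product of the per-leg determinants, but a Schur complement at $c$ produces an expression similar in spirit to \cref{lem:arrowhead:det}, writing the determinant of the whole block as the product of the $p$ inner tridiagonal determinants times $A_{c,c}$ minus a sum of correction terms built from $A_{c,v}$ for the neighbours $v$ of $c$ and from minors of those tridiagonal blocks; all of these quantities are precomputable in polynomial time. Because the number of legs is constant, the enumeration over $(\ell_1,\ldots,\ell_p)$ contributes only a polynomial factor, so the entire algorithm runs in polynomial time; without the bounded-legs hypothesis this interface enumeration would blow up to $n^{\Theta(p)}$, which is exactly where the assumption is used.
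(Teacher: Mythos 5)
The paper gives no proof of \cref{obs:tridiagonal}: the author explicitly removed the tridiagonal argument from this version and defers entirely to the cited works of Al-Thani and Lee, so there is nothing in the text to compare against line by line. Judged on its own merits, your argument is sound and is essentially the standard route (and, in spirit, the one the cited works take): the block-diagonal decomposition $\det(\mat{A}_S)=\prod_C\det(\mat{A}_C)$ over connected components of $G[S]$ is correct, the components are intervals on a path and ``intervals plus one glued centre component'' on a spider, and the interval DP with precomputed tridiagonal minors gives polynomial time. Two small remarks. First, the step ``max of a product equals product of maxes over independent blocks'' silently uses that every principal minor of a positive semi-definite matrix is nonnegative; for a general symmetric tridiagonal matrix the product DP would not have optimal substructure, so it is worth saying this explicitly. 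Second, the ``main obstacle'' you identify in the spider case is not really an obstacle: for each fixed interface $(\ell_1,\dots,\ell_p)$ the glued centre component is an explicitly given $\bigO(n)\times\bigO(n)$ principal submatrix of $\mat{A}$, and its determinant can simply be computed by Gaussian elimination in polynomial time; no Schur-complement formula or arrowhead-style expansion is needed (though the one you sketch, in the spirit of \cref{lem:arrowhead:det}, is also valid provided you handle the case where an inner tridiagonal block is singular, which for PSD matrices forces the glued determinant to be $0$ by Fischer's inequality). The bounded-legs hypothesis is used exactly where you say it is, in the $n^{\Theta(p)}$ interface enumeration.
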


\section{\W{1}-hardness With Respect to Rank}
\label{sec:rank}

We then prove the \W{1}-hardness of
\prb{Determinant Maximization} when parameterized by the \emph{rank} of an input matrix.
In fact, we obtain the stronger hardness result
on the problem of finding a set of pairwise orthogonal rational vectors,
which is formally stated below.

\begin{problem}\label{prb:orthogonal}
Given $n$ $d$-dimensional vectors $\vec{v}_1, \ldots, \vec{v}_n$ in $\bbQ^d$ and a positive integer $k \in [n]$,
we are required to decide if there exists a set of $k$ vectors that is pairwise orthogonal, i.e.,
a set $S \in {[n] \choose k}$ such that
$\langle \vec{v}_i, \vec{v}_j \rangle = 0$ for all $i \neq j \in S$.
\end{problem}

\begin{theorem}\label{thm:rank}
\cref{prb:orthogonal} is \W{1}-hard when parameterized by 
the dimension $d$ of the input vectors.
Moreover, the same hardness result holds even if 
every vector has the same Euclidean norm.
\end{theorem}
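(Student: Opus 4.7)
The plan is to give a parameterized reduction from the \W{1}-hard \prb{Grid Tiling} problem \cite{marx2007optimality,marx2012tight}, whose parameter is the grid side length $k$. Given an instance consisting of nonempty sets $C_{i,j} \subseteq [M] \times [M]$ arranged in a $k \times k$ grid, we would construct a family of rational vectors in $\bbQ^d$ with $d = \bigO(k^2)$ such that a set of $k^2$ pairwise orthogonal vectors exists if and only if the tiling instance is satisfiable. Since $k^2$ is a function of $k$, this certifies \W{1}-hardness with respect to both the dimension $d$ and the ``solution size'' $k^2$.

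The coordinate space is partitioned into three kinds of blocks: (i) one \emph{cell indicator} coordinate per cell $(i,j)$, whose role is to force any two vectors drawn from the same cell to be non-orthogonal; (ii) a two-dimensional \emph{edge block} for each of the $2k(k-1)$ horizontal and vertical grid adjacencies, encoding the consistency constraint across that adjacency; and (iii) cell-exclusive \emph{padding coordinates} used solely to equalize Euclidean norms. The total dimension is $\bigO(k^2)$, independent of $M$ and of the instance size. For each $(a,b) \in C_{i,j}$ we construct a vector $\vec{v}_{i,j,(a,b)}$ that places a $1$ in the cell indicator of $(i,j)$, and in every incident edge block embeds a rational rotated Pythagorean encoding of the relevant coordinate — $b$ for horizontal edges and $a$ for vertical ones. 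Concretely, fixing the rational parametrization
\[
    (p_v, q_v) \triangleq \left( \frac{(M+1)^2 - v^2}{(M+1)^2 + v^2}, \; \frac{2v(M+1)}{(M+1)^2 + v^2} \right)
\]
of points in the open first quadrant of the unit circle, and orienting each edge $e$ from a chosen ``source'' endpoint to a ``target'' endpoint, the source-side vector places $(p_v, q_v)$ in the $e$-block while the target-side vector places $(-q_v, p_v)$.

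A direct computation then shows that the inner product in the $e$-block between a source-side and a target-side vector equals $q_v p_{v'} - p_v q_{v'}$, which vanishes iff $q_v/p_v = q_{v'}/p_{v'}$, i.e., iff $v = v'$, since $v \mapsto q_v/p_v$ is strictly monotone on this parametrization. A case analysis of any two distinct vectors $\vec{v}_{i,j,(a,b)}$ and $\vec{v}_{i',j',(a',b')}$ then yields: when $(i,j) = (i',j')$ the cell indicator alone contributes $+1$ while every edge-block overlap contributes $p_{v_1} p_{v_2} + q_{v_1} q_{v_2} > 0$, so the pair is never orthogonal; when $(i,j)$ and $(i',j')$ are non-adjacent there is no shared nonzero coordinate and orthogonality is automatic; and when they are adjacent along some edge $e$, the sole nonzero contribution is $q_v p_{v'} - p_v q_{v'}$, which vanishes iff the shared coordinate matches. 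Consequently any set of $k^2$ pairwise orthogonal vectors must contain exactly one vector from each cell, with matching coordinates across every adjacency, which is precisely a valid \prb{Grid Tiling} solution; the converse is immediate. For the equal-norm strengthening — since corner cells have $2$ incident edges, boundary cells $3$, and interior cells $4$, the squared norms initially range over $\{3,4,5\}$ — we append a few cell-exclusive unit-norm dummy coordinates to each vector so that every vector attains a common squared norm; these dummies contribute nothing to inner products between different cells and so preserve the analysis above.

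The main obstacle will be the design of the edge-block encoding, which must reconcile two conflicting requirements: the ambient dimension has to depend only on $k$ (so we cannot afford a coordinate per value $v \in [M]$), yet orthogonality across an edge must precisely detect equality of arbitrary integer values drawn from a range that grows with the input. The rotated rational Pythagorean construction resolves this in a \emph{constant}-dimensional block per edge by exploiting the injectivity of $v \mapsto q_v/p_v$ on the first quadrant; the remaining bookkeeping — choosing a consistent orientation for each edge and padding to equalize norms without disturbing the orthogonality structure — is then routine.
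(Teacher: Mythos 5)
Your reduction is correct and follows essentially the same strategy as the paper's: reduce from \prb{Grid Tiling} by giving each vector $\bigO(k^2)$ coordinates split into constant-size edge blocks, and encode cross-edge consistency by pairing a rational point on the unit circle (from a Pythagorean-triple parametrization) with its $90^\circ$ rotation so that the edge-block inner product vanishes exactly when the shared coordinate matches. Your minor deviations from the paper --- a non-toroidal grid with norm-equalizing padding instead of the paper's toroidal variant (where squared norm $4$ comes for free), a dedicated cell-indicator coordinate in place of the paper's observation that same-cell edge blocks already contribute a strictly positive dot product, and injectivity via monotonicity of $q_v/p_v$ instead of the paper's coprimality argument on the primitive triple $(2x+1,\,2x^2+2x,\,2x^2+2x+1)$ --- are all sound and do not change the structure of the argument.
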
\noindent
The following is immediate from \cref{thm:rank}.
\begin{corollary}
\label{cor:rank}
\prb{Determinant Maximization} is \W{1}-hard when parameterized by the rank of an input matrix.
\end{corollary}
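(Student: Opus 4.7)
The plan is to derive \cref{cor:rank} by an almost immediate parameterized reduction from \cref{prb:orthogonal} to \prb{Determinant Maximization}, exploiting the Gram-matrix/volume correspondence of \cref{eq:pre:det-vol} together with Hadamard's inequality. Given an instance $(\vec{v}_1,\ldots,\vec{v}_n,k)$ of \cref{prb:orthogonal} with $\vec{v}_i \in \bbQ^d$, I would form the Gram matrix $\mat{A} \in \bbQ^{n \times n}$ defined by $A_{i,j} \triangleq \langle \vec{v}_i, \vec{v}_j \rangle$. This $\mat{A}$ is symmetric and positive semi-definite, computable in polynomial time, and has $\rank(\mat{A}) \leq d$, so the new parameter is bounded by a function (indeed the identity) of the source parameter, as a parameterized reduction demands.

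For correctness, I would crucially invoke the uniform-norm refinement stated in the second sentence of \cref{thm:rank}: we may assume every $\vec{v}_i$ has the same Euclidean norm $c$, so that $A_{i,i} = c^2$ for all $i \in [n]$. Hadamard's inequality applied to each principal submatrix $\mat{A}_S$ for $S \in {[n] \choose k}$ then yields
\begin{align}
\det(\mat{A}_S) \leq \prod_{i \in S} A_{i,i} = c^{2k},
\end{align}
with equality if and only if $\{\vec{v}_i : i \in S\}$ is a pairwise orthogonal set. Hence $\maxdet(\mat{A},k) = c^{2k}$ precisely when the given \cref{prb:orthogonal} instance is a yes-instance, and $\maxdet(\mat{A},k) < c^{2k}$ otherwise. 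Consequently, deciding \prb{Determinant Maximization} on $(\mat{A},k)$ against the explicit threshold $c^{2k}$ resolves the orthogonality question, and any FPT algorithm for \prb{Determinant Maximization} parameterized by $\rank(\mat{A})$ would pull back to an FPT algorithm for \cref{prb:orthogonal} parameterized by $d$, contradicting \cref{thm:rank}.

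No step here presents a real obstacle; the uniform-norm clause in \cref{thm:rank} was clearly arranged precisely so that Hadamard's inequality supplies a single threshold $c^{2k}$ rather than an $S$-dependent bound, turning the reduction into a one-line corollary. The only items worth double-checking are mechanical: that the Gram construction runs in polynomial time in the bit-length of the input vectors, and that the parameter transfer $\rank(\mat{A}) \leq d$ is clean, both of which are automatic.
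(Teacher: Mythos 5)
Your proposal is correct and matches the paper's own proof of \cref{cor:rank}: both form the Gram matrix of the equal-norm vectors from \cref{thm:rank}, observe that $\det(\mat{A}_S) = c^{2k}$ exactly when the selected vectors are pairwise orthogonal and is strictly smaller otherwise (the paper leaves Hadamard's inequality implicit), and conclude via $\rank(\mat{A}) \leq d$.
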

\begin{proof}
Let $\mat{A}$ be the Gram matrix defined by
any $n$ $d$-dimensional vectors $\vec{v}_1, \ldots, \vec{v}_n \in \bbQ^d$ having the same Euclidean norm, say, $c \in \bbQ_+$.
Consider \prb{Determinant Maximization} defined by $(\mat{A}, k)$.
For any $S \in {[n] \choose k}$,
the principal minor $\det(\mat{A}_S)$ is equal to $c^{2k}$ if
the set of $k$ vectors $\{\vec{v}_i : i \in S\}$ is pairwise orthogonal and
is strictly less than $c^{2k}$ otherwise.
Observing that $\rank(\mat{A}) \leq d$ completes the proof.
\end{proof}

Unlike the proof of \cref{thm:arrowhead},
$ f(k) $-dimensional vectors can only be used in a parameterized reduction.
The key tool to bypass this difficulty
is \prb{Grid Tiling} introduced in the next subsection.

\subsection{{\normalfont \prb{Grid Tiling}} and Pythagorean Triples}
We first define \prb{Grid Tiling} due to 
\citet{marx2007optimality}.

\begin{table}
    \centering
    \begin{tabular}{|c|c|c|}
        \hline
        $S_{1,1}$ & $S_{2,1}$ & $S_{3,1}$ \\
        $(1,1)$ & $\color{magenta}{\mathbf{(1,2)}}$ & $(1,2)$ \\
        $\color{magenta}{\mathbf{(3,2)}}$ & $(2,2)$ & $\color{magenta}{\mathbf{(4,2)}}$ \\
        \hline
        $S_{1,2}$ & $S_{2,2}$ & $S_{3,2}$ \\
        $\color{magenta}{\mathbf{(3,4)}}$ & $\color{magenta}{\mathbf{(1,4)}}$ & $(2,1)$ \\
        $(4,3)$ & $(3,1)$ & $\color{magenta}{\mathbf{(4,4)}}$ \\
        \hline
        $S_{1,3}$ & $S_{2,3}$ & $S_{3,3}$ \\
        $\color{magenta}{\mathbf{(3,2)}}$ & $\color{magenta}{\mathbf{(1,2)}}$ & $(3,3)$ \\
        $(4,1)$ & $(1,3)$ & $\color{magenta}{\mathbf{(4,2)}}$ \\
        \hline
    \end{tabular}
    \caption{
    Example of \prb{Grid Tiling} with $k=3$ and $n=4$.
    The colored pairs form a solution; i.e.,
    $\sigma(1,1)=(3,2)$,
    $\sigma(2,1)=(1,2)$,
    $\sigma(3,1)=(4,2)$,
    $\sigma(1,2)=(3,4)$,
    $\sigma(2,2)=(1,4)$,
    $\sigma(3,2)=(4,4)$,
    $\sigma(1,3)=(3,2)$,
    $\sigma(2,3)=(1,2)$, and
    $\sigma(3,3)=(4,2)$.
    Observe that
    the first coordinate of the selected pairs in the first column is $3$,
    the second coordinate of the selected pairs in the first row is $2$, and so on.
    }
    \label{tab:grid-tiling}
\end{table}

\begin{problem}[\prb{Grid Tiling} due to 
\citet{marx2007optimality}]
\label{def:grid-tiling}
For two integers $n$ and $k$,
given a collection $\calS$ of $k^2$ nonempty sets $S_{i,j} \subseteq [n]^2$ for each $i,j \in [k]$,
\prb{Grid Tiling} asks to find
an assignment $\sigma \colon [k]^2 \to [n]^2$
with $\sigma(i,j) \in S_{i,j}$ such that
\begin{enumerate}
\item vertical neighbors agree in the first coordinate; i.e.,
if $\sigma(i,j) = (x,y)$ and
$\sigma(i,(j+1) \bmod k) = (x',y')$, then $x = x'$, and
\item horizontal neighbors agree in the second coordinate; i.e.,
if $\sigma(i,j) = (x,y)$ and $\sigma((i+1) \bmod k,j) = (x',y')$, then $y = y'$,
\end{enumerate}
where we define
$(k+1) \bmod k \triangleq 1$, and hereafter omit the symbol $\mathrm{mod}$ for modulo operator.
Each pair $(i,j) \in [k]^2$ will be referred to as a \emph{cell}.
\end{problem}

See also \cref{tab:grid-tiling} for an example.
\prb{Grid Tiling} parameterized by $k$ is proven to be \W{1}-hard by \citet{marx2007optimality,marx2012tight}.
We say that two cells $(i_1,j_1)$ and $(i_2,j_2)$ are \emph{adjacent} if
the Manhattan distance between them is $1$.
Let $\calI$ be the set of all pairs of two adjacent cells; i.e.,
\begin{align}
\label{eq:rank:calI}
\calI \triangleq \Bigl\{ (i_1,j_1,i_2,j_2) \in [k]^4 : \abs{i_1-i_2} + \abs{j_1-j_2} = 1
\Bigr\}.
\end{align}
Note that $\abs{\calI} = 2k^2$.
\prb{Grid Tiling} has the two useful properties that 
(i) the constraint to be satisfied is the equality on the first and second coordinates, which is pretty simple, and
(ii) there are only $k^2$ cells and
each cell is adjacent to (at most) \emph{four} cells.
To represent the \emph{consistency}
between adjacent cells using only $f(k)$-dimensional vectors,
we use a rational point
$(\frac{a}{c}, \frac{b}{c})$ on the unit circle
generated from a Pythagorean triple $(a,b,c)$.
A \emph{Pythagorean triple} is
a triple of three positive integers $(a,b,c)$ such that
$a^2 + b^2 = c^2$; e.g., $(a,b,c) = (3,4,5)$.
It is further said to be \emph{primitive} if $(a,b,c)$ are coprime; i.e.,
$\gcd(a,b) = \gcd(b,c) = \gcd(c,a) = 1$.
We assume for a while that we have $n$ primitive Pythagorean triples, denoted $(a_1, b_1, c_1), \ldots, (a_n, b_n, c_n)$.

\subsection{Reduction from {\normalfont \prb{Grid Tiling}} and Proof of \cref{thm:rank}}
We are now ready to describe a parameterized reduction from
\prb{Grid Tiling} to \cref{prb:orthogonal}.
Given an instance $\calS = (S_{i,j})_{i,j \in [k]}$ of \prb{Grid Tiling},
we define a rational vector for each $(x,y) \in S_{i,j}$,
whose dimension is bounded by some function in $k$.
Each vector consists
of $\abs{\calI} = 2k^2$ blocks (indexed by an element of $\calI$),
each of which is two dimensional and
is either a rational point on the unit circle or the origin $\mathrm{O}$.
Hence, each vector is of dimension $2\abs{\calI} = 4k^2$.
Let $\vec{v}^{(i,j)}_{x,y}$ denote the vector for
an element $(x,y) \in S_{i,j}$ of cell $(i,j) \in [k]^2$,
let $\vec{v}^{(i,j)}_{x,y}(i_1,j_1,i_2,j_2)$ denote
the block of $\vec{v}^{(i,j)}_{x,y}$ corresponding to
each pair of adjacent cells $(i_1,j_1,i_2,j_2) \in \calI$.
Each block is defined as follows:

\begin{align}
\label{eq:rank:def}
\vec{v}^{(i,j)}_{x,y}(e) \triangleq
\begin{dcases}
    \Large\left[-\frac{b_x}{c_x}, \frac{a_x}{c_x}\right] & \text{if } e = (i,j-1,i,j), \\
    \Large\left[\frac{a_x}{c_x}, \frac{b_x}{c_x}\right] & \text{if } e = (i,j,i,j+1), \\
    \Large\left[-\frac{b_y}{c_y}, \frac{a_y}{c_y}\right] & \text{if } e = (i-1,j,i,j), \\
    \Large\left[\frac{a_y}{c_y}, \frac{b_y}{c_y}\right] & \text{if } e = (i,j,i+1,j), \\
    [0,0] & \text{otherwise}.
\end{dcases}
\end{align}
Because each vector contains exactly four points on the unit circle,
its squared norm is equal to $4$.
We denote by $\mat{V}^{(i,j)}$
the set of vectors corresponding to the elements of $S_{i,j}$; i.e.,
$
    \mat{V}^{(i,j)} \triangleq \{ \vec{v}^{(i,j)}_{x,y} : (x,y) \in S_{i,j} \}.
$
We now define an instance $(\mat{V}, K)$ of \cref{prb:orthogonal} as
$
    \mat{V} \triangleq \bigcup_{i,j \in [k]} \mat{V}^{(i,j)}
    \text{ and }
    K \triangleq k^2.
$
Note that $\mat{V}$ consists of  $N \triangleq \sum_{i,j \in [k]}\abs{S_{i,j}}$ vectors.
We prove that
the existence of a set of pairwise orthogonal $k^2$ vectors yields
the answer of \prb{Grid Tiling}.
The key property of the above construction is that 
$\left[-\frac{b_x}{c_x}, \frac{a_x}{c_x}\right]$ and $\left[\frac{a_{x'}}{c_{x'}}, \frac{b_{x'}}{c_{x'}}\right]$
are orthogonal if and only if $x = x'$.

\begin{lemma}
\label{lem:rank:orthogonal}
Let $\mat{V}$ be the set of vectors constructed from an instance
$\calS = (S_{i,j})_{i,j \in [k]}$ of
\prb{Grid Tiling} according to \cref{eq:rank:def}.
Then, \prb{Grid Tiling} has a solution if and only if
\cref{prb:orthogonal} has a solution.
\end{lemma}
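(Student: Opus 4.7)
The plan is to prove both directions by analyzing the inner product $\langle \vec{v}^{(i_1,j_1)}_{x_1,y_1}, \vec{v}^{(i_2,j_2)}_{x_2,y_2} \rangle$ viewed as a sum over the $2k^2$ two-dimensional blocks. The crucial first observation is that by \cref{eq:rank:def}, the vector $\vec{v}^{(i,j)}_{x,y}$ is nonzero only in the (at most) four blocks indexed by the edges of $\calI$ incident to the cell $(i,j)$. Consequently, for two vectors coming from different cells, only a \emph{shared} edge can contribute to the inner product: this happens in exactly one block when the cells are adjacent, and in zero blocks (so the inner product is automatically zero) when the cells are non-adjacent and distinct.

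I would then carry out three short inner-product calculations. For two vectors from the \emph{same} cell $(i,j)$, all four supporting blocks coincide, and direct substitution gives
\begin{align*}
\langle \vec{v}^{(i,j)}_{x_1,y_1}, \vec{v}^{(i,j)}_{x_2,y_2} \rangle
 = \frac{2(a_{x_1}a_{x_2}+b_{x_1}b_{x_2})}{c_{x_1}c_{x_2}} + \frac{2(a_{y_1}a_{y_2}+b_{y_1}b_{y_2})}{c_{y_1}c_{y_2}} > 0,
\end{align*}
so two distinct same-cell vectors are never orthogonal. For two horizontally adjacent cells $(i,j)$ and $(i+1,j)$, only the shared block $(i,j,i+1,j)$ contributes, yielding $(b_{y_1}a_{y_2}-a_{y_1}b_{y_2})/(c_{y_1}c_{y_2})$, which vanishes iff $a_{y_1}/b_{y_1} = a_{y_2}/b_{y_2}$; by primitivity of the Pythagorean triples (so $\gcd(a_y,b_y)=1$) together with the distinctness of the ordered ratios $a_y/b_y$ over $y \in [n]$, this equality is equivalent to $y_1=y_2$. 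The vertical-adjacency case is symmetric and yields $x_1=x_2$.

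With these three facts in hand, both directions are essentially routine. For the forward direction, given an assignment $\sigma$ solving \prb{Grid Tiling}, the $k^2$ vectors $\{\vec{v}^{(i,j)}_{\sigma(i,j)}\}_{(i,j)\in[k]^2}$ are pairwise orthogonal: non-adjacent pairs trivially so, and adjacent pairs precisely because $\sigma$ satisfies the coordinate-agreement constraints. For the backward direction, the same-cell computation forces at most one selected vector per cell among any set of $k^2$ pairwise orthogonal vectors in $\mat{V}$, so by counting each of the $k^2$ cells supplies exactly one vector, defining $\sigma\colon[k]^2\to[n]^2$ with $\sigma(i,j)\in S_{i,j}$; the adjacent-cell computation then immediately forces both horizontal and vertical agreement.

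The main obstacle I anticipate is the same-cell positivity calculation, which is the lynchpin of the backward direction: the strict positivity depends on the deliberate sign pattern in \cref{eq:rank:def}, engineered so that the ``incoming'' block $[-b/c, a/c]$ and the ``outgoing'' block $[a/c, b/c]$ each pair up into the positive combination $a_{\cdot}a_{\cdot}+b_{\cdot}b_{\cdot}$. A secondary technical point, easily handled via Euclid's parameterization of primitive Pythagorean triples, is to verify that one can indeed choose $n$ such triples whose ordered ratios $a_x/b_x$ are all distinct, which is what lets orthogonality recover equality of the underlying indices rather than merely of some ratio.
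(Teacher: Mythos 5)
Your proposal is correct and follows essentially the same route as the paper: analyze the inner product block by block, observe that non-adjacent cells are automatically orthogonal, establish strict positivity for same-cell pairs to force one vector per cell, and use the single shared block for adjacent cells to show orthogonality is equivalent to coordinate agreement. The only cosmetic difference is in how you recover $y_1=y_2$ from $a_{y_1}b_{y_2}=a_{y_2}b_{y_1}$: you invoke distinctness of the reduced ratios $a_y/b_y$, whereas the paper argues via mutual divisibility using coprimality within each primitive triple to conclude $a_{y_1}=a_{y_2}$, $b_{y_1}=b_{y_2}$; these are equivalent once the triples are pairwise distinct, which both you and the paper (via the explicit choice $(a_x,b_x,c_x)=(2x+1,2x^2+2x,2x^2+2x+1)$ in the proof of \cref{thm:rank}) take care to arrange.
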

\begin{proof}
We first prove the only-if direction.
Suppose the \prb{Grid Tiling} instance $\calS$ has a solution denoted
$\sigma \colon [k]^2 \to [n]^2$.
We show that the set
$\mat{S} \triangleq \{\vec{v}^{(i,j)}_{x,y} : i,j \in [k], (x,y) = \sigma(i,j) \}$ of $k^2$ vectors is pairwise orthogonal.
Observe easily that any two vectors corresponding to \emph{nonadjacent} cells are orthogonal.
We then verify the orthogonality of two vectors corresponding to vertically adjacent cells $(i,j)$ and $(i,j+1)$ for any $i,j \in [k]$.
Calculating the inner product between $\vec{v}^{(i,j)}_{x,y}$ and $\vec{v}^{(i,j+1)}_{x,y'}$ in $\mat{S}$,
where $(x,y) = \sigma(i,j)$ and $(x,y') = \sigma(i,j+1)$ for some $x,y,y' \in [n]$ by assumption,
we obtain that
\begin{align}
    \left\langle \vec{v}^{(i,j)}_{x,y}, \vec{v}^{(i,j+1)}_{x,y'} \right\rangle
    = \left\langle \left[\frac{a_x}{c_x}, \frac{b_x}{c_x}\right], \left[-\frac{b_x}{c_x}, \frac{a_x}{c_x}\right] \right\rangle
    = \frac{a_x b_x - a_x b_x}{c_x^2} = 0.
\end{align}
Similarly,
for two horizontally adjacent cells $(i,j)$ and $(i+1,j)$,
we derive that
$ \langle \vec{v}^{(i,j)}_{x,y}, \vec{v}^{(i+1,j)}_{x',y} \rangle = 0 $,
where $(x,y) = \sigma(i,j)$ and $(x',y) = \sigma(i+1,j)$
for some $x,x',y \in [n]$ by assumption.
This accomplishes the proof for the only-if direction.

We then prove the if direction.
Suppose $\mat{S}$ is a set of $k^2$ vectors from $\mat{V}$ that is pairwise orthogonal.
Observe first that
$\mat{S}$ must include exactly one vector from each $\mat{V}^{(i,j)}$,
because otherwise it includes a pair of vectors
$\vec{v}^{(i,j)}_{x,y}$ and $\vec{v}^{(i,j)}_{x',y'}$ for
some distinct $(x,y) \neq (x',y') \in S_{i,j}$, which is nonorthogonal.
Indeed, their inner product is
\begin{align}
    \left\langle \vec{v}^{(i,j)}_{x,y}, \vec{v}^{(i,j)}_{x',y'} \right\rangle
    = 2 \cdot \frac{a_xa_{x'}+b_xb_{x'}}{c_xc_{x'}}
    + 2 \cdot \frac{a_ya_{y'}+b_yb_{y'}}{c_yc_{y'}}
    > 0.
\end{align}
We can thus define the unique assignment $\sigma(i,j) \triangleq (x,y) \in S_{i,j}$ such that
$\vec{v}^{(i,j)}_{x,y} \in \vec{S}$ for each cell $(i,j) \in [k]^2$.
We show that $\sigma$ is a solution of \prb{Grid Tiling}.
Calculating the inner product between $\vec{v}^{(i,j)}_{x,y}$ and $\vec{v}^{(i,j+1)}_{x',y'}$
for two vertically adjacent cells $(i,j)$ and $(i,j+1)$,
where $(x,y) = \sigma(i,j)$ and $(x',y') = \sigma(i,j+1)$,
we have
\begin{align}
    \left\langle \vec{v}^{(i,j)}_{x,y}, \vec{v}^{(i,j+1)}_{x',y'} \right\rangle
    = \frac{a_{x'} b_x - a_x b_{x'}}{c_x c_{x'}} = 0,
\end{align}
i.e., it must hold that $a_{x'} b_x = a_x b_{x'}$ as $c_x c_{x'} > 0$.
Since $a_x$ and $b_x$ are coprime, $a_{x'}$ must divide $a_x$ and $b_{x'}$ must divide $b_x$;
since $a_{x'}$ and $b_{x'}$ are coprime, $a_x$ must divide $a_{x'}$ and $b_x$ must divide $b_{x'}$,
implying that $a_x = a_{x'}$ and $b_x = b_{x'}$.
Consequently, $x = x'$; i.e., the vertical neighbors agree in the first coordinate.
Similarly, for two horizontally adjacent cells $(i,j)$ and $(i+1,j)$,
we can show that
$a_y = a_{y'}$ and $b_y = b_{y'}$,
where $(x,y) = \sigma(i,j)$ and $(x',y') = \sigma(i+1,j)$,
and thus $y=y'$; i.e.,
the horizontal neighbors agree in the second coordinate.
This accomplishes the proof for the if direction.
\end{proof}

\begin{proof}[Proof of \cref{thm:rank}]
Our parameterized reduction is as follows.
Given an instance $\calS = (S_{i,j})_{i,j \in [k]}$ of \prb{Grid Tiling},
we first generate
$n$ primitive Pythagorean triples $(a_1, b_1, c_1), \allowbreak \ldots, \allowbreak (a_n, b_n, c_n)$.
This can be done efficiently by simply letting
$(a_x,b_x,c_x) \triangleq (2x+1, 2x^2+2x, 2x^2+2x+1)$ for all $x \in [n]$.
We then construct a set $\mat{V}$ of $N$ $4k^2$-dimensional rational vectors from $\calS$ according to \cref{eq:rank:def} in polynomial time,
where $N \triangleq \sum_{i,j \in [k]}\abs{S_{i,j}}$.
According to \cref{lem:rank:orthogonal}, $\calS$ has a solution of \prb{Grid Tiling}
if and only if there exists a set of $k^2$ pairwise orthogonal vectors in $\mat{V}$.
Since \prb{Grid Tiling} is \W{1}-hard with respect to $k$,
\cref{prb:orthogonal} is also \W{1}-hard when parameterized by dimension $d (= 4k^2)$.
Note that every vector is of squared norm $4$,
completing the proof.
\end{proof}

\subsection{\cref{prb:orthogonal} on Nonnegative Vectors is FPT}\label{subsec:rank:cp-rank}
We note that \cref{prb:orthogonal}
is FPT with respect to the dimension if the input vectors are \emph{nonnegative}.
Briefly speaking, \cref{prb:orthogonal} on nonnegative vectors is equivalent to
\prb{Set Packing} parameterized by the size of the universe,
which is easily shown to be FPT.

\begin{observation}
\label{obs:cp-rank}
\cref{prb:orthogonal} is FPT with respect to the dimension
if every input vector is entry-wise nonnegative.
\end{observation}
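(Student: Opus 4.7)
The plan is to show that for entry-wise nonnegative inputs, orthogonality is equivalent to support-disjointness, which reduces \cref{prb:orthogonal} to \prb{Set Packing} parameterized by the universe size $d$ --- a standard FPT problem.

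First, I would observe that for any $\vec{v}, \vec{w} \in \bbQ^d$ with nonnegative entries, $\langle \vec{v}, \vec{w} \rangle = \sum_{e \in [d]} v(e) w(e)$ is a sum of nonnegative terms; hence $\langle \vec{v}, \vec{w} \rangle = 0$ if and only if $v(e) w(e) = 0$ for every coordinate $e$, i.e.\ $\mathrm{supp}(\vec{v}) \cap \mathrm{supp}(\vec{w}) = \emptyset$, where $\mathrm{supp}(\vec{v}) \triangleq \{e \in [d] : v(e) \neq 0\}$. Consequently, the existence of $k$ pairwise orthogonal input vectors is equivalent to the existence of $k$ input vectors whose supports are pairwise disjoint subsets of $[d]$. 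Zero vectors have empty support and are orthogonal to everything, so if $z$ of the inputs equal $\mathbf{0}$, we can include all of them for free and it suffices to decide whether there exist $\max(0, k - z)$ nonzero inputs with pairwise disjoint nonempty supports; if $z \geq k$ we answer yes immediately.

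Next, group the nonzero inputs by their support and let $\calF \subseteq 2^{[d]} \setminus \{\emptyset\}$ be the set of distinct realized supports, computable in $\bigO(nd)$ time. The remaining question is the \prb{Set Packing} instance: are there $k - z$ pairwise disjoint sets in $\calF$? I would solve this by the standard subset dynamic program
\begin{align*}
f(\emptyset) = 0, \quad f(U) = \max_{T \in \calF,\, T \subseteq U} \bigl(1 + f(U \setminus T)\bigr) \text{ for } \emptyset \neq U \subseteq [d],
\end{align*}
where the maximum over an empty set is taken to be $0$. Then $f([d])$ equals the maximum number of input vectors with pairwise disjoint nonempty supports, and the table is filled in $\bigO(3^d)$ time by iterating over pairs $(U, T)$ with $T \subseteq U \subseteq [d]$. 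Comparing $f([d])$ to $k - z$ resolves the instance in overall $\bigO(3^d + nd)$ time, which is FPT in $d$.

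The main obstacle is essentially nil: once the characterization ``nonnegative vectors are orthogonal iff their supports are disjoint'' is made, the rest is a routine subset dynamic program together with the well-known fact that \prb{Set Packing} is FPT when parameterized by the size of the universe.
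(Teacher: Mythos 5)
Your proof is correct and takes essentially the same route as the paper: the key observation that entry-wise nonnegativity makes orthogonality equivalent to disjoint supports, which reduces the problem to \prb{Set Packing} over the $d$-element universe. Two small points of difference are worth noting. First, the paper's proof asserts that $k > d$ can be dismissed outright ("there is always no solution"), which silently assumes no zero vectors; your explicit handling of the $z$ zero vectors and the reduction to finding $k-z$ pairwise-disjoint nonempty supports is more careful and closes that gap. Second, for the resulting \prb{Set Packing} instance the paper simply brute-forces over all $\binom{|\calF|}{k} \leq 2^{dk}$ size-$k$ subfamilies of distinct supports, giving a $2^{\bigO(d^2)}$-time algorithm, whereas your subset dynamic program fills a table over $2^{[d]}$ in $\bigO(3^d)$ time, a genuinely better running time (singly exponential in $d$ rather than in $d^2$). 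Both yield the claimed FPT bound, so the improvement does not change the statement, but your version is cleaner on both counts.
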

\begin{proof}
Let $\vec{v}_1, \ldots, \vec{v}_n$ be
$n$ $d$-dimensional nonnegative vectors in $\bbQ_+^d$ and
$k \in [n]$ a positive integer.
Without loss of generality, 
we can assume that $k \leq d$ because otherwise,
there is always no solution.
For each vector $\vec{v}_i$,
we denote by $\nz(\vec{v}_i)$
the set of coordinates of positive entries; i.e.,
$\nz(\vec{v}_i) \triangleq \{e \in [d] : v_i(e) > 0\}$.
Then,
the vector set $\{\vec{v}_i : i \in S\}$ for any $S \subseteq [n]$
is pairwise orthogonal if and only if
$\nz(\vec{v}_i) \cap \nz(\vec{v}_j) = \emptyset$
for every $i \neq j \in S$; i.e.,
the problem of interest is \prb{Set Packing} in which
we want to find $k$ pairwise disjoint sets from the family $ \calF \triangleq \{ \nz(\vec{v}_i) : i \in [n] \} $.
Observing that $\abs{\calF} \leq 2^d$ because duplicates (i.e., $\nz(\vec{v}_i) = \nz(\vec{v}_j)$ for some $i \neq j$) are discarded,
there are at most ${\abs{\calF} \choose k} \leq 2^{dk}$
possible subsets of size $k$.
Hence, we construct $\calF$ in $\bigO(nd)$ time and 
perform an exhaustive search in time $2^{dk} \cdot d^{\bigO(1)} \leq 2^{d^2} \cdot d^{\bigO(1)}$,
completing the proof.
\end{proof}

\section{\W{1}-hardness of Approximation}\label{sec:inapprox}
Our final result is FPT-inapproximability of \prb{Determinant Maximization} as stated below.

\begin{theorem}\label{thm:inapprox}
Under the Parameterized Inapproximability Hypothesis,
it is \W{1}-hard to
approximate \prb{Determinant Maximization}
within a factor of $2^{-c\sqrt{k}}$ for some universal constant $c > 0$
when parameterized by the number $k$ of vectors to be selected.
Moreover, the same hardness result holds even if
the diagonal entries of an input matrix are restricted to $1$.
\end{theorem}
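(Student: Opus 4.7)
The plan is to compose two reductions: PIH to a gap version of \prb{Grid Tiling}, and from there to \prb{Determinant Maximization} using the amplification gadget of \citet{civril2013exponential}. The first converts the constant gap asserted by PIH into a constant fraction of violated adjacency constraints in \prb{Grid Tiling}; the second converts each violated constraint into a constant multiplicative loss in the squared volume, yielding a gap exponential in the Grid Tiling parameter $k$, hence $2^{-c\sqrt{K}}$ in the \prb{Determinant Maximization} parameter $K = k^2$. For the first step, I would establish that, under PIH, there exists a universal $\epsilon > 0$ such that it is \W{1}-hard to distinguish a \prb{Grid Tiling} instance $\calS = (S_{i,j})_{i,j \in [k]}$ admitting a consistent assignment from one in which every $\sigma \colon [k]^2 \to [n]^2$ with $\sigma(i,j) \in S_{i,j}$ violates at least $\epsilon k^2$ of the $2k^2$ adjacency constraints. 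The argument follows \citeauthor{marx2007optimality}'s reduction \cite{marx2007optimality} from \prb{Binary CSP} to \prb{Grid Tiling} but starts from the gap BCSP supplied by PIH, tracking how an unsatisfied BCSP constraint forces a linear number of violated adjacency constraints in the produced Grid Tiling instance.

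For the second step, I would replace each rational vector from \cref{eq:rank:def} with a richer vector $\vec{u}^{(i,j)}_{x,y} \in \bbQ^D$ of dimension $D = \bigO(k^2 n^2)$, partitioned into $2k^2$ blocks of dimension $\bigO(n^2)$, one per adjacency pair $(i_1,j_1,i_2,j_2) \in \calI$. Within each block, Çivril's construction \cite{civril2013exponential} encodes the coordinate relevant to that adjacency so that: (i) vectors from cells not incident to the pair are zero in that block, (ii) two vectors satisfying the consistency check on the pair are orthogonal within the block, and (iii) two vectors violating it have inner product at least a constant $\delta > 0$ in absolute value. Global normalization then makes every $\vec{u}^{(i,j)}_{x,y}$ a unit vector, so the Gram matrix $\mat{A}$ has unit diagonal as required by the theorem statement.

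For the gap analysis, the YES case gives $k^2$ pairwise orthogonal unit vectors from a consistent assignment, hence $\det(\mat{A}_S) = 1$. In the NO case, any $S$ of size $k^2$ encodes an assignment violating $\Omega(k^2)$ adjacency constraints, each contributing a correlation of magnitude at least $\delta$. Using $\det(\mat{A}_S) = \vol^2(\{\vec{u}_i : i \in S\})$ from \cref{eq:pre:det-vol} and processing the vectors in an order induced by an $\bigO(1)$-edge-coloring of the bounded-degree cell adjacency graph, each violated pair can be shown to contribute a factor at most $1-\delta^2$ to one squared-distance step of \cref{eq:def-vol}, yielding $\det(\mat{A}_S) \leq (1-\delta^2)^{\Omega(k^2)} = 2^{-\Omega(k^2)}$. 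With $K = k^2$ this is a gap of $2^{-\Omega(K)} \leq 2^{-c\sqrt{K}}$ for a suitable universal $c > 0$. The main obstacle I expect is precisely this volume bound: converting pairwise ``near-unit'' correlations into a product of distance losses is subtle because projections onto the span of previously processed vectors can accumulate. The edge-coloring trick handles this by ensuring that when a new vector is added, its violated neighbors lie in disjoint previous color classes, keeping each loss essentially independent and multiplicative; this, together with the three-way structural guarantee of the gadget, is what I believe drives the proof.
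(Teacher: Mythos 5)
Your high-level plan (PIH $\to$ gap Grid Tiling $\to$ Determinant Maximization via the \citeauthor{civril2013exponential} gadget) is exactly the route the paper takes, and your description of the gadget's block structure, unit-diagonal normalization, and the use of \cref{eq:pre:det-vol} is faithful to the paper's reduction in \cref{subsec:inapprox:reduction}. However, there is a genuine error in your first step that makes the quantitative claim you derive impossible, and then needs to be repaired to land on the statement actually being proved.

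You claim that under PIH it is \W{1}-hard to distinguish a satisfiable \prb{Grid Tiling} instance from one in which \emph{every} assignment violates at least $\epsilon k^2$ of the $2k^2$ adjacencies, i.e.\ a constant fraction. This cannot be true: \cref{obs:inapprox:MGT-approx} gives, for every constant $\epsilon>0$, a polynomial-time algorithm (running in $\epsilon^2 k^2 n^{\bigO(1/\epsilon^2)}$ time) that finds an assignment whose consistency is within $\epsilon k^2$ of optimal. Running that algorithm with $\epsilon' < \epsilon$ would distinguish your two cases in polynomial time, so your gap version of \prb{Grid Tiling} lies in \cP and hence cannot be \W{1}-hard. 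The deeper reason is that the \prb{BCSP}-to-\prb{Grid Tiling} reduction is lossy: a Grid Tiling assignment with few inconsistent adjacencies need not induce a full \prb{BCSP} assignment at all; one only recovers a valid \prb{BCSP} assignment on the $\geq (1-\delta)k$ ``clean'' rows/columns. Following this through (\cref{lem:inapprox:MGT-W1}) yields an additive gap of only $\delta k$, i.e.\ a $\Theta(1/k)$ fraction of the $2k^2$ adjacencies, which is tight in light of \cref{obs:inapprox:MGT-approx}. Consequently your final $\det(\mat{A}_S) \leq (1-\delta^2)^{\Omega(k^2)} = 2^{-\Omega(K)}$ overclaims; with the correct $\delta k$ gap one obtains $2^{-\Omega(k)} = 2^{-\Omega(\sqrt{K})}$, which is precisely what the theorem asserts and no more.

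A second, smaller gap: your volume analysis implicitly assumes the selected $k^2$ vectors hit all $k^2$ cells. If not, some cells are skipped (contributing duplicates), and one must separately argue that selecting multiple non-orthogonal vectors from a single cell already collapses the volume (this is \cref{lem:inapprox:volume-rep}, using Fischer's inequality and the fact that two vectors in the same cell have inner product $\geq 2$), and then balance the two regimes via the $\dup(\mat{S}) \lessgtr \gamma k$ case split in the proof of \cref{lem:inapprox:reduction}. Your edge-coloring idea is an acceptable substitute for the paper's arbitrary-ordering argument in \cref{lem:inapprox:volume-incons}/\cref{lem:inapprox:incons} — both lose the same factor-$4$ from the degree bound — but without the duplicate-handling lemma the argument is incomplete. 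Once the Grid Tiling gap is corrected to $\delta k$ and the duplicate case is added, the rest of your sketch matches the paper's proof in substance.
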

Since the above result relies on the Parameterized Inapproximability Hypothesis,
\cref{subsec:inapprox:PIH-MGT} begins with its formal definition.

\subsection{Inapproximability of {\normalfont \prb{Grid Tiling}} under Parameterized Inapproximability Hypothesis}\label{subsec:inapprox:PIH-MGT}
We first introduce \prb{Binary Constraint Satisfaction Problem},
for which the Parameterized Inapproximability Hypothesis
asserts FPT-inapproximability.
For two integers $n$ and $k$,
we are given
a set $V \triangleq [k]$ of $k$ variables,
an alphabet $\Sigma \triangleq [n]$ of size $n$, and 
a set of constraints $\calC = (C_{i,j})_{i,j \in V}$ such that $C_{i,j} \subseteq \Sigma^2$.\footnote{
Though each constraint is actually indexed by
an unordered pair of variables $\{i,j\}$,
we use the present notation $C_{i,j}$ for sake of clarity
and assume that $C_{i,j} = C_{j,i}$ without loss of generality.}
Each variable $i \in V$ may take a value from $\Sigma$.
Each constraint $C_{i,j}$ specifies the pairs of values
that variables $i$ and $j$ can take simultaneously, and
it is said to be \emph{satisfied} by 
an assignment $\psi \colon V \to \Sigma$ of values to the variables
if $(\psi(i), \psi(j)) \in C_{i,j}$.
For example,
for a graph $G = (V,E)$,
define $C_{i,j} \triangleq \{(1,2),(2,1),(2,3),(3,2),(3,1),(1,3)\}$
for all edge $(i,j)$ of $G$.
Then, any assignment $\psi \colon V \to [3]$ is a \emph{$3$-coloring} of $G$
if and only if $\psi$ satisfies all constraints simultaneously.

\begin{problem}
\label{prb:inapprox:BCSP}
Given a set $V$ of $k$ variables, 
an alphabet set $\Sigma$ of size $n$, and 
a set of constraints $\calC = (C_{i,j})_{i,j \in V}$,
\prb{Binary Constraint Satisfaction Problem} (\prb{BCSP})
asks to find an assignment $\psi \colon V \to \Sigma$ that satisfies
the maximum fraction of constraints.
\end{problem}

It is well known that
\prb{BCSP} parameterized by the number $k$ of variables
is \W{1}-complete from a standard parameterized reduction from \prb{$k$-Clique}.
\citet{lokshtanov2020parameterized} posed a conjecture asserting that 
a constant-factor gap version of \prb{BCSP} is also
\W{1}-hard.

\begin{hypothesis}[Parameterized Inapproximability Hypothesis (PIH) \cite{lokshtanov2020parameterized}]
\label{hyp:inapprox:PIH}
There exists some universal constant $\epsilon \in (0,1)$ such that
it is \W{1}-hard to distinguish between \prb{BCSP} instances
that are promised to either
be satisfiable, or 
have a property that every assignment violates at least
$\epsilon$-fraction of the constraints.
\end{hypothesis}

Here, we prove that 
an \emph{optimization version} of \prb{Grid Tiling} is 
FPT-inapproximable assuming PIH.
Given an instance $\calS = (S_{i,j})_{i,j \in [k]}$ of
\prb{Grid Tiling} and an assignment $\sigma \colon [k]^2 \to [n]^2 $,
$\sigma(i,j)$ and $\sigma(i',j')$ for 
a pair of adjacent cells $(i,j,i',j') \in \calI$
are said to be \emph{consistent}
if they agree on the first coordinate when $i=i'$
or on the second coordinate when $j=j'$, and
\emph{inconsistent} otherwise.
The \emph{consistency} of $\sigma$, denoted $\cons(\sigma)$,
is defined as 
the number of pairs of adjacent cells that are consistent; namely,
\begin{align}
    \cons(\sigma) \triangleq
    \sum_{(i_1,j_1,i_2,j_2) \in \calI}
    \Bigl[\!\!\Bigl[ \sigma(i_1,j_1) \text{ and } \sigma(i_2,j_2) \text{ are consistent} \Bigr]\!\!\Bigr].
\end{align}
The \emph{inconsistency} of $\sigma$ is defined as
the number of inconsistent pairs of adjacent cells.
The optimization version of \prb{Grid Tiling} asks
to find an assignment $\sigma$ such that
$\cons(\sigma)$ is maximized.\footnote{
Our definition is different from \citet{marx2007optimality}
in that
the latter seeks a partial assignment such that the number of defined cells is maximized
while the former requires maximizing the number of consistent adjacent pairs.
}
Note that the maximum possible consistency is $\abs{\calI} = 2k^2$.
We will use $\opt(\calS)$ to denote the optimal consistency
among all possible assignments.
We now demonstrate that \prb{Grid Tiling}
is FPT-inapproximable in an \emph{additive sense} under PIH,
whose proof is reminiscent of \cite{marx2007optimality}.

\begin{lemma}
\label{lem:inapprox:MGT-W1}
Under PIH, there exists some universal constant
$\delta \in (0,1)$ such that
it is \W{1}-hard to distinguish \prb{Grid Tiling} instances between the following cases:
\begin{itemize}
\item Completeness: the optimal consistency is $2k^2$.
\item Soundness: the optimal consistency is at most $2k^2 - \delta k$.
\end{itemize}
\end{lemma}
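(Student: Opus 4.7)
The plan is a gap-preserving parameterized reduction from the promised version of \prb{BCSP} (as in \cref{hyp:inapprox:PIH}) to \prb{Grid Tiling}, in the spirit of Marx's original reduction \cite{marx2007optimality}. Given a \prb{BCSP} instance $(V,\Sigma,\calC)$ with $|V|=k$ and $|\Sigma|=n$, I build a \prb{Grid Tiling} instance $\calS$ on the $k\times k$ grid with alphabet $[n]$ by setting $S_{i,j}\triangleq C_{i,j}\subseteq [n]^2$ for $i\neq j$ and $S_{i,i}\triangleq \{(a,a): a\in [n]\}$ on the diagonal. The diagonal cells ``synchronize'' the two coordinate views of each variable, and the parameter $k$ is preserved.

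For completeness, if $\psi\colon V\to \Sigma$ satisfies every constraint, take $\sigma(i,j)\triangleq(\psi(i),\psi(j))$: every cell membership holds, the first coordinate is constant along each column and the second along each row, so all $2k^2$ adjacencies are consistent and $\cons(\sigma)=2k^2$.

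The heart of the argument is soundness. Suppose $\sigma$ achieves $\cons(\sigma)>2k^2-\delta k$, so strictly fewer than $\delta k$ adjacencies are inconsistent. Call column $v$ \emph{bad} if it contains an inconsistent vertical adjacency, and row $w$ \emph{bad} if it contains an inconsistent horizontal adjacency; let $B_C, B_R\subseteq[k]$ denote these sets, so $|B_C|+|B_R|<\delta k$. Extract $\psi(v)$ as the first coordinate of $\sigma(v,v)$, which equals its second coordinate by construction of $S_{v,v}$. For any distinct $v,w$ with $v\notin B_C$ and $w\notin B_R$, walking around column $v$ under wrap-around forces the first coordinate of $\sigma(v,w)$ to agree with that of $\sigma(v,v)=\psi(v)$, and symmetrically the second coordinate of $\sigma(v,w)$ equals $\psi(w)$. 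Hence $(\psi(v),\psi(w))=\sigma(v,w)\in S_{v,w}=C_{v,w}$, so $\psi$ satisfies the constraint $C_{v,w}$.

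The number of violated \prb{BCSP} constraints is therefore at most the number of ordered pairs $(v,w)$ with $v\in B_C$ or $w\in B_R$, a quantity bounded by $k(|B_C|+|B_R|)<\delta k^2$, giving a violation fraction of at most $4\delta$ out of the $\binom{k}{2}$ unordered constraints. Choosing $\delta<\epsilon/4$ for the PIH constant $\epsilon$ contradicts \cref{hyp:inapprox:PIH} whenever the starting \prb{BCSP} instance is in the PIH-hard soundness case, which establishes the lemma. The main subtlety is the coordinate bookkeeping---columns transport first coordinates and rows transport second coordinates, and only the diagonal pairs $(a,a)$ glue these two views into a single scalar $\psi(v)$; wrap-around makes every column a $k$-cycle, so a single inconsistency anywhere in column $v$ genuinely disrupts the first coordinate throughout that column, keeping the ``good column implies $\psi(v)$ propagates perfectly'' dichotomy clean and making the loss in the reduction only $O(\delta)$ in the BCSP violation fraction.
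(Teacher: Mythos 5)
Your proof is correct and follows essentially the same route as the paper: reduce from the PIH-hard gap version of \prb{BCSP} by placing $C_{i,j}$ in the off-diagonal cells, read an assignment off the diagonal, and discard the few bad rows/columns. The only (harmless, arguably cleaner) difference is that you set $S_{i,i}=\{(a,a):a\in[n]\}$ rather than $[n]^2$, which forces the column-wise first coordinate and row-wise second coordinate at cell $(i,i)$ to coincide and so makes the extraction of $\psi(i)$ unambiguous; your constant $\epsilon/4$ versus the paper's $\epsilon/3$ is immaterial.
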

\begin{proof}
We show a gap-preserving parameterized reduction from
\prb{BCSP} to \prb{Grid Tiling}.
Given an instance of \prb{BCSP} $(V,\Sigma, \calC=(C_{i,j})_{i,j\in V})$,
where $V = [k]$ and $\Sigma = [n]$,
we define $\calS$ to be a collection of
$k^2 $ nonempty subsets $S_{i,j} \subseteq [n]^2$ such that
$S_{i,i} \triangleq [n]^2$ for all $i \in [k]$ and
$S_{i,j} \triangleq C_{i,j} \subseteq [n]^2$ for all $i \neq j \in [k]$.
Suppose first there exists a satisfying assignment $\psi \colon V \to \Sigma$ for the \prb{BCSP} instance; i.e.,
$(\psi(i), \psi(j)) \in C_{i,j}$ for all $i \neq j \in V$.
Constructing another assignment $\sigma \colon [k]^2 \to [n]^2$
for \prb{Grid Tiling} defined by $\calS$
such that $\sigma(i,j) \triangleq (\psi(i), \psi(j)) \in S_{i,j}$
for each $i,j \in [k]$,
we have $\cons(\sigma) = 2k^2$,
proving the completeness part.

Suppose then we are given
an assignment $\sigma \colon [k]^2 \to [n]^2$ for \prb{Grid Tiling} whose inconsistency is
at most $\delta k$ for some $\delta \in (0,1)$.
Define a subset $A \subseteq [n]$ as follows:
\begin{align}
\begin{aligned}
    A \triangleq [k] \setminus \Bigl(
        & \Bigl\{ i\in[k] : \exists j \in [k] \text{ s.t. } \sigma(i,j) \text{ and } \sigma(i,j+1) \text{ are inconsistent} \Bigr\} \cup \\
        & \Bigl\{ j\in[k] : \exists i \in [k] \text{ s.t. } \sigma(i,j) \text{ and } \sigma(i+1,j) \text{ are inconsistent} \Bigr\}
    \Bigr).
\end{aligned}
\end{align}
It follows from the definition that
$\abs{A} \geq (1-\delta) k$ and
the restriction of $\sigma$ on $A^2$ is of zero inconsistency.
Thus, if we define an assignment $\psi: V \to \Sigma$
for \prb{BCSP} as
\begin{align}
    \psi(i) \triangleq z \text{ such that }
    \sigma(i,i) = (z,z) \text{ for each } i \in V,
\end{align}
then it holds that $(\psi(i), \psi(j)) \in S_{i,j} = C_{i,j}$
for all $i \neq j \in A$.
The fraction of constraints in the \prb{BCSP} instance
satisfied by $\psi$ is then at least
\begin{align}
\begin{aligned}
    \frac{{\abs{A} \choose 2}}{{k \choose 2}}
    & = \frac{(k-\delta k)(k-\delta k-1)}{2}\frac{2}{k(k-1)} \\
    & = (1-\delta)\left(1- \frac{\delta k}{k-1}\right) \\
    & \geq (1-\delta)\left(1- \frac{3}{2}\delta\right) & \text{ (for } k \geq 3 \text{)} \\
    & \geq \left(1-\frac{3}{2}\delta\right)^2 \\
    & \geq 1-3\delta.
\end{aligned}
\end{align}
Consequently,
if the optimal consistency $\opt(\calS)$ is
at least $2k^2 - \delta k$ for some $\delta \in (0,1)$, then
the maximum fraction of satisfiable constraints of \prb{BCSP} instance must be at least $1-3\delta$,
which completes (the contraposition of) the soundness part.
Under PIH, it is \W{1}-hard to decide if
the optimal consistency of a \prb{Grid Tiling} instance
is equal to $2k^2$ or less than $2k^2 - \delta k$,
where $\delta \triangleq \frac{\epsilon}{3} $ and
$\epsilon \in (0,1) $ is a constant appearing in \cref{hyp:inapprox:PIH}.
\end{proof}

It should be noted that
we may not be able to 
significantly improve the additive term $\bigO(k)$
owing to a polynomial-time $\epsilon k^2$-additive approximation algorithm for
any constant $\epsilon > 0$:
\begin{observation}
\label{obs:inapprox:MGT-approx}
Given an instance of \prb{Grid Tiling} and 
an error tolerance parameter $\epsilon > 0$,
we can find an assignment whose consistency
is at least $\opt(\calS) - \epsilon k^2$ in
$\epsilon^2 k^2 n^{\bigO(1/\epsilon^2)}$ time.
\end{observation}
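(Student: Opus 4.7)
The plan is a simple block-decomposition: partition the $k\times k$ toroidal grid into square blocks of side length $\ell = \Theta(1/\epsilon)$, solve \prb{Grid Tiling} optimally on each block by brute force, and argue that the pairs lost across block boundaries constitute only an $\epsilon$-fraction of $\calI$.

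First, I would set $\ell \triangleq \lceil 2/\epsilon \rceil$ and (for notational convenience, after a harmless padding argument) assume $\ell \mid k$. This partitions $[k]^2$ into $(k/\ell)^2$ blocks of $\ell\times\ell$ cells. Call a pair $(i_1,j_1,i_2,j_2)\in\calI$ \emph{internal} if both of its cells lie in the same block and \emph{boundary} otherwise. On the torus there are exactly $k/\ell$ vertical block-boundaries, each crossed by $k$ horizontal constraints, and $k/\ell$ horizontal block-boundaries, each crossed by $k$ vertical constraints, so the number of boundary pairs is at most $2k^2/\ell\leq \epsilon k^2$.

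Next, for each block $B$, the algorithm enumerates all assignments $\sigma_B\colon B\to[n]^2$ with $\sigma_B(i,j)\in S_{i,j}$; since $|B|=\ell^2$ and $|S_{i,j}|\leq n^2$, there are at most $n^{2\ell^2}=n^{\bigO(1/\epsilon^2)}$ candidates, which can be scanned exhaustively. It keeps the one that maximizes the number of consistent \emph{internal} pairs inside $B$, and outputs the concatenation $\sigma$ across blocks. Letting $\sigma^*$ be an optimal assignment and writing $\cons_B(\cdot)$ for the count restricted to internal pairs within $B$, per-block optimality yields
\begin{align*}
\cons(\sigma) \;\geq\; \sum_{B} \cons_B(\sigma) \;\geq\; \sum_{B} \cons_B(\sigma^*) \;\geq\; \cons(\sigma^*) - 2k^2/\ell \;\geq\; \opt(\calS) - \epsilon k^2.
\end{align*}
The total running time is $(k/\ell)^2 \cdot n^{\bigO(1/\epsilon^2)} = \epsilon^2 k^2 \cdot n^{\bigO(1/\epsilon^2)}$, matching the claim.

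The only genuinely fussy point I anticipate is the torus wrap-around when $\ell \nmid k$. One can either pad the grid up to a multiple of $\ell$ by inserting ``anything-goes'' cells with $S_{i,j} = [n]^2$ (adding only $\bigO(k)$ cells and at most $\bigO(\epsilon k^2)$ additional pairs, which are absorbed into the additive error), or allow a single irregular slab of width less than $\ell$ on one side of the partition; neither alternative affects the asymptotics, so the argument goes through unchanged.
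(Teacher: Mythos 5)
Your proof is correct and follows essentially the same strategy as the paper: partition the $k\times k$ grid into blocks of side $\Theta(1/\epsilon)$, brute-force each block, and bound the loss by counting cross-block adjacent pairs. The only cosmetic differences are variable naming (your block side $\ell$ is the paper's $B$) and that the paper separately handles the degenerate case $\epsilon k < 4$ and uses ragged blocks (via $\min\{\cdot, k\}$) rather than a padding argument, but neither affects the substance.
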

\begin{proof}
Given an instance $\calS = (S_{i,j})_{i,j \in [k]}$ of \prb{Grid Tiling}  and $\epsilon > 0$,
if $\epsilon k < 4$, we can use
a brute-force search algorithm to find an optimal assignment
in time $n^{\bigO(k^2)} = n^{\bigO(1/\epsilon^2)}$.
Hereafter, we safely assume that $ \epsilon k \geq 4 $.
Defining
$\displaystyle \ell \triangleq \left\lfloor \frac{\epsilon k}{2}  - 1\right\rfloor$ and
$\displaystyle B \triangleq \left\lceil \frac{k}{\ell} \right\rceil$,
we observe that
\begin{align}
1 \leq \ell \leq \frac{\epsilon k}{2} \text{ and }
B \leq \left\lceil \frac{k}{\frac{\epsilon k}{2} - 1} \right\rceil
\leq \left\lceil \frac{k}{\frac{\epsilon k}{4}} \right\rceil
= \left\lceil \frac{4}{\epsilon} \right\rceil.
\end{align}
We partition $k^2$ cells of $\calS$ into $\ell^2$ blocks,
denoted
$\{P_{\hati, \hatj}\}_{\hati, \hatj \in [\ell]}$,
each of which is of size at most $B^2$ and defined as follows:
\begin{align}
    P_{\hati,\hatj} \triangleq
    \Bigl[B(\hati-1)+1 \isep \min\{B\hati, k\}\Bigr] \times
    \Bigl[B(\hatj-1)+1 \isep \min\{B\hatj, k\}\Bigr]
    \text{ for all } \hati, \hatj \in [\ell].
\end{align}
Consider for each $\hati,\hatj \in [\ell]$,
a \emph{variant} of \prb{Grid Tiling} denoted by
$\calS_{\hati,\hatj} \triangleq \{ S_{i,j} : (i,j) \in P_{\hati,\hatj} \}$, which requires maximizing the number of consistent pairs of
adjacent cells of $P_{\hati,\hatj}$ in $\calI$, 
where $\calI$ is defined in \cref{eq:rank:calI}.
Because each instance $\calS_{\hati,\hatj}$ contains at most $B^2$ cells,
we can solve this variant exactly
by exhaustive search in $n^{B^2 + \bigO(1)}$ time.
Denote by $\sigma_{\hati,\hatj} \colon P_{\hati,\hatj} \to [n]^2$ the obtained partial assignment on $P_{\hati, \hatj}$.
Concatenating all $\sigma_{\hati,\hatj}$'s,
we can construct an assignment $\sigma \colon [k]^2 \to [n]^2$ to
the original \prb{Grid Tiling} instance $\calS$.
Because each partial assignment $\sigma_{\hati,\hatj}$
is optimal on $\calS_{\hati,\hatj}$,
the number of consistent pairs of
adjacent cells within the \emph{same} block
is at least $\opt(\calS)$.
By contrast,
the number of (possibly inconsistent) pairs of
adjacent cells \emph{across} different blocks is $2 k \ell$.
Accordingly, the consistency of $\sigma$ is
$\cons(\sigma) \geq \opt(\calS) - 2 k \ell \geq \opt(\calS) - \epsilon k^2$.
Note that the entire time complexity is bounded by
$\ell^2 \cdot n^{B^2 + \bigO(1)} = \epsilon^2 k^2 n^{\bigO(1/\epsilon^2)}$
completing the proof.
\end{proof}

Our technical result is a gap-preserving parameterized reduction from \prb{Grid Tiling} to \prb{Determinant Maximization},
whose proof is presented in the subsequent subsection.

\begin{lemma}
\label{lem:inapprox:reduction}
There is a polynomial-time, parameterized reduction from
an instance $\calS = (S_{i,j})_{i,j \in [k]}$ of \prb{Grid Tiling} to
an instance $(\mat{A},k^2)$ of \prb{Determinant Maximization} such that 
all diagonal entries of $\mat{A}$ are $1$ and
the following conditions are satisfied:
\begin{itemize}
\item Completeness: If $\opt(\calS) = 2k^2$, then $\maxdet(\mat{A},k^2) = 1$.
\item Soundness: If $\opt(\calS) \leq 2k^2 - \delta k$ for some $\delta > 0$,
then $\maxdet(\mat{A},k^2) \leq 0.999^{\delta k}$.
\end{itemize}
\end{lemma}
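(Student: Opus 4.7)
The plan is to associate, with each cell $(i,j) \in [k]^2$ and each pair $(x,y) \in S_{i,j}$, a unit rational vector $\vec{v}^{(i,j)}_{x,y}$ in $\bbQ^D$ of dimension $D = \bigO(k^2 n^2)$, and to let $\mat{A}$ be the Gram matrix of these $N = \sum_{i,j}|S_{i,j}|$ vectors; the unit-diagonal condition then holds by construction. Each vector decomposes into two kinds of blocks: (i) a small ``cell-identity'' block nonzero only at coordinates unique to cell $(i,j)$, and (ii) one ``edge block'' of dimension $\bigO(n^2)$ for each of the $2k^2$ unordered pairs of adjacent cells, nonzero only on vectors belonging to those two cells. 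Within each edge block I plant a gadget inspired by \citet{civril2013exponential} with the key property that two vectors from opposite sides of the edge have inner product exactly $0$ when their chosen values agree on the relevant coordinate and at least a universal constant $c > 0$ when they disagree. The norm budget between the identity block and the four incident edge blocks is tuned so that two vectors from the same cell also have inner product at least $c$, while preserving the consistency and non-adjacency orthogonality.

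For completeness, given a valid \prb{Grid Tiling} assignment $\sigma$, the set $S^{\star} = \{\vec{v}^{(i,j)}_{\sigma(i,j)}\}_{i,j \in [k]}$ of $k^2$ vectors is pairwise orthogonal: any two of its members come either from non-adjacent cells (orthogonal by block disjointness) or from adjacent cells whose chosen values agree (orthogonal by the edge-block property). Hence $\mat{A}_{S^{\star}} = I_{k^2}$ and $\det(\mat{A}_{S^{\star}}) = 1$; together with Hadamard's inequality this gives $\maxdet(\mat{A},k^2) = 1$.

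For soundness, suppose $\opt(\calS) \leq 2k^2 - \delta k$ and fix any $S \in {[N] \choose k^2}$. Let $t$ and $e$ denote the number of cells represented by $\geq 2$ and by $0$ vectors of $S$ respectively; since the per-cell multiplicities $d_c$ satisfy $\sum_c d_c = k^2$, one has $e \geq t$. Defining a \prb{Grid Tiling} assignment $\sigma_S$ that picks one selected vector per non-empty cell (ties broken arbitrarily) and arbitrary values on empty cells, any inconsistency of $\sigma_S$ is either (a) an inconsistency between two singleton-represented cells of $S$ (at most $I$ in total) or (b) an adjacency incident to a doubled or empty cell (at most $4(t+e) \leq 8e$ in total), so $I + 8e \geq 2k^2 - \cons(\sigma_S) \geq 2k^2 - \opt(\calS) \geq \delta k$, whence $I + e \geq \delta k / 8$. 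In $\mat{A}_S$ the $I$ inconsistent singleton-singleton pairs and the $\sum_{d_c \geq 2}\binom{d_c}{2} \geq e$ same-cell pairs all contribute off-diagonal entries of value $\geq c$. Placing the vectors of each overrepresented cell arbitrarily in a global ordering (so that the last $d_c - 1$ of each such clique always have an earlier same-cell neighbor, contributing at least $e$ vectors to the set $T$ of ``late'' endpoints) and then ordering singletons to maximize the singleton-singleton later-endpoint count via max-degree averaging, one finds $|T| \geq e + I/5 \geq \delta k / 40$ selected vectors $\vec{v}_i$ each having some earlier unit vector $\vec{v}_j$ with $\langle \vec{v}_i, \vec{v}_j \rangle \geq c$, so $d(\vec{v}_i, V_{<i}) \leq \sqrt{1 - c^2}$. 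Combining with the volume identity $\det(\mat{A}_S) = \prod_i d(\vec{v}_i, V_{<i})^2$ from \cref{eq:def-vol} and \cref{eq:pre:det-vol} gives
\begin{align*}
\det(\mat{A}_S) \leq (1 - c^2)^{\delta k / 40}.
\end{align*}
Choosing the gadget parameter $c$ so that $(1-c^2)^{1/40} \leq 0.999$ (any $c \geq 0.2$ suffices) then yields the required bound $\det(\mat{A}_S) \leq 0.999^{\delta k}$.

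The main obstacle is the explicit construction of the gadget inside each edge block: one must encode the $n$-valued coordinate-agreement predicate by unit vectors whose pairwise inner products are \emph{exactly} $0$ on consistency and bounded below by a \emph{universal} constant, independent of $n$ and $k$, on inconsistency. A direct Pythagorean or indicator-vector encoding of the kind used in \cref{sec:rank} only yields gaps of order $\bigO(1/\sqrt n)$, which collapses the exponential determinant gap to a polynomial one; the gadget of \citet{civril2013exponential} amplifies this to $\Omega(1)$ by using $\bigO(n^2)$ coordinates per edge block, carefully correlated between the two sides, at the cost of a larger but still polynomial overall dimension $D = \bigO(k^2 n^2)$. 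Balancing the encoding strength against the norm budget -- so that same-cell inner products also exceed the same universal constant while all entries remain rational of polynomial bit-complexity -- is the delicate quantitative step; once it is in place, the combinatorial analysis above is routine.
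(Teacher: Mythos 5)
Your overall strategy matches the paper's: build vectors from a $\prb{Grid Tiling}$ instance using one block per pair of adjacent cells, plant the \citet{civril2013exponential} gadget in each block so that consistency means orthogonality and inconsistency means a fixed positive inner product, and then bound the determinant via Gram--Schmidt by counting how many chosen vectors have a "late" neighbor with a large inner product. The auxiliary "cell-identity block" you add is a departure from the paper (which gets same-cell positivity purely from the fact that $\langle \vec{b}_i, \vec{b}_j \rangle = \tfrac12$ for $i \neq j$), but it is harmless; it just spends norm budget without improving anything.

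The genuine problem is your treatment of the constant $c$. You write as if $c$ is a free gadget parameter and assert "any $c \geq 0.2$ suffices," but the construction you describe cannot deliver $c \geq 0.2$. With the Civril gadget each edge block gives raw inner product $\tfrac12$ on inconsistency, each vector carries four unit-norm edge blocks (plus whatever you put in the identity block), so the squared norm before normalization is at least $4$, and after scaling to unit vectors the inconsistent inner product is at most $\tfrac{1/2}{4} = \tfrac18$; enlarging the identity block only makes this worse. So $c \leq \tfrac18$, and $\bigl(1 - (1/8)^2\bigr)^{1/40} = (63/64)^{1/40} \approx 0.99961 > 0.999$, which means your chain $\vol^2(\mat{S}) \leq (1-c^2)^{\delta k/40} \leq 0.999^{\delta k}$ breaks. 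The second source of slack is your $|T| \geq \delta k/40$: you bounded $e + I/5 \geq (e+I)/5 \geq \delta k/40$, but a direct optimization over the constraint $I + 8e \geq \delta k$ shows $e + I/4 \geq \delta k/8$ (and even $e + I/5 \geq \delta k/8$, since the expression is minimized when $I=0$, $e = \delta k/8$). With the correct count $|T| \geq \delta k/8$ and the correct $c = \tfrac18$, one recovers the paper's $(63/64)^{\delta k/8} \approx 0.998^{\delta k} \leq 0.999^{\delta k}$. So the two mistakes are in opposite directions and nearly cancel, but as written the proof does not establish the lemma. The paper instead separates the two effects cleanly via Fischer's inequality -- one lemma bounding volume against duplicates, another against inconsistencies assuming few duplicates -- and then takes a minimum over the two bounds, which makes the constant-tracking transparent; I'd suggest you either adopt that split or redo your late-vertex count tightly, and in either case replace the claim "$c$ can be chosen $\geq 0.2$" with the actual value $c = \tfrac18$ coming from the gadget.
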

Using \cref{lem:inapprox:reduction}, we can prove \cref{thm:inapprox}.
\begin{proof}[Proof of \cref{thm:inapprox}]
Our gap-preserving parameterized reduction is as follows.
Given an instance $\calS = (S_{i,j})_{i,j \in [k]}$ of \prb{Grid Tiling},
we construct an instance $(\mat{A}\in \bbQ^{N \times N}, K \triangleq k^2)$ of \prb{Determinant Maximization} in polynomial time according to \cref{lem:inapprox:reduction},
where $N \triangleq \sum_{i,j \in [k]}\abs{S_{i,j}}$.
The diagonal entries of $\mat{A}$ are $1$ by definition.
Since $K$ is a function only in $k$, this is a parameterized reduction.
According to \cref{lem:inapprox:MGT-W1,lem:inapprox:reduction},
it is \W{1}-hard to determine whether
$\maxdet(\mat{V},K) = 1$ or
$\maxdet(\mat{V},K) \leq 0.999^{\delta k}$ under PIH,
where $\delta \in (0,1)$ is a constant appearing \cref{lem:inapprox:MGT-W1}.
In particular,
\prb{Determinant Maximization} is \W{1}-hard to approximate within a factor better than $0.999^{\delta k} = 2^{-c\sqrt{K}}$ when parameterized by $K$,
where $c \in (0,1)$ is some universal constant.
This completes the proof.
\end{proof}

\subsection{Gap-preserving Reduction from {\normalfont \prb{Grid Tiling}} and Proof of \cref{lem:inapprox:reduction}}
\label{subsec:inapprox:reduction}
To prove \cref{lem:inapprox:reduction},
we describe a gap-preserving parameterized reduction from \prb{Grid Tiling} to \prb{Determinant Maximization}.
Before going into its details,
we introduce a convenient gadget due to \citet{civril2013exponential}.

\begin{lemma}[\protect{\citet[Lemma~13]{civril2013exponential}}]
\label{lem:inapprox:civril-vector}
For any positive even integer $\ell$,
we can construct a set of $2^{\ell}$ rational vectors
$\mat{B}^{(\ell)} = \{\vec{b}_1, \ldots, \vec{b}_{2^{\ell}}\}$
of dimension $2^{\ell+1}$ in $\bigO(4^{\ell})$ time
such that the following conditions are satisfied:
\begin{itemize}
    \item Each entry of vectors is either $0$ or $2^{-\frac{\ell}{2}}$; 
    $\|\vec{b}_i\| = 1 $ for all $i \in [2^{\ell}]$.
    \item $\langle \vec{b}_i, \vec{b}_j \rangle = \frac{1}{2}$ for all $i, j\in [2^\ell]$ with $i \neq j$.
    \item $ \langle \vec{b}_i, \overline{\vec{b}_j} \rangle = \frac{1}{2} $ for all $i,j \in [2^\ell]$ with $i \neq j$, where
    $ \overline{\vec{b}_j} \triangleq 2^{-\frac{\ell}{2}} \cdot \vec{1} - \vec{b}_j $.
\end{itemize}
\end{lemma}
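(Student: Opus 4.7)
The plan is to reduce the construction to a combinatorial design on subsets and then realize it using a Sylvester--Hadamard matrix. First, observe that because each coordinate of $\vec{b}_i$ lies in $\{0, 2^{-\ell/2}\}$, each $\vec{b}_i$ is fully determined by its support $S_i \subseteq [2^{\ell+1}]$, with $\vec{b}_i = 2^{-\ell/2} \vec{1}_{S_i}$ where $\vec{1}_{S_i}$ denotes the $0/1$ characteristic vector of $S_i$; the hypothesis that $\ell$ is even ensures $2^{-\ell/2} \in \bbQ$. The norm condition $\|\vec{b}_i\| = 1$ forces $|S_i| = 2^{\ell}$, and $\langle \vec{b}_i, \vec{b}_j\rangle = 2^{-\ell} |S_i \cap S_j|$ translates the second bullet into $|S_i \cap S_j| = 2^{\ell-1}$. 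The third bullet is then automatic: since $\overline{\vec{b}_j} = 2^{-\ell/2} \vec{1}_{[2^{\ell+1}] \setminus S_j}$, one has $\langle \vec{b}_i, \overline{\vec{b}_j}\rangle = 2^{-\ell}(|S_i| - |S_i \cap S_j|) = 2^{-\ell}(2^{\ell} - 2^{\ell-1}) = \frac{1}{2}$. So it suffices to exhibit $2^{\ell}$ subsets of $[2^{\ell+1}]$, each of size $2^{\ell}$, whose pairwise intersections all have size exactly $2^{\ell-1}$.

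Next, set $n = 2^{\ell+1}$ and build the Sylvester Hadamard matrix $\mat{H}_n$ via the recursion $\mat{H}_{2m} = \begin{pmatrix} \mat{H}_m & \mat{H}_m \\ \mat{H}_m & -\mat{H}_m \end{pmatrix}$ starting from $\mat{H}_1 = (1)$. A straightforward induction shows that every row of $\mat{H}_n$ other than the first contains exactly $n/2 = 2^{\ell}$ entries equal to $+1$. For $i = 2, \ldots, n$, let $S_i$ be the set of $+1$-coordinates of the $i$-th row of $\mat{H}_n$; then $|S_i| = 2^{\ell}$. To compute $|S_i \cap S_j|$ for distinct $i, j \geq 2$, partition $[n]$ according to the sign pattern $(r_i(k), r_j(k))$ into counts $a, c, d, b$ for $(+,+), (+,-), (-,+), (-,-)$ respectively. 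Hadamard orthogonality gives $a + b - c - d = 0$, the equal row sums give $a + c = a + d = n/2$, and the total gives $a + b + c + d = n$; solving forces $a = n/4 = 2^{\ell-1}$, which is precisely $|S_i \cap S_j|$. Since $n - 1 = 2^{\ell+1} - 1 \geq 2^{\ell}$, any $2^{\ell}$ of the non-first rows yield the desired family, and scaling their characteristic vectors by $2^{-\ell/2}$ produces the required $\vec{b}_i$'s.

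The time complexity is dominated by writing down $\mat{H}_n$, which the Sylvester recursion does in $\bigO(n^2) = \bigO(4^{\ell})$ arithmetic operations; extracting supports and scaling incurs no additional overhead. There is no serious obstacle: the two steps that need care are (i) recognizing that the ``complement'' condition follows from the intersection condition rather than requiring a separate construction, and (ii) the brief four-count calculation that pins down $a = n/4$ from Hadamard orthogonality combined with balanced row sums.
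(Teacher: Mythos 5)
Your proof is correct and complete. The reduction to a set system (supports of size $2^{\ell}$ with pairwise intersections of size $2^{\ell-1}$ inside a ground set of size $2^{\ell+1}$), the observation that the third bullet is then automatic from the first two since $\overline{\vec{b}_j}$ is just $2^{-\ell/2}$ times the complementary characteristic vector, and the four-count argument pinning $\abs{S_i \cap S_j} = n/4$ from Hadamard orthogonality plus balanced row sums are all sound; the count $n-1 = 2^{\ell+1}-1 \geq 2^{\ell}$ of usable rows and the $\bigO(4^{\ell})$ cost of writing down the Sylvester matrix check out, and evenness of $\ell$ correctly guarantees rationality of $2^{-\ell/2}$. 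Note, however, that the paper itself does not prove this lemma at all: it imports it verbatim from \citet[Lemma~13]{civril2013exponential} and only verifies the two easy consequences $\langle \vec{b}_i, \overline{\vec{b}_i}\rangle = 0$ and $\langle \overline{\vec{b}_i}, \overline{\vec{b}_j}\rangle = \langle \vec{b}_i, \vec{b}_j\rangle$ afterwards. So there is no in-paper argument to compare against; what your write-up buys is a short, self-contained derivation via Sylvester--Hadamard rows that could replace the black-box citation if one wanted the construction to be explicit.
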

By definition of $\mat{B}^{(\ell)}$, we further have the following:
\begin{align}
    \langle \vec{b}_i, \overline{\vec{b}_i} \rangle
    & = 2^{-\frac{\ell}{2}} \langle \vec{1}, \vec{b}_i \rangle
    - \langle \vec{b}_i, \vec{b}_i \rangle = 0, \\
    \begin{split}
    \langle \overline{\vec{b}_i}, \overline{\vec{b}_j} \rangle
    & = \langle 2^{-\frac{\ell}{2}} \vec{1} - \vec{b}_i, 2^{-\frac{\ell}{2}} \vec{1} - \vec{b}_j \rangle \\
    & = 2^{-\ell} \langle \vec{1}, \vec{1} \rangle
    - 2^{-\frac{\ell}{2}} \langle \vec{1}, \vec{b}_i + \vec{b}_j \rangle
    + \langle \vec{b}_i, \vec{b}_j \rangle \\
    & = \langle \vec{b}_i, \vec{b}_j \rangle.
    \end{split}
\end{align}
Our reduction strategy is very similar to that of \cref{thm:rank}.
Given an instance $\calS = (S_{i,j})_{i,j \in [k]}$ of
\prb{Grid Tiling},
we construct a rational vector $\vec{v}_{x,y}^{(i,j)}$ for
each element $(x,y) \in S_{i,j}$ of cell $(i,j) \in [k]^2$.
Each vector consists of $\abs{\calI} = 2k^2$ blocks indexed by $\calI$,
each of which is either a vector in the set $\mat{B}^{(2\lceil \log n \rceil)}$ or the zero vector $\vec{0}$.
Hence, the dimension of the vectors is
$ 2k^2 \cdot 2^{2\lceil \log n \rceil +1} = \bigO(k^2 n^2) $.
Let $\vec{v}_{x,y}^{(i,j)}(i_1,j_1,i_2,j_2)$ denote the block of $\vec{v}_{x,y}^{(i,j)}$
corresponding to a pair of adjacent cells $(i_1,j_1,i_2,j_2) \in \calI$.
Each block is subsequently defined as follows:
\begin{align}
    \vec{v}^{(i,j)}_{x,y}(e) \triangleq
    \begin{dcases}
        \overline{\vec{b}_x} & \text{if } e = (i,j-1,i,j), \\
        \vec{b}_x & \text{if } e = (i,j,i,j+1), \\
        \overline{\vec{b}_y} & \text{if } e = (i-1,j,i,j), \\
        \vec{b}_y & \text{if } e = (i,j,i+1,j), \\
        \vec{0} & \text{otherwise}.
    \end{dcases}
\end{align}
Hereafter,
two vectors $\vec{v}^{(i,j)}_{x,y}$ and $\vec{v}^{(i',j')}_{x',y'}$
are said to be \emph{adjacent} if $(i,j)$ and $(i',j')$ are adjacent, and
two adjacent vectors are said to be \emph{consistent} if
$(x,y)$ and $(x',y')$ are consistent
(i.e., $x=x'$ whenever $i=i'$ and $y=y'$ whenever $j=j'$) and
\emph{inconsistent} otherwise.
Since each vector contains exactly four vectors chosen from $\mat{B}^{(2 \lceil \log n \rceil)}$,
its squared norm is equal to $4$.
In addition,
$\vec{v}^{(i,j)}_{x,y}$ and $\vec{v}^{(i',j')}_{x',y'} $ are orthogonal
whenever $(i,j) $ and $(i',j')$ are not identical or adjacent.
Observe further that
if two cells are adjacent,
the inner product of two vectors in $\mat{V}$ is calculated as follows:
\begin{align}
    \left\langle \vec{v}^{(i,j)}_{x,y}, \vec{v}^{(i,j+1)}_{x',y'} \right\rangle & =
    \langle \vec{b}_x, \overline{\vec{b}_{x'}} \rangle = 
    \begin{cases}
        0 & \text{if they are consistent (i.e., } x = x'\text{)}, \\
        \frac{1}{2} & \text{otherwise (i.e., } x \neq x'\text{)},
    \end{cases}
    \label{eq:inapprox:dot-adjacent-i}
    \\
    \left\langle \vec{v}^{(i,j)}_{x,y}, \vec{v}^{(i+1,j)}_{x',y'} \right\rangle & =
    \langle \vec{b}_y, \overline{\vec{b}_{y'}} \rangle = 
    \begin{cases}
        0 & \text{if they are consistent (i.e., } y = y'\text{)}, \\
        \frac{1}{2} & \text{otherwise (i.e., } y \neq y'\text{)}.
    \end{cases}
    \label{eq:inapprox:dot-adjacent-j}
\end{align}
On the other hand,
the inner product of two vectors in the same cell is as follows:
\begin{align}
\label{eq:inapprox:dot-identical}
\begin{aligned}
    \left\langle \vec{v}^{(i,j)}_{x,y}, \vec{v}^{(i,j)}_{x',y'} \right\rangle
    & =
    2 \cdot \langle \vec{b}_x, \vec{b}_{x'} \rangle
    + 2 \cdot \langle \vec{b}_y, \vec{b}_{y'} \rangle \\
    & =
    \begin{cases}
        4 & \text{if } x=x' \text{ and } y=y', \\
        3 &
        \text{if } x=x' \text{ xor } y=y',
        \\
        2 & \text{if } x \neq x' \text{ and } y \neq y'.
    \end{cases}
\end{aligned}
\end{align}
We denote
by $\mat{V}^{(i,j)}$ the set of vectors corresponding to the elements of $S_{i,j}$; i.e.,
$
    \mat{V}^{(i,j)} \triangleq \{ \vec{v}^{(i,j)}_{x,y} : (x,y) \in S_{i,j} \}
$
for each $i,j \in [k]$.
We now define an instance $(\mat{V}, K)$ of \prb{Determinant Maximization} as
$
    \mat{V} \triangleq \bigcup_{i,j \in [k]} \mat{V}^{(i,j)}
    \text{ and }
    K \triangleq k^2.
$
Note that $\mat{V}$ contains $N \triangleq \sum_{i,j \in [k] }\abs{S_{i,j}}$ vectors.

We now proceed to the proof of (the soundness argument of) \cref{lem:inapprox:reduction}.
Let $\mat{S}$ be a set of $k^2$ vectors from $\mat{V}$.
Define
$\mat{S}^{(i,j)} \triangleq \mat{V}^{(i,j)} \cap \mat{S} = \{ \vec{v}^{(i,j)}_{x,y} \in \mat{S} : (x,y) \in S_{i,j} \}$
for each $i,j\in [k]^2$.
Denote by $\cov(\mat{S})$
the number of cells $(i,j) \in [k]^2$ such that
$\mat{S}$ includes $\vec{v}^{(i,j)}_{x,y}$ for some $(x,y)$; i.e.,
$
    \cov(\mat{S})
    \triangleq \{ (i,j) \in [k]^2 : \mat{S}^{(i,j)} \neq \emptyset \},
$
and we also define 
$
    \dup(\mat{S})
    \triangleq \{ (i,j) \in [k]^2 : \mat{S}^{(i,j)} = \emptyset \}.
$
It follows from the definition that 
$\cov(\mat{S}) + \dup(\mat{S}) = k^2$ and
$\dup(\mat{S})$
counts the total number of ``duplicate'' vectors in the same cell.
We first present an upper bound on
the volume of $\mat{S}$ in terms of $\dup(\mat{S})$,
implying that we cannot select many duplicate vectors from the same cell.
\begin{lemma}
\label{lem:inapprox:volume-rep}
If $\dup(\mat{S}) \leq \frac{k^2}{2}$,
then it holds that 
\begin{align}
    \vol^2(\mat{S}) \leq 4^{k^2} \cdot \left(\frac{3}{4}\right)^{\dup(\mat{S})}.
\end{align}
\end{lemma}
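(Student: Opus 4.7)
The plan is to estimate $\vol^2(\mat{S})$ through the telescoping formula of \cref{eq:def-vol}, charging each selected vector either the trivial Hadamard bound or a refined bound coming from the inner-product relations of \cref{eq:inapprox:dot-identical}. The saving of $3/4$ per ``duplicate'' should come from the fact that any two distinct vectors in the same cell have squared inner product at least $4$, so a one-dimensional projection onto a cellmate already kills $1$ out of the $4$ units of squared norm.

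Concretely, I order the vectors of $\mat{S}$ cell by cell: enumerate the non-empty cells in any order and, within each $\mat{S}^{(i,j)}$, designate one representative $\vec{u}^{(i,j)}$ to be listed first. Writing $\vec{v}_1, \ldots, \vec{v}_{k^2}$ for this ordering and $\mat{P}_\ell \triangleq \{\vec{v}_1, \ldots, \vec{v}_{\ell-1}\}$, \cref{eq:def-vol} gives $\vol(\mat{S}) = \prod_\ell \dis(\vec{v}_\ell, \mat{P}_\ell)$. For each representative vector, I use the trivial bound $\dis(\vec{v}_\ell, \mat{P}_\ell) \leq \|\vec{v}_\ell\| = 2$, contributing a factor of $4$ to $\vol^2$. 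For every non-representative $\vec{v}_\ell \in \mat{S}^{(i,j)}$, the representative $\vec{u} \triangleq \vec{u}^{(i,j)}$ already lies in $\mat{P}_\ell$, so dropping to the one-dimensional projection onto $\vec{u}$ yields
\[
\dis^2(\vec{v}_\ell, \mat{P}_\ell) \;\leq\; \|\vec{v}_\ell\|^2 - \frac{\langle \vec{u}, \vec{v}_\ell\rangle^2}{\|\vec{u}\|^2} \;=\; 4 - \tfrac{1}{4}\langle \vec{u}, \vec{v}_\ell\rangle^2.
\]
Since $\vec{u}$ and $\vec{v}_\ell$ are distinct elements of $\mat{V}^{(i,j)}$, \cref{eq:inapprox:dot-identical} forces $\langle \vec{u}, \vec{v}_\ell\rangle \in \{2,3\}$, hence $\dis^2(\vec{v}_\ell, \mat{P}_\ell) \leq 3$.

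Combining the two cases, there are exactly $\cov(\mat{S})$ representatives and $k^2 - \cov(\mat{S}) = \dup(\mat{S})$ non-representatives, so
\[
\vol^2(\mat{S}) \;\leq\; 4^{\cov(\mat{S})} \cdot 3^{\dup(\mat{S})} \;=\; 4^{k^2 - \dup(\mat{S})} \cdot 3^{\dup(\mat{S})} \;=\; 4^{k^2} \cdot (3/4)^{\dup(\mat{S})}.
\]
I do not foresee a genuine obstacle: dropping to the single-vector projection only weakens the bound, so this step is safe. The hypothesis $\dup(\mat{S}) \leq k^2/2$ does not appear to be needed for the stated inequality itself; presumably it is invoked in a subsequent step where the multiplicative saving $(3/4)^{\dup(\mat{S})}$ must dominate a competing factor (eventually yielding the soundness term $0.999^{\delta k}$ in \cref{lem:inapprox:reduction}).
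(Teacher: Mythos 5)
Your proof is correct and follows essentially the same argument as the paper's: the paper first applies Fischer's inequality to reduce to bounding $\vol^2(\mat{S}^{(i,j)})$ cell by cell and then uses exactly your one-dimensional projection bound $\dis^2(\vec{v},\cdot) \leq 4 - \tfrac{1}{4}\langle \vec{u},\vec{v}\rangle^2 \leq 3$ onto the cell's first vector, whereas you achieve the same accounting with a single global ordering in the telescoping volume formula --- a cosmetic difference. You are also right that the hypothesis $\dup(\mat{S}) \leq k^2/2$ is not actually used to establish this inequality in the paper's proof either.
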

\begin{proof}
We first introduce \emph{Fischer's inequality}.
Suppose $\mat{A}$ and $\mat{B}$ are respectively $m \times m$ and $n \times n$
positive semi-definite matrices, and $\mat{C}$ is an $m \times n$ matrix.
Then, it holds that
$
    \det\left(\begin{bsmallmatrix}
    \mat{A} & \mat{C} \\
    \mat{C}^\top & \mat{B}
    \end{bsmallmatrix}\right)
    \leq \det(\mat{A}) \cdot \det(\mat{B}).
$
As a corollary, we have
the volume version of Fischer's inequality stating that
for any two sets of vectors $\mat{P}, \mat{Q}$,
\begin{align}
\label{eq:inapprox:fischer-volume}
    \vol(\mat{P} \uplus \mat{Q}) \leq \vol(\mat{P}) \cdot \vol(\mat{Q}).
\end{align}
Because we have
\begin{align}
    \vol(\mat{S}) \leq \prod_{i,j \in [k]} \vol(\mat{S}^{(i,j)})
\end{align}
by Fischer's inequality in \cref{eq:inapprox:fischer-volume},
we consider bounding $\vol(\mat{S}^{(i,j)})$ from above for each $i,j \in [k]$.
We will show the following:
\begin{align}
\label{eq:inapprox:volume-rep:claim}
    \vol^2(\mat{S}^{(i,j)}) = 
    \begin{cases}
        1 & \text{if } \mat{S}^{(i,j)} = \emptyset, \\
        4 \cdot 3^{\abs{\mat{S}^{(i,j)}}-1} & \text{otherwise}.
    \end{cases}
\end{align}
Suppose $\mat{S}^{(i,j)} = \{\vec{v}_1, \ldots, \vec{v}_m\}$
for $m \triangleq \abs{\mat{S}^{(i,j)}} $.
By the definition of volume in \cref{eq:def-vol}, we have
\begin{align}
\begin{aligned}
    \vol^2(\mat{S}^{(i,j)}) & = 
    \|\vec{v}_1\|^2 \cdot \prod_{2 \leq i \leq m}
    \Bigl\| \vec{v}_i - \proj_{\{\vec{v}_1, \ldots, \vec{v}_{i-1}\}}(\vec{v}_i) \Bigr\|^2 \\
    & \leq 4 \cdot \prod_{2 \leq i \leq m}
    \Bigl\| \vec{v}_i - \proj_{\{\vec{v}_1\}}(\vec{v}_i) \Bigr\|^2.
\end{aligned}
\end{align}
Because the projection of each $\vec{v}_i$ with $i \neq 1$ onto $\vec{v}_1$ is calculated as
\begin{align}
    \proj_{\{\vec{v}_1\}}(\vec{v}_i) = 
    \frac{\langle \vec{v}_1, \vec{v}_i \rangle}{\|\vec{v}_1\|^2} \vec{v}_1
    = \frac{\langle \vec{v}_1, \vec{v}_i \rangle}{4} \vec{v}_1,
\end{align}
we obtain
\begin{align}
\begin{aligned}
    \Bigl\|\vec{v}_i - \proj_{\{\vec{v}_1\}}(\vec{v}_i)\Bigr\|^2
    & = \|\vec{v}_i\|^2 + \left\| \frac{\langle \vec{v}_1, \vec{v}_i \rangle}{4} \vec{v}_1 \right\|^2
    - 2 \cdot \left\langle \vec{v}_i, \frac{\langle \vec{v}_1, \vec{v}_i \rangle}{4} \vec{v}_1 \right\rangle
    \\
    & = 4 + \frac{\langle \vec{v}_1, \vec{v}_i \rangle^2}{16} \cdot 4
    - \frac{\langle \vec{v}_1, \vec{v}_i \rangle^2}{2} \\
    & = 4 - \frac{1}{4} \cdot \langle \vec{v}_1, \vec{v}_i \rangle^2.
\end{aligned}
\end{align}
By \cref{eq:inapprox:dot-identical},
$\langle \vec{v}_1, \vec{v}_i \rangle$ with $i \neq 1$ is either $2$ or $3$;
hence, it holds that $\Bigl\|\vec{v}_i - \proj_{\{\vec{v}_1\}}(\vec{v}_i)\Bigr\|^2 \leq 3$, ensuring \cref{eq:inapprox:volume-rep:claim} as desired.
Consequently, we get 
\begin{align}
\begin{aligned}
    \vol^2(\mat{S})
    \leq \prod_{i,j \in [k]} \vol^2(\mat{S}^{(i,j)})
    \leq \prod_{i,j \in [k]: \mat{S}^{(i,j)} \neq \emptyset}
    4 \cdot 3^{\abs{\mat{S}^{(i,j)}} - 1} \\
    = \left(\frac{4}{3}\right)^{\cov(\mat{S})} \cdot \prod_{i,j \in [k]} 3^{\abs{\mat{S}^{(i,j)}}} 
    = 4^{k^2} \cdot \left(\frac{3}{4}\right)^{\dup(\mat{S})},
\end{aligned}
\end{align}
completing the proof.
\end{proof}

We then present another upper bound on the volume of $\mat{S}$
in terms of the \emph{inconsistency} of a partial solution of \prb{Grid Tiling} constructed from the selected vectors.
For a set $\mat{S}$ of $k^2$ vectors from $\mat{V}$,
a \emph{partial} assignment
$\sigma_{\mat{S}} \colon [k]^2 \to [n]^2 \cup \{\bigstar\}$ for \prb{Grid Tiling} is defined as
\begin{align}
    \sigma_{\mat{S}}(i,j) \triangleq
    \begin{cases}
        \text{any } (x,y) \text{ s.t.~} \vec{v}^{(i,j)}_{x,y} \in \mat{S}^{(i,j)} & \text{if such } (x,y) \text{ exists}, \\
        \bigstar  & \text{otherwise (i.e., } \mat{S}^{(i,j)} = \emptyset \text{)},
    \end{cases}
\end{align}
where the symbol ``$\bigstar$'' means \emph{undefined} and
the choice of $(x,y)$ is arbitrary.
The inconsistency of a partial assignment $\sigma_{\mat{S}}$ is defined as 
\begin{align}
    \sum_{(i_1,j_1,i_2,j_2) \in \calI}
    \left[\!\!\left[
    \begin{gathered}
    \sigma(i_1,j_1) \neq \bigstar \text{; } \sigma(i_2,j_2) \neq \bigstar \text{;} \\
    \sigma(i_1,j_1)\text{ and }\sigma(i_2,j_2)\text{ are inconsistent}
    \end{gathered}
    \right]\!\!\right].
\end{align}
Note that the sum of the consistency and inconsistency of $\sigma_{\mat{S}}$
is no longer necessarily $2k^2$.
Using $\sigma_{\mat{S}}$, we define a partition $(\mat{P}, \mat{Q})$ of $\mat{S}$ as
$\mat{P} \triangleq \{ \vec{v}^{(i,j)}_{x,y} \in \mat{S} : i,j \in [k], \sigma_{\mat{S}}(i,j) = (x,y) \}$ and
$\mat{Q} \triangleq \mat{S} \setminus \mat{P}$.
We further prepare an arbitrary \emph{ordering} $\prec$
over $[k]^2$;
e.g., $(i,j) \prec (i',j')$ if $i<i'$, or $i=i'$ and $j<j'$.
We abuse the notation by writing
$\vec{v}^{(i,j)}_{x,y} \prec \vec{v}^{(i',j')}_{x',y'}$ for any two vectors of $\mat{V}$ whenever $(i,j) \prec (i',j')$.
Define now
$ 
    \mat{P}_{\prec \vec{v}} \triangleq \{ \vec{u} \in \mat{P} : \vec{u} \prec \vec{v} \}.
$
The following lemma states that 
the squared volume of $k^2$ vectors
exponentially decays in the minimum possible inconsistency among all assignments of $\calS$.

\begin{lemma}
\label{lem:inapprox:volume-incons}
Suppose $\opt(\calS) \leq 2k^2 - \delta k$ for some $\delta > 0$ and
$\cov(\mat{S}) \geq k^2 - \gamma k$ for some $\gamma > 0$.
If $\delta k - 4 \gamma k$ is positive,
then it holds that
\begin{align}
    \vol^2(\mat{S}) \leq 4^{k^2} \cdot \left(\frac{63}{64}\right)^{\frac{\delta - 4 \gamma}{4} k}.
\end{align}
\end{lemma}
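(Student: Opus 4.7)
The plan is to bound $\vol^2(\mat{S})$ by splitting $\mat{S}$ along the partition $(\mat{P},\mat{Q})$ introduced right before the lemma, handling $\mat{Q}$ with a crude Hadamard-type bound, and converting each inconsistent adjacent pair of $\sigma_{\mat{S}}$ into a multiplicative penalty of $\tfrac{63}{64}$ against $\vol^2(\mat{P})$.

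First, Fischer's inequality \cref{eq:inapprox:fischer-volume} applied to $\mat{S} = \mat{P} \uplus \mat{Q}$ gives $\vol^2(\mat{S}) \leq \vol^2(\mat{P}) \cdot \vol^2(\mat{Q})$. Since every vector in $\mat{V}$ has squared norm $4$, a further iteration of Fischer on singletons (equivalently, Hadamard's inequality) yields $\vol^2(\mat{Q}) \leq 4^{|\mat{Q}|} = 4^{\dup(\mat{S})}$, and the remaining task is to bound $\vol^2(\mat{P})$.

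Second, I extract a combinatorial penalty from the optimality gap by lower bounding the inconsistency of $\sigma_{\mat{S}}$. Completing $\sigma_{\mat{S}}$ arbitrarily on the undefined cells produces a total assignment whose inconsistency is at least $\delta k$ by the hypothesis $\opt(\calS) \leq 2k^2 - \delta k$. Each of the $\dup(\mat{S}) \leq \gamma k$ undefined cells participates in at most four pairs in $\calI$, so this completion introduces at most $4\gamma k$ additional inconsistent pairs; hence $\mathrm{inc}(\sigma_{\mat{S}}) \geq (\delta - 4\gamma) k$.

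Finally, I turn this count into a bound on $\vol^2(\mat{P})$. By \cref{eq:inapprox:dot-adjacent-i,eq:inapprox:dot-adjacent-j}, two $\mat{P}$-vectors have inner product $\tfrac{1}{2}$ exactly when their cells form an inconsistent adjacent pair and $0$ otherwise. Choose a matching $M$ in the graph of inconsistent adjacent pairs of $\sigma_{\mat{S}}$; since that graph has maximum degree at most $4$, standard edge-colouring of the grid yields $|M| \geq \mathrm{inc}(\sigma_{\mat{S}})/4$. Applying Fischer pair-by-pair along $M$, each matched pair contributes
\[
\det\begin{pmatrix} 4 & 1/2 \\ 1/2 & 4 \end{pmatrix} = \frac{63}{4} = 4^2 \cdot \frac{63}{64},
\]
whereas every unmatched vector in $\mat{P}$ contributes a trivial factor of $\|\cdot\|^2 = 4$. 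Multiplying these contributions, combining with $\vol^2(\mat{Q}) \leq 4^{|\mat{Q}|}$, and using $|\mat{P}| + |\mat{Q}| = k^2$ gives the claimed $\vol^2(\mat{S}) \leq 4^{k^2} (63/64)^{(\delta - 4\gamma) k / 4}$. The main technical obstacle is justifying the matching lower bound $|M| \geq \mathrm{inc}(\sigma_{\mat{S}})/4$ cleanly despite the cyclic boundary of the $k \times k$ grid; an alternative that sidesteps matching is to order the cells lexicographically and bound each factor $\dis(\vec{v}^{(c)}, \mat{P}_{\prec \vec{v}^{(c)}})^2$ directly by $4(63/64)^{m_c}$, where $m_c$ counts the inconsistent adjacent lexicographic predecessors of the cell of $\vec{v}^{(c)}$, using that the four neighbours of any cell are pairwise non-adjacent and hence their $\mat{P}$-vectors are pairwise orthogonal.
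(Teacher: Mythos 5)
Your high-level plan matches the paper's: split $\mat{S}$ into $\mat{P}$ and $\mat{Q}$ via Fischer, bound $\vol^2(\mat{Q})\leq 4^{\abs{\mat{Q}}}$ by Hadamard, lower-bound the inconsistency of $\sigma_{\mat{S}}$ by $(\delta-4\gamma)k$, and convert inconsistency into a per-vector penalty on $\vol^2(\mat{P})$. The inconsistency lower bound you give is exactly the paper's \cref{lem:inapprox:incons}, and the final arithmetic is correct.

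The gap is in your primary route, and you already sense it: the matching lower bound $\abs{M}\geq\mathrm{inc}(\sigma_{\mat{S}})/4$ does not follow from ``maximum degree $4$.'' The \prb{Grid Tiling} adjacency is toroidal (it uses $(j+1)\bmod k$), so the underlying grid $C_k\times C_k$ is \emph{not} bipartite when $k$ is odd, and König's $\Delta$-edge-colouring is unavailable. Concretely, $C_3\times C_3$ is $4$-regular on $9$ vertices with $18$ edges, so any matching has size at most $4 < 18/4$, and by the same count $\chi'=5$; Vizing gives you only $\abs{M}\geq\abs{E}/5$, which changes the exponent. So the matching version, as stated, fails for odd $k$.

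Your alternative (order the cells by $\prec$ and bound $\dis^2(\vec{v},\mat{P}_{\prec\vec{v}})$ per vector) is the right repair and is essentially the paper's argument, but your refinement $\dis^2\leq 4(\tfrac{63}{64})^{m_c}$ leans on the claim that a cell's four neighbours are pairwise non-adjacent. That too fails on the small torus: for $k=3$, the opposite neighbours $(1,j)$ and $(3,j)$ of $(2,j)$ are themselves adjacent (since $3+1\equiv 1\pmod 3$), so the corresponding $\mat{P}$-vectors need not be orthogonal, and the projection computation under a nonnegative cross term $\langle\vec{u}_1,\vec{u}_2\rangle=\tfrac12$ gives $\dis^2=4-\tfrac19\approx 3.889 > 4(\tfrac{63}{64})^2\approx 3.876$. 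The paper avoids both pitfalls by using the coarser per-vector bound $\dis^2\leq\dis^2(\vec{v},\{\vec{u}\})=\tfrac{63}{16}=4\cdot\tfrac{63}{64}$ for any \emph{one} inconsistent predecessor $\vec{u}$, and then invoking \cref{lem:inapprox:incons} to say at least $\tfrac{\delta-4\gamma}{4}k$ vectors admit such a $\vec{u}$ (the factor $4$ there absorbs the multiplicity). That version needs no orthogonality among predecessors and no edge-colouring, and delivers exactly the exponent $\tfrac{\delta-4\gamma}{4}k$ in the lemma statement. If you want to keep your tighter exponent $(\delta-4\gamma)k$, you would have to treat the $k\leq 3$ torus cases separately or argue that the cross terms among predecessors only help; as written the step is not justified.
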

The proof of \cref{lem:inapprox:volume-incons} involves the following claim.
\begin{claim}
\label{lem:inapprox:incons}
Suppose the same conditions as in \cref{lem:inapprox:volume-incons} are satisfied.
Then,
the inconsistency of $\sigma_{\mat{S}}$ is at least $\delta k - 4\gamma k$.
Moreover, the number of vectors $\vec{v}$ in $\mat{P}$ such that
$\vec{v}$ is inconsistent with some adjacent vector of $\mat{P}_{\prec \vec{v}}$
is at least
$ \frac{\delta k - 4\gamma k}{4}$.
\end{claim}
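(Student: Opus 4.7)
The plan is to derive the two statements in sequence, first the inconsistency lower bound for $\sigma_{\mat{S}}$ and then the vector-count bound by a simple averaging.

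First I would pass from the partial assignment $\sigma_{\mat{S}}$ to some full assignment $\sigma' \colon [k]^2 \to [n]^2$ by filling in every undefined cell (i.e., every $(i,j)$ with $\mat{S}^{(i,j)} = \emptyset$) with an arbitrary pair from $S_{i,j}$. Since $\opt(\calS) \leq 2k^2 - \delta k$, the consistency of $\sigma'$ is at most $2k^2 - \delta k$, so its inconsistency (across all $2k^2$ adjacent pairs in $\calI$) is at least $\delta k$. The inconsistency of the partial assignment $\sigma_{\mat{S}}$ only counts adjacent pairs of cells where \emph{both} endpoints are defined, so the two inconsistency counts differ by at most the number of adjacent pairs in $\calI$ that touch at least one undefined cell. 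Each undefined cell lies in at most four adjacent pairs, and there are $k^2 - \cov(\mat{S}) \leq \gamma k$ undefined cells, giving a discrepancy of at most $4\gamma k$. Hence the inconsistency of $\sigma_{\mat{S}}$ is at least $\delta k - 4\gamma k$, which is positive by hypothesis.

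For the second assertion, I would use the ordering $\prec$ to direct each inconsistent pair from the earlier to the later vector. Every inconsistent pair counted by the inconsistency of $\sigma_{\mat{S}}$ corresponds to two vectors of $\mat{P}$ (because $\sigma_{\mat{S}}$ is defined on both cells of such a pair, and $\mat{P}$ contains exactly the vectors realizing $\sigma_{\mat{S}}$), say $\vec{u} \prec \vec{v}$, and the later endpoint $\vec{v}$ is inconsistent with the earlier adjacent vector $\vec{u} \in \mat{P}_{\prec \vec{v}}$. Since every cell has at most four adjacent cells in the grid, each $\vec{v} \in \mat{P}$ has at most four adjacent vectors in $\mat{P}_{\prec \vec{v}}$, so it can absorb at most four inconsistent pairs as the later endpoint. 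Letting $N$ denote the number of vectors $\vec{v} \in \mat{P}$ that are inconsistent with at least one adjacent vector of $\mat{P}_{\prec \vec{v}}$, we obtain
\begin{align}
\text{inconsistency of } \sigma_{\mat{S}} \;=\; \sum_{\vec{v} \in \mat{P}} \bigl|\{ \vec{u} \in \mat{P}_{\prec \vec{v}} : \vec{u},\vec{v}\text{ adjacent and inconsistent}\}\bigr| \;\leq\; 4N.
\end{align}
Combining this with the first part yields $N \geq \tfrac{1}{4}(\delta k - 4\gamma k)$, as required.

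I do not expect any real obstacle here: both parts are essentially pigeonhole arguments, and the only thing to be careful about is that the inconsistency of the partial assignment $\sigma_{\mat{S}}$ is \emph{not} automatically complementary to its consistency (as the statement right before the claim emphasizes), which is exactly why we must pass through a full extension $\sigma'$ and compare the two rather than manipulating the consistency of $\sigma_{\mat{S}}$ directly. The rest is counting.
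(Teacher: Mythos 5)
Your proof is correct and follows essentially the same route as the paper: the first part is the contrapositive of the paper's contradiction argument (comparing $\sigma_{\mat{S}}$ to a full extension $\sigma'$ and bounding the discrepancy by $4$ times the number of undefined cells), and the second part spells out, via the directed pigeonhole over $\prec$, the ``at most four inconsistent pairs per vector'' counting that the paper invokes more tersely with a footnote about no double counting.
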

\begin{claimproof}
Assuming that
the inconsistency of $\sigma_{\mat{S}}$ is less than $\delta k - 4 \gamma k$,
we can construct a (complete) assignment
$\sigma \colon [k]^2 \to [n]^2$ from $\sigma_{\mat{S}}$
by filling in undefined values (i.e., ``$\bigstar$'')
with an arbitrary integer pair in the corresponding cell.
The inconsistency of $\sigma$ is clearly less than $\delta k$, which is a contradiction.

On the other hand,
assume that there are only less than $\frac{\delta k - 4\gamma k}{4}$ vectors $\vec{v}$ in $\mat{P}$ that are inconsistent with some adjacent vector of $\mat{P}_{\prec \vec{v}}$.
Then, the inconsistency of $\sigma_{\mat{S}}$ must be less than $\frac{\delta k - 4\gamma k}{4} \cdot 4$,\footnote{
Note that double counting of inconsistent pairs of adjacent cells does not occur.
}
which is a contradiction.
\end{claimproof}

\begin{proof}[Proof of \cref{lem:inapprox:volume-incons}]
By applying \cref{eq:inapprox:fischer-volume}, we have
$\vol^2(\mat{S}) \leq \vol^2(\mat{P}) \cdot \vol^2(\mat{Q})$.
Observe easily that
\begin{align}
    \vol^2(\mat{Q}) \leq 4^{\abs{\mat{Q}}} \text{ and }
    \vol^2(\mat{P}) = \prod_{\vec{v} \in \mat{P}}
    \dis^2(\vec{v}, \mat{P}_{\prec \vec{v}}).
\end{align}
Thereafter, we bound $\dis^2(\vec{v}, \mat{P}_{\prec \vec{v}})$
for each $\vec{v} \in \mat{P}$ using the following case analysis:
\begin{description}
\item[\textbf{Case 1.}] $\vec{v}$ is consistent with all adjacent vectors in $\mat{P}_{\prec \vec{v}}$:
because $\langle \vec{u}, \vec{v} \rangle = 0$
for all $\vec{u} \in \mat{P}_{\prec \vec{v}}$,
we have $\proj_{\mat{P}_{\prec \vec{v}}}(\vec{v}) = \vec{0}$,
implying that
\begin{align}
    \dis^2(\vec{v}, \mat{P}_{\prec \vec{v}}) =
    \Bigl\|\vec{v} - \proj_{\mat{P}_{\prec \vec{v}}}(\vec{v})\Bigr\|^2 = \|\vec{v}\|^2 = 4.
\end{align}
\item[\textbf{Case 2.}] There exists a vector $\vec{u} $ in $\mat{P}_{\prec \vec{v}}$
that is inconsistent with $\vec{v}$:
because $\langle \vec{u}, \vec{v} \rangle = \frac{1}{2}$ by \cref{eq:inapprox:dot-adjacent-i,eq:inapprox:dot-adjacent-j},
the projection of $\vec{v}$ on $\vec{u}$ is equal to
\begin{align}
    \proj_{\{\vec{u}\}}(\vec{v}) = \frac{\langle \vec{u}, \vec{v} \rangle}{\|\vec{u}\|^2} \cdot \vec{u} = \frac{1}{8} \cdot \vec{u}.
\end{align}
Therefore, we have
\begin{align}
\begin{aligned}
    \dis^2(\vec{v}, \mat{P}_{\prec \vec{v}})
    & \leq \dis^2(\vec{v}, \{\vec{u}\})
    = \Bigl\| \vec{v} - \proj_{\{\vec{u}\}}(\vec{v}) \Bigr\|^2 \\
    & = \|\vec{v}\|^2 + \left\|\frac{1}{8} \cdot \vec{u}\right\|^2 - 2 \cdot \frac{1}{8} \langle \vec{u}, \vec{v} \rangle
    \\
    & = 4 + \frac{4}{64} - 2 \cdot \frac{1}{16}
    = \frac{63}{16}.
\end{aligned}
\end{align}
\end{description}
As illustrated in \cref{lem:inapprox:incons},
at least $\frac{\delta- 4 \gamma}{4}k$ vectors of $\mat{P}$ fall into the latter case; it thus turns out that
\begin{align}
    \vol^2(\mat{P})
    = \prod_{\vec{v} \in \mat{P}} \dis^2(\vec{v}, \mat{P}_{\prec \vec{v}})
    \leq 4^{\abs{\mat{P}} - \frac{\delta - 4 \gamma}{4}k} \cdot \left(\frac{63}{16}\right)^{\frac{\delta - 4 \gamma}{4}k}
    = 4^{\abs{\mat{P}}} \cdot \left(\frac{63}{64}\right)^{\frac{\delta - 4 \gamma}{4}k}.
\end{align}
Consequently, the squared volume of $\mat{S}$ can be bounded as
\begin{align}
    \vol^2(\mat{S}) \leq 4^{\abs{\mat{P}} + \abs{\mat{Q}}} \cdot \left(\frac{63}{64}\right)^{\frac{\delta - 4 \gamma}{4}k}
    = 4^{k^2} \cdot \left(\frac{63}{64}\right)^{\frac{\delta - 4 \gamma}{4}k},
\end{align}
which completes the proof.
\end{proof}

Using \cref{lem:inapprox:volume-rep,lem:inapprox:volume-incons},
we can easily conclude the proof of \cref{lem:inapprox:reduction} as follows.
\begin{proof}[Proof of \cref{lem:inapprox:reduction}]
Observe that the reduction described in \cref{subsec:inapprox:reduction}
is a parameterized reduction as
it requires polynomial time and
an instance $\calS = (S_{i,j})_{i,j \in [k]}$ of \prb{Grid Tiling} is transformed into
an instance $(\mat{V}, k^2)$ of \prb{Determinant Maximization}.
In addition, the construction of $\mat{B}^{(2\lceil \log n \rceil)}$
completes in time $\bigO(4^{2 \lceil \log n \rceil}) = \bigO(n^4)$ by \cref{lem:inapprox:civril-vector}.

We now prove the correctness of the reduction.
Let us begin with the completeness argument.
Suppose $\opt(\calS) = 2k^2$; i.e.,
there is an assignment $\sigma$ of consistency $2k^2$.
Then, $k^2$ vectors in the set
$\mat{S} \triangleq \{ \vec{v}^{(i,j)}_{\sigma(i,j)} : i,j \in [k] \}$
are orthogonal to each other,
implying that $\vol^2(\mat{S}) = 4^{k^2}$.
On the other hand, because every vector of $\mat{V}$ is of
squared norm $4$,
the maximum possible squared volume among $k^2$ vectors in $\mat{V}$ is
$4^{k^2}$; namely, $\maxdet(\mat{V},k^2) = 4^{k^2}$.

We then prove the soundness argument.
Suppose $\opt(\calS) \leq 2k^2 - \delta k$ for some constant $\delta > 0$.
Then, for any set $\calS$ of $k^2$ vectors from $\mat{V}$ such that
$
    \dup(\mat{S}) > \frac{\log 0.999^{-1}}{\log (\frac{3}{4})^{-1}} \cdot \delta k,
$
we have that by \cref{lem:inapprox:volume-rep},
$
    \vol^2(\mat{S}) < 4^{k^2} \cdot 0.999^{\delta k}.
$
It is thus sufficient to consider the case that
\begin{align}
    \dup(\mat{S}) \leq \frac{\log 0.999^{-1}}{\log (\frac{3}{4})^{-1}} \cdot \delta k \approx 0.0035 \cdot \delta.
\end{align}
In particular,
it suffices to assume that $\dup(\mat{S}) \leq \gamma k$ for some $\gamma \in (0, \frac{\delta}{4})$.
Simple calculation using \cref{lem:inapprox:volume-rep,lem:inapprox:volume-incons} derives that
\begin{align}
\begin{aligned}
    \vol^2(\mat{S})
    & \leq \min\left\{
    4^{k^2} \cdot \left(\frac{3}{4}\right)^{\gamma k},
    4^{k^2} \cdot \left(\frac{63}{64}\right)^{\frac{\delta - 4\gamma }{4}k} \right
    \} \\
    & \leq 4^{k^2} \cdot \min\left\{ \left(\frac{63}{64}\right)^{\gamma k}, \left(\frac{63}{64}\right)^{\frac{\delta - 4\gamma}{4}k} \right\} \\
    & \leq 4^{k^2} \cdot \left(\frac{63}{64}\right)^{\displaystyle
    \underbrace{\min\left\{\gamma, \frac{\delta - 4\gamma}{4}\right\}}_{\heartsuit}
    \cdot k } \\
    & \leq 4^{k^2} \cdot \left(\frac{63}{64}\right)^{\frac{\delta}{8} k}
    \leq 4^{k^2} \cdot 0.999^{\delta k},
\end{aligned}
\end{align}
where the second-to-last inequality is due to the fact that 
$\heartsuit$ is maximized when $ \gamma = \frac{\delta - 4 \gamma}{4} $;
i.e., $\gamma = \frac{\delta}{8} > 0$.

Because the diagonal entries of the Gram matrix $\mat{A}$ defined by the vectors of $\mat{V}$ are $4$,
we can construct another instance of \prb{Determinant Maximization} as
$(\widetilde{\mat{A}}, k^2)$,
where $\widetilde{\mat{A}} \triangleq \frac{1}{4}\mat{A}$.
Observe finally that the diagonal entries of $\widetilde{\mat{A}}$ are $1$ and
$\det(\widetilde{\mat{A}}_S) = 4^{-\abs{S}} \cdot \det(\mat{A}_S)$ for any $S$,
which completes the proof.
\end{proof}

\subsection{$\epsilon$-Additive FPT-Approximation Parameterized by Rank}
\label{subsec:inapprox:rank}
Here, we develop
an $\epsilon$-additive FPT-approximation algorithm 
parameterized by the \emph{rank} of an input matrix $\mat{A}$,
provided that $\mat{A}$ is the Gram matrix of vectors of \emph{infinity norm at most~$1$}.
Our algorithm complements \cref{lem:inapprox:reduction} in a sense that
we can solve the promise problem in FPT time with respect to $\rank(\mat{A})$.
The proof uses the standard rounding technique.

\begin{observation}
\label{obs:approx}
Let $\vec{v}_1, \ldots, \vec{v}_n$ be $n$ $d$-dimensional vectors in $\bbQ^d$
such that $\|\vec{v}_i\|_{\infty} \leq 1$
for all $i \in [n]$,
$\mat{A}$ the Gram matrix defined by the vectors,
$k \in [d]$ a positive integer, and
$\epsilon > 0$ an error tolerance parameter.
Then, we can compute an approximate solution $S \in {[n] \choose k}$ to
\prb{Determinant Maximization} in
$\epsilon^{-d^2} \cdot d^{\bigO(d^3)} \cdot n^{\bigO(1)}$ time
such that $\det(\mat{A}_S) \geq \maxdet(\mat{A},k) - \epsilon$.
\end{observation}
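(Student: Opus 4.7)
The plan is a standard entry-wise rounding argument. First, I would round each input vector $\vec{v}_i$ coordinate-wise to the nearest multiple of some $\delta > 0$ (to be fixed later), obtaining $\vec{w}_i \in (\delta\bbZ)^d$ with $\|\vec{v}_i - \vec{w}_i\|_\infty \leq \delta/2$. Since $\|\vec{v}_i\|_\infty \leq 1$, there are at most $M \triangleq (2/\delta + 1)^d$ distinct rounded vectors, so the index set $[n]$ partitions into at most $M$ equivalence classes, one per rounded representative. Let $\mat{B}$ denote the Gram matrix of the $\vec{w}_i$'s.

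The key perturbation step is to bound $|\det(\mat{A}_S) - \det(\mat{B}_S)|$ uniformly over $S \in \binom{[n]}{k}$. Expanding bilinearly and using $\|\vec{v}_i\|_2, \|\vec{w}_i\|_2 \leq \sqrt{d}$, every entry satisfies $|A_{i,j} - B_{i,j}| \leq d\delta$, and of course $|A_{i,j}| \leq d$ by Cauchy--Schwarz. Because $k \leq d$ by hypothesis, the well-known inequalities of the paper give $\|\mat{A}_S\|_2, \|\mat{B}_S\|_2 \leq k \cdot d$ and $\|\mat{A}_S - \mat{B}_S\|_2 \leq k d \delta$, so \cref{lem:perturbation} yields $|\det(\mat{A}_S) - \det(\mat{B}_S)| \leq k \cdot (kd)^{k-1} \cdot k d \delta \leq k^{k+1} d^{k} \delta \leq d^{2d+1} \delta$. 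Setting $\delta = \Theta(\epsilon \cdot d^{-(2d+1)})$ therefore forces the perturbation to be at most $\epsilon/2$ for every $S$ simultaneously.

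The algorithm itself simply enumerates every $k$-subset $T$ of the at most $M$ equivalence classes; for each $T$, it picks one arbitrary representative index from each class in $T$, assembles $S_T \in \binom{[n]}{k}$, evaluates $\det(\mat{A}_{S_T})$ by standard linear algebra, and outputs the $S_T$ with the largest value. For correctness, let $S^*$ attain $\maxdet(\mat{A},k)$. If the $k$ rounded representatives of the vectors indexed by $S^*$ are pairwise distinct, then some enumerated $T$ corresponds exactly to these classes, and since any choice of representatives inside those classes produces the same rounded Gram submatrix, $\det(\mat{B}_{S_T}) = \det(\mat{B}_{S^*})$, so $\det(\mat{A}_{S_T}) \geq \det(\mat{B}_{S^*}) - \epsilon/2 \geq \det(\mat{A}_{S^*}) - \epsilon$. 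Otherwise, two indices in $S^*$ share a rounded representative, hence $\mat{B}_{S^*}$ has a repeated row, $\det(\mat{B}_{S^*}) = 0$, and the perturbation bound forces $\det(\mat{A}_{S^*}) \leq \epsilon/2$; in this degenerate case any output $S_T$ trivially satisfies the additive guarantee.

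The running time is dominated by $\binom{M}{k} \leq M^d = (2/\delta + 1)^{d^2} = \epsilon^{-d^2} \cdot d^{\bigO(d^3)}$ enumerated subsets, each requiring $n^{\bigO(1)}$ time to extract and to evaluate a single $k \times k$ determinant, with only $\bigO(nd)$ additional preprocessing to bucket the $\vec{w}_i$'s. The one place to be careful is the interaction between the perturbation bound and the grid: the Hadamard-type factor $(kd)^{k-1}$ in \cref{lem:perturbation} forces $\delta$ to scale like $\epsilon / d^{\Theta(d)}$, and this is exactly what drives the $d^{\bigO(d^3)}$ term in the final bound; circumventing it would require a sharper perturbation estimate tailored to positive semi-definite submatrices, which does not seem necessary for the present statement.
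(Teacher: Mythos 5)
Your proposal is correct and follows essentially the same route as the paper: round each coordinate to a grid of spacing $\Theta(\epsilon\cdot d^{-(2d+1)})$, bound $\abs{\det(\mat{A}_S)-\det(\mat{B}_S)}$ uniformly via \cref{lem:perturbation} exactly as the paper does, and enumerate the at most $(2/\delta+1)^d$ distinct rounded vectors. The only (harmless) cosmetic differences are that you evaluate the true objective $\det(\mat{A}_{S_T})$ on the enumerated candidates rather than maximizing $\det(\mat{B}_S)$, and you handle the collision case of $S^*$ explicitly instead of the paper's check that fewer than $k$ distinct rounded vectors exist.
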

\begin{proof}
Let $\vec{v}_1, \ldots, \vec{v}_n$ be $n$ $d$-dimensional vectors in $\bbQ^d$
such that $\|\vec{v}_i\|_{\infty} \leq 1$ for all $i \in [n]$, and
$\mat{A}$ be the Gram matrix in $\bbQ^{n \times n}$ defined by them; i.e.,
$A_{i,j} = \langle \vec{v}_i, \vec{v}_j \rangle$ for all $i,j \in [n]$.
We introduce a parameter $\haba > 0$,
which is a reciprocal of some positive integer.
The value of $\haba$ is determined later based on $d$, $k$, and $\epsilon$.
We then define
the set $I_{\haba}$ of rational numbers equally spaced on the interval $[-1,1]$ as follows:
\begin{align}
    I_{\haba} \triangleq
    \Bigl\{-1, -1 + \haba, \ldots, -2\haba, -\haba, 0, \haba, 2\haba, \ldots, 1-\haba, 1\Bigr\}.
\end{align}
Subsequently, we construct $n$ $d$-dimensional vectors $\vec{w}_1, \ldots, \vec{w}_n$ from $\vec{v}_1, \ldots, \vec{v}_n$ as follows:
for each $i \in [n]$ and $e \in [d]$,
$w_i(e)$ is defined to be a number of $I_\haba$ closest to $v_i(e)$.
Observe that $\abs{v_i(e) - w_i(e)} \leq \haba$ and $\|\vec{w}_i\|_\infty \leq 1$.
Let $\mat{B} \in \bbQ^{n \times n}$ be the Gram matrix defined by $\vec{w}_1, \ldots, \vec{w}_n$.
The absolute difference of the determinant between $\mat{A}$ and $\mat{B}$
is bounded as shown below,
whose proof is based on the application of \cref{lem:perturbation}.

\begin{claim*}
For $n$ $d$-dimensional vectors $\vec{v}_1, \ldots, \vec{v}_n$
such that $\|\vec{v}_i\|_{\infty} \leq 1$ for all $i \in [n]$ and
$n$ $d$-dimensional vectors $\vec{w}_1, \ldots, \vec{w}_n$ constructed from $\vec{v}_i$'s and $\haba$ according to the procedure described above,
let 
$\mat{A}$ and $\mat{B}$ be the Gram matrices defined respectively by
$\vec{v}_i$'s and $\vec{w}_i$'s.
Then, for any set $S \in {[n] \choose k}$ for $k \leq d$,
it holds that,
\begin{align}
    \abs{\det(\mat{A}_S) - \det(\mat{B}_S)} \leq 3 \cdot d^{2d+1} \cdot \haba.
\end{align}
Moreover, the number of distinct vectors in the set
$\{ \vec{w}_1, \ldots, \vec{w}_n \}$
is at most
$(\frac{2}{\haba} + 1)^d$.
\end{claim*}
\begin{claimproof}
To apply \cref{lem:perturbation}, we first bound the matrix norm of $\mat{A}$, $\mat{B}$, and $\mat{A}-\mat{B}$.
The max norm of $\mat{A}$ and $\mat{B}$ can be bounded as follows:
\begin{align}
    \|\mat{A}\|_{\max} & = \max_{i,j} \abs{A_{i,j}}
    \leq \max_{i,j} \left\{\sum_{e \in [d]} \abs{v_i(e) \cdot v_j(e)}\right\} \leq d, \\
    \|\mat{B}\|_{\max} & = \max_{i,j} \abs{B_{i,j}}
    \leq \max_{i,j} \left\{\sum_{e \in [d]} \abs{w_i(e) \cdot w_j(e)}\right\} \leq d,
\end{align}
where we used the fact that
$\|\vec{v}_i\|_{\infty} \leq 1$ and $\|\vec{w}_i\|_{\infty} \leq 1$.
For each $i \in [n]$ and $e \in [d]$,
let $w_i(e) = v_i(e) + \haba_{i,e}$ for some $\abs{\haba_{i,e}} \leq \haba$.
We then bound the absolute difference between
$v_i(e) \cdot v_j(e)$ and $w_i(e) \cdot w_j(e)$
for each $i,j \in [n]$ and $e \in [d]$ as
\begin{align}
\begin{aligned}
    & \abs{v_i(e) \cdot v_j(e) - w_i(e) \cdot w_j(e)} \\
    & = \abs{ v_i(e)\cdot v_j(e) - (v_i(e)+\haba_{i,e})\cdot (v_j(e)+\haba_{j,e}) } \\
    & = \abs{\haba_{i,e} \cdot v_j(e) + \haba_{j,e} \cdot v_i(e) + \haba_{i,e} \cdot \haba_{j,e}} \\
    & \leq 2\haba + \haba^2 \leq 3 \haba,
\end{aligned}
\end{align}
where we used the fact that $\haba_{i,e} \leq \haba \leq 1$.
Consequently, the max norm of the difference between $\mat{A}$ and $\mat{B}$ can be bounded as
\begin{align}
\begin{aligned}
\|\mat{A}-\mat{B}\|_{\max}
& = \max_{i,j} \abs{A_{i,j} - B_{i,j}} \\
& = \max_{i,j} \abs{ \langle \vec{v}_i, \vec{v}_j \rangle - \langle \vec{w}_i, \vec{w}_j \rangle } \\
& \leq \sum_{e \in [d]} \abs{v_i(e) \cdot v_j(e) - w_i(e)\cdot w_j(e)} \\
& \leq 3d \haba.
\end{aligned}
\end{align}
Calculation using \cref{lem:perturbation} derives that for any set $S \in {[n] \choose k}$ for $k \leq d$,
\begin{align}
\begin{aligned}
& \abs{\det(\mat{A}_S) - \det(\mat{B}_S)} \\
& \leq k \cdot \max\{ \|\mat{A}_S\|_2, \|\mat{B}_S\|_2 \}^{k-1} \cdot \|\mat{A}_S - \mat{B}_S\|_2 \\
& \leq k \cdot \max\{ k \cdot \|\mat{A}\|_{\max}, k \cdot \|\mat{B}\|_{\max} \}^{k-1} \cdot k \cdot \|\mat{A} - \mat{B}\|_{\max} \\
& \leq k \cdot (kd)^{k-1} \cdot k \cdot 3d \haba \\
& \leq 3 \cdot d^{2d+1} \cdot \haba.
\end{aligned}
\end{align}
Since every vector $\vec{w}_i$ is in $I_{\haba}^d$,
the number of distinct vectors in the set
$\{\vec{w}_1, \ldots, \vec{w}_n\}$
is bounded by $\abs{I_{\haba}^d} = (\frac{2}{\haba} + 1)^d$,
completing the proof.
\end{claimproof}

Our parameterized algorithm works as follows.
Given the Gram matrix $\mat{A} \in \bbQ^{n \times n}$ of
$n$ $d$-dimensional rational vectors $\vec{v}_1, \ldots, \vec{v}_n \in \bbQ^d$ and an error tolerance parameter $\epsilon > 0$,
we set the value of $\haba$ as
\begin{align}
\haba \triangleq \frac{\lfloor \epsilon^{-1} \rfloor^{-1}}{6 \cdot d^{2d+1}} = \epsilon \cdot d^{\bigO(d)}.
\end{align}
Since $1/\haba$ is a positive integer by definition, $I_{\haba}$ is defined;
we construct $n$ $d$-dimensional rational vectors
$\mat{W} \triangleq \{\vec{w}_1, \ldots, \vec{w}_n\} $ in $\bbQ^d$
according to the procedure described above and
the Gram matrix $\mat{B} \in \bbQ^{n \times n}$ defined by $\mat{W}$.
We claim that \prb{Determinant Maximization} on $\mat{B}$
can be solved exactly in FPT time with respect to $\epsilon^{-1}$ and $d$.
Observe that if $\mat{W}$ includes at most $k-1$ distinct vectors,
we can return an arbitrary set $S \in {[n] \choose k}$ whose principal minor is always $\det(\mat{B}_S) = 0$;
therefore, we can safely consider the case $\abs{\mat{W}} \geq k$ only.
According to the claim above, it holds that $\abs{\mat{W}} \leq (\frac{2}{\haba} + 1)^d$.
We can thus enumerate the set of $k$ distinct vectors in $\mat{W}$ in time
\begin{align}
    {(\frac{2}{\haba} + 1)^d \choose k} \leq \left(\frac{2}{\haba} + 1\right)^{dk} \leq \left(\frac{1}{\epsilon}\right)^{d^2} \cdot d^{\bigO(d^3)}.
\end{align}
Calculating the principal minor in $\bigO(k^3)$ time for
each Gram matrix defined by $k$ distinct vectors,
we can find the one,
denoted $S_B^*$, having the maximum principal minor and return $S_B^*$ as a solution.
The overall time complexity is bounded by
$(1/\epsilon)^{d^2} \cdot d^{\bigO(d^3)} \cdot \bigO(k^3) \cdot n^{\bigO(1)} = \epsilon^{-d^2} \cdot d^{\bigO(d^3)} \cdot n^{\bigO(1)}$.
We finally prove an approximation guarantee of $S_B^*$.
Let $S^*$ be the optimal solution for \prb{Determinant Maximization} on $\mat{A}$; i.e., $S^* \triangleq \argmax_{S \in {[n] \choose k}} \det(\mat{A}_S)$.
By the claim above, we have for any set $S \in {[n] \choose k}$,
\begin{align}
\label{eq:approx:detA-detB}
    \abs{\det(\mat{A}_S) - \det(\mat{B}_S)} \leq 3\cdot  d^{2d+1}\cdot \haba 
    \leq 3 \cdot d^{2d+1} \cdot \frac{\epsilon}{6 \cdot d^{2d+1}} = \frac{\epsilon}{2}.
\end{align}
In particular, it holds that
\begin{align}
\begin{aligned}
    \det(\mat{A}_{S_B^*}) & \geq \det(\mat{B}_{S_B^*}) - \frac{\epsilon}{2}
    & \text{(by \cref{eq:approx:detA-detB})} \\
    & \geq \det(\mat{B}_{S^*}) - \frac{\epsilon}{2}
    & \text{(by optimality of } S_B^* \text{)} \\
    & \geq \det(\mat{A}_{S^*}) - \epsilon.
    & \text{(by \cref{eq:approx:detA-detB})}
\end{aligned}
\end{align}
This completes the proof.
\end{proof}

\section{Open Problems}\label{sec:conclusion}
We investigated the \W{1}-hardness of \prb{Determinant Maximization} in the three restricted cases, improving upon the result due to \citet{koutis2006parameterized}.
Our parameterized hardness results leave a few natural open problems:
For what kinds of sparse matrices is \prb{Determinant Maximization} FPT?
Is there a $(1-\epsilon)$-factor (rather than ``additive'') FPT-approximation algorithm with respect to the matrix rank?
Quantitative lower bounds can be also proved;
e.g., due to the lower bound of \prb{$k$-Sum} \cite{patrascu2010possibility},
\prb{Determinant Maximization} on tridiagonal matrices
cannot be solved in $n^{o(k)}$ time,
unless \emph{Exponential Time Hypothesis} \cite{impagliazzo2001complexity,impagliazzo2001problems} fails.

\section*{Acknowledgments}
I would like to thank Jon Lee for pointing out to us the reference due to \citet{al-thani2021tridiagonal,al-thani2021tridiagonala}.
I thank the anonymous referees for their comments
which made the presentation of this paper much better, and
for suggesting a simple procedure for generating primitive Pythagorean triples.

\printbibliography

\end{document}